\renewcommand{\Re}{\operatorname{Re}}
\renewcommand{\Im}{\operatorname{Im}}
\renewcommand{\i}{\mathrm{i}}
\DeclareMathOperator{\sech}{sech}
\theoremstyle{definition}
	\newtheorem{definition}{Definition}[section]
	\newtheorem{remark}[definition]{Remark}
	\newtheorem{lemma}[definition]{Lemma}
	\newtheorem{theorem}[definition]{Theorem}
\theoremstyle{remark}
\begin{document}

\title{General soliton solutions to the coupled Hirota equation via the Kadomtsev-Petviashvili reduction}
\author[1]{Changyan Shi}
\author[1]{Bingyuan Liu}
\author[1]{Bao-Feng Feng\corref{cor1}}
\ead{baofeng.feng@utrgv.edu}
\cortext[cor1]{Corresponding author}
\affiliation[1]{organization={School of Mathematical and Statistical Sciences, The University of Texas Rio Grande Valley},
postcode={78539},
city={Edinburg, TX},
country={USA}}

\begin{abstract}
In this paper, we are concerned with various soliton solutions to the coupled Hirota  equation, as well as to the Sasa-Satsuma equation which can be viewed as one reduction case of the coupled Hirota  equation. First, we derive bright-bright, dark-dark, and bright-dark soliton solutions of the coupled Hirota equation by using the Kadomtsev-Petviashvili reduction method. Then, we present the bright and dark soliton solutions to the Sasa-Satsuma equation which are expressed by determinants of $N \times N$ instead of $2N \times 2N$ in the literature.
The dynamics of first-, second-order solutions are investigated in detail. It is intriguing that, for the SS equation, the bright soliton for \(N=1\) is also the soliton to the complex mKdV equation while the amplitude and velocity of dark soliton for \(N=1\) are determined by the background plane wave. For \(N=2\), the bright soliton can be classified into three types:  oscillating, single-hump, and two-hump ones while the dark soliton can be classified into five types:  dark (single-hole), anti-dark, Mexican hat, anti-Mexican hat and double-hole. Moreover, the types of bright solitons for the Sasa-Satsuma equation can be changed due to collision.   
\end{abstract}

\begin{keyword}
    Coupled Hirota equation, Sasa-Satsuma equation, Kadomtsev-Petviashvili reduction method
\end{keyword}

\maketitle

\section{Introduction}
The nonlinear Schr\"odinger (NLS) equation and the Korteweg-de Vries (KdV) equation play a crucial role in the study of the soliton theory. From a physics point of view, the NLS equation and its soliton solution arises in the research to the propagation of light in nonlinear optical fibers and planar waveguides \cite{hasegawa1973transmission,fibich2015nonlinear}, the Bose-Einstein condensates \cite{dalfovo1999theory,pitaevskii2003bose}, the Langmuir waves in plasmas \cite{zakharov1972collapse,kato2005nonlinear} and water waves \cite{benney1967propagation}. The KdV equation models the shallow water waves \cite{korteweg1895xli} and finds its application in the Fermi-Pasta-Ulam-Tsingou problem \cite{benettin2013fermi}.  In addition to the NLS and the KdV equations, other integrable nonlinear evolution equations including the modified KdV (mKdV) equation, the Toda lattice \cite{toda1970waves}, and the Kadomtsev-Petviashvili (KP) equation \cite{kadomtsev1970stability} have been thoroughly studied in the literature.
Meanwhile, various methods such as the Darboux transformation \cite{matveev1991darboux}, Hirota's bilinear method \cite{hirota2004direct}, and the KP reduction method \cite{jimbo1983solitons, sato1989kp,ohta1993casorati,feng2014general} have been developed to find exact solutions to the soliton equations.

As one generalization of the NLS equation with higher orders \cite{agrawal2000nonlinear}, we have
\begin{align}
    \i \frac{\partial \psi}{\partial T} + 3 \i \alpha |\psi|^2 \frac{\partial \psi}{\partial X} + \beta \frac{\partial^2 \psi}{\partial X^2} + \i \gamma \frac{\partial^3 \psi}{\partial X^3} + \delta |\psi|^2 \psi = 0 \label{hirota_eq}
\end{align}
where \(\alpha,\ \beta, \ \gamma, \ \delta\) are real parameters account for the self-steepening effect, the group velocity dispersion, third-order dispersion and the self-phase modulation, respectively. It is shown by Hirota  \cite{hirota1973exact} to be integrable if  \(\alpha \beta = \gamma \delta\), thus, it is named the Hirota equation under this condition. 
By the following transformation on \eqref{hirota_eq}
\begin{align*}
    \psi(X, T) = \exp\bigl(\i (c_1 (x - c_3) t + c_2 t)\bigr) u(x, t), \quad x = T + c_3 X, \quad t = T,
\end{align*}
where \(c_1 = \delta/3\alpha = \beta /3 \gamma\),  \(c_2 = -2c_1^3 \gamma\), \(c_3 = 3c_1^2 \gamma\), \cref{hirota_eq} can be transformed to the complex mKdV equation \cite{mahalingam2001propagation}
\begin{align}\label{hirota_cmkdv_form}
    u_t + 3 \alpha |u|^2 u_x + \gamma u_{xxx} = 0.
\end{align}

In the present paper, we study the following coupled Hirota (cHirota) equation \cite{tasgal1992soliton, porsezian1994coupled, gilson2003sasa}
\begin{align}
    u_{1,t}&=u_{1,xxx} - 3 \left(\varepsilon_1|u_1|^2 + \varepsilon_2|u_2|^2\right) u_{1,x} - 3 u_1 \left(\varepsilon_1 u_1^*u_{1,x} + \varepsilon_2 u_2^*u_{2,x}\right),\label{chirota_1}\\
    u_{2,t}&=u_{2,xxx} - 3 \left(\varepsilon_1|u_1|^2 + \varepsilon_2|u_2|^2\right) u_{2,x} - 3 u_2 \left(\varepsilon_1 u_1^*u_{1,x} + \varepsilon_2 u_2^*u_{2,x}\right),\label{chirota_2}
\end{align}
where \(\varepsilon_1\) and \(\varepsilon_2\) are real numbers and asterisk \(*\) denotes the complex conjugate. 
The cHirota equation can be modeled for optical pulses with polarizations in birefringent fiber \cite{porsezian1997optical}. 
Depending on the boundary conditions applied to each component of \eqref{chirota_1}-\eqref{chirota_2}, three distinct cases arise in the above equation.
\begin{enumerate}[i)]  
    \item When enforcing the nonzero boundary condition, i.e., requiring \(u_1, u_2 \to \rho > 0\) as \(x \to \pm \infty\), solutions such as the dark soliton \cite{bindu2001dark, jiang2023asymptotic}, breather \cite{chai2019localized, pan2024super}, and rogue wave \cite{chen2013rogue, chen2014dark, huang2016rational, chan2017rogue} can be derived.  
    \item When imposing the zero boundary condition, i.e., setting \(u_1, u_2 \to 0\) as \(x \to \pm \infty\), the resulting solutions include the bright soliton \cite{kang2019construction, xie2020elastic, wang2021analytical, wang2021dressing} and the degenerate soliton \cite{monisha2023degenerate}.  
    \item When applying a mixed boundary condition, where one component satisfies a zero boundary condition while the other follows a nonzero boundary condition. To be specific, we take \(u_1 \to 0, u_2 \to \rho\) (or vice versa) as \(x \to \pm \infty\). Solutions such as the bright-dark soliton \cite{wang2014generalized, liu2016bright, liu2016mixed} and the bright-dark rogue wave \cite{wang2014rogue} can be derived.  
\end{enumerate}

As pointed in \cite{gilson2003sasa} that the Sasa-Satsuma (SS) equation \cite{sasa1991new}
\begin{equation}\label{sse}
    u_t = u_{xxx} - 6\varepsilon |u|^2 u_x - 3\varepsilon u \left(|u|^2\right)_x
\end{equation}
can be reduced from the cHirota equation: by employing reduction conditions \(u_1^* = u_2\) and \(\varepsilon = \varepsilon_1 = \varepsilon_2\). 
The SS equation was studied as a higher order extension to the NLS equation with third-order dispersion, self-frequency shift, and self-steepening included \cite{kodama1987nonlinear,trippenbach1998effects}. 
Its bright soliton \cite{mihalache1993inverse, gilson2003sasa}, dark soliton \cite{ohta2010dark}, breather \cite{wu2022multi}, and rouge wave \cite{bandelow2012sasa, chen2013twisted, akhmediev2015rogue, feng2022higher} solutions have been studied.
{Prior research  on the SS equation \eqref{sse} starts with a C-type reduction, resulting a \(2N\times 2N\) or \((2N+1) \times (2N+1)\) determinant form for the bright and dark soliton solutions \cite{gilson2003sasa,ohta2010dark}. The reduction of the soliton solutions of the cHirota equation \eqref{chirota_1}-\eqref{chirota_2} to the ones of the SS equation \eqref{sse} enables us to give a \(N\times N\) or \((N+1)\times (N+1)\) determinant form of soliton solutions and completes the soliton solution of \eqref{sse} when \(N\) is an odd number.}

The cHirota equation \eqref{chirota_1}-\eqref{chirota_2} is integrable  \cite{park2000higher, nakkeeran2000exact, bindu2001dark, sakovich2000symmetrically} with the Lax pair
$$\bm{\Phi}_x=U\bm{\Phi},\ \bm{\Phi}_t=V\bm{\Phi},
$$ 
with \(\bm{\Phi} = (\phi_1, \phi_2)^T\) and
\begin{align}
    \begin{split}\label{vcmKdV_Lax_pair}
        &U = \i \lambda \sigma + Q, \quad V = - \i \lambda^3 \sigma  - \lambda^2 Q - \i \left(Q_x + Q^2\right) \sigma_2 + (Q_{xx} - 2 Q^3 - \left[Q_x, Q\right]), \\
        & Q = \begin{pmatrix}
            0 & \sqrt{\varepsilon_1} u_1 & \sqrt{\varepsilon_2} u_2 \\
            \sqrt{\varepsilon_1} u_1^* & 0 & 0 \\
            \sqrt{\varepsilon_2} u_2^* & 0 & 0 \\
        \end{pmatrix}, \quad
        \sigma = \begin{pmatrix}
            -1 & 0 & 0 \\
            0 & 0 & 0 \\
            0 & 0 & 0 \\
        \end{pmatrix}, \quad
        \sigma_2 = \begin{pmatrix}
            -1 & 0 & 0 \\
            0 & 1 & 0 \\
            0 & 0 & 1 \\
        \end{pmatrix}.
    \end{split}
\end{align}
By using above Lax pair, bright and dark soliton solutions to the cHirota equation were derived by the inverse scatting transformation \cite{tasgal1992soliton}, Darboux transformation \cite{bindu2001dark, wang2014generalized,jiang2023asymptotic, xie2020elastic}, Riemann-Hilbert method \cite{wu2019riemann, kang2019construction}, and \(\bar{\partial}\)-dressing method \cite{wang2021dressing}. { In particular, the general \(N\)-dark-dark and \(N\)-bright-bright soliton solutions to \eqref{chirota_1}-\eqref{chirota_2} were given in \cite{jiang2023asymptotic,wang2021dressing}. However, it remains an open problem to derive the general \(N\)-bright-dark soliton solution. Furthermore, as the KP reduction method have not been applied to the cHirota equation, we aim at deriving all general \(N\)-soliton solution to \eqref{chirota_1}-\eqref{chirota_2} using this approach. The resulting solutions can further be utilized to provide bright and dark soliton solutions to the SS equation, as mentioned earlier.}

The remainder of the present paper is structured as follows. In \cref{seccion:bb_chirota,section:dd_chirota,section:bd_chirota}, bright-bright, dark-dark, and bright-dark soliton solutions of the cHirota equation \eqref{chirota_1}-\eqref{chirota_2} are derived using the KP hierarchy reduction method. The dynamical behavior of the first- and second-order solutions is illustrated and discussed.  In  \cref{section:b_sasa,section:d_sasa}, the bright and dark soliton solutions are obtained for the SS equation \eqref{sse} from the counterpart solutions of the system \eqref{chirota_1}-\eqref{chirota_2}. Moreover,  the conditions for the oscillated, single-hump, and two-hump soliton solutions are obtained for the SS equation \eqref{sse}. Whereas, a classification of single dark soliton: single-hole, anti-dark, Mexican hat, anti-Mexican, and double-hole of \eqref{sse} is provided.

\section{Bright-bright soliton solution to the coupled Hirota equation}\label{seccion:bb_chirota}
In this section, we consider the bright-bright soliton solution to the cHirota equation \eqref{chirota_1}-\eqref{chirota_2},  which is given by the following theorem.

\begin{theorem}\label{theorem:bb_solution}
    Equation \eqref{chirota_1}-\eqref{chirota_2} admits the bright soliton solutions given by \(u_1=g_1/f, \ u_2=g_2/f\) with \(f,\ g_1, \ g_2\) defined as
    \begin{align}\label{def_bright_f_g}
        f=|M|,\quad
        g_1=\begin{vmatrix}
            M & \Phi \\ -\left(\bar{\Psi}\right)^T & 0
            \end{vmatrix},\quad
        g_2=\begin{vmatrix}
            M & \Phi \\ -\left(\bar{\Upsilon}\right)^T & 0
            \end{vmatrix},
    \end{align}
    where \(M\) is \(N\times N\) matrix, \(\Phi \), \(\bar{\Psi} \), are \(N\)-component vectors whose elements are defined respectively as
    \begin{align}
        &m_{ij}=\frac{1}{p_i+p_j^*}\left(e^{\xi_i+\xi_j^*} - \varepsilon_1 C_i^* C_j - \varepsilon_2 D_i^* D_j \right),\quad \xi_i=p_i x + p_i^3 t+\xi_{i0},\\
        &\Phi = \left( e^{\xi_{1}},e^{\xi_2},\ldots, e^{\xi_N}\right)^T,\quad
        \bar{\Psi}=\left(C_1, C_2,\ldots, C_N\right)^T, \quad \bar{\Upsilon}=\left(D_1, D_2,\ldots, D_N\right)^T.
    \end{align}
    Here, \(p_i\), \(\xi_{i0}\), \(C_i\), \(D_i\) are complex parameters.
\end{theorem}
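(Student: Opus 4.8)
The plan is to prove the theorem by the KP hierarchy reduction, carried out in four stages: bilinearization, identification of the determinant tau functions, verification of the bilinear equations, and the complex-conjugate reduction that enforces reality. I would first bilinearize \eqref{chirota_1}-\eqref{chirota_2}. Substituting $u_1=g_1/f$, $u_2=g_2/f$ with $f$ real and using the Miura-type relation $(\log f)_{xx}=-(\varepsilon_1|u_1|^2+\varepsilon_2|u_2|^2)$, the coupled Hirota system becomes equivalent to
\begin{align*}
    \left(D_t - D_x^3\right) g_j \cdot f = 0 \quad (j=1,2), \qquad D_x^2 f\cdot f = -2\left(\varepsilon_1|g_1|^2 + \varepsilon_2|g_2|^2\right),
\end{align*}
where $D_x,D_t$ denote Hirota's bilinear operators. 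This step is a routine direct computation, and it can be sanity-checked on the $N=1$ case, where $f=(e^{\xi_1+\xi_1^*}-\varepsilon_1|C_1|^2-\varepsilon_2|D_1|^2)/(p_1+p_1^*)$, $g_1=C_1e^{\xi_1}$, $g_2=D_1e^{\xi_1}$.

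Second, I would recognize $f$, $g_1$, $g_2$ as tau functions of the KP hierarchy. The entries $m_{ij}$ are engineered so that, with $\phi_i=e^{\xi_i}$, the Grammian derivative relation $\partial_x m_{ij}=\phi_i\phi_j^*$ holds: the prefactor $(p_i+p_j^*)^{-1}$ exactly cancels the eigenvalue $p_i+p_j^*$ generated by $\partial_x$, while the constant terms $-\varepsilon_1 C_i^*C_j-\varepsilon_2 D_i^*D_j$ are annihilated. Differentiating once more and invoking the dispersion $\xi_i=p_ix+p_i^3t+\xi_{i0}$ yields the third-flow identity $\partial_t m_{ij}=\partial_x^3 m_{ij}-3(\partial_x\phi_i)(\partial_x\phi_j^*)$, which is precisely the reduction of the KP $x_3$-flow to the coupled Hirota time. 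These two identities are the whole engine of the argument; the bordered determinants $g_1,g_2$ are then the standard KP tau functions obtained by augmenting $M$ with the column $\Phi$ and a constant row built from $\bar{\Psi}$ or $\bar{\Upsilon}$.

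Third, and this is the main obstacle, I would verify the bilinear equations by reducing them to determinant identities. Using the cofactor expansion of the bordered determinants together with the Grammian derivative formulas, the $x$- and $t$-derivatives of $g_j$ and $f$ become determinants carrying one or two extra bordering rows and columns; the equation $(D_t-D_x^3)g_j\cdot f=0$ then follows from a Plücker/Jacobi identity once the third-flow relation above is substituted to trade $\partial_t$ for $\partial_x^3$ plus a rank-one correction. The delicate point is the second bilinear equation, whose quadratic right-hand side $-2(\varepsilon_1|g_1|^2+\varepsilon_2|g_2|^2)$ must be generated by the two constant ``colors'' $-\varepsilon_1 C_i^*C_j$ and $-\varepsilon_2 D_i^*D_j$. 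I would handle this by writing $D_x^2 f\cdot f$ as a doubly-bordered determinant and applying a Jacobi identity that factorizes it into the single-bordered determinants representing $g_1,g_1^*$ and $g_2,g_2^*$, with $\varepsilon_1$ and $\varepsilon_2$ emerging from the rank-two background.

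Finally, I would establish reality and the conjugation structure. A direct computation gives $m_{ji}^*=m_{ij}$, so $M$ is Hermitian and $f=|M|$ is real, as needed; the same conjugation carries the $\bar{\Psi}$-bordered determinant to the conjugate of the $\Phi$-bordered one, confirming that $u_1=g_1/f$ and $u_2=g_2/f$ are compatible with the complex-conjugate reduction and solve \eqref{chirota_1}-\eqref{chirota_2}. The parameters $p_i,\xi_{i0},C_i,D_i$ remain free (subject only to $p_i+p_j^*\neq 0$), which yields the full $N$-soliton family asserted in the statement.
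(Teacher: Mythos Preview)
Your bilinearization is incorrect, and this propagates through the entire plan. The pair of equations $(D_t-D_x^3)\,g_j\cdot f=0$ together with $D_x^2 f\cdot f=-2(\varepsilon_1|g_1|^2+\varepsilon_2|g_2|^2)$ is \emph{not} equivalent to \eqref{chirota_1}--\eqref{chirota_2}. A direct computation (using $(D_x^3\,g\cdot f)/f^2=u_{xxx}+6(\log f)_{xx}\,u_x$) shows that your pair encodes the vector complex mKdV system
\[
u_{j,t}=u_{j,xxx}-6\bigl(\varepsilon_1|u_1|^2+\varepsilon_2|u_2|^2\bigr)u_{j,x},
\]
which differs from the coupled Hirota equation by the cross term $3\varepsilon_2 u_2^{*}(u_2 u_{1,x}-u_1 u_{2,x})$ in the first component (and symmetrically in the second). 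This Wronskian-type term does not vanish in general; it vanishes precisely when $u_1/u_2$ is independent of $x$, which is why your $N=1$ sanity check succeeds (there $g_1/g_2=C_1/D_1$ is constant) but is misleading for $N\ge 2$.

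The paper handles this by introducing an auxiliary tau function $s_{12}$ satisfying $D_x\,g_1\cdot g_2=s_{12}f$, so that the first bilinear equation becomes $(D_x^3-D_t)\,g_1\cdot f=-3\varepsilon_2 s_{12}\,g_2^{*}$, with the right-hand side generically nonzero. Concretely, $s_{12}$ is an $(N+2)\times(N+2)$ bordered determinant with \emph{two} extra columns $(\Phi,\partial_x\Phi)$ and two extra rows $(-\bar\Upsilon^{T},-\bar\Psi^{T})$; for $N=1$ this determinant collapses to $0$, but for $N\ge 2$ it does not. The KP reduction then has to track a larger system of bilinear identities (the paper uses nine equations in the extended KP--Toda hierarchy) and combine a dimension reduction $\varepsilon_1\partial_{y_1}+\varepsilon_2\partial_{z_1}=\partial_{x_1}$ with the complex-conjugate reduction to land on the four bilinear equations including the $s_{12}$ coupling. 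Your Grammian/Jacobi strategy for the constraint $D_x^2 f\cdot f=-2(\varepsilon_1|g_1|^2+\varepsilon_2|g_2|^2)$ and the Hermiticity of $M$ are fine, but the evolution equation you aim to verify is simply the wrong one; fixing this requires the auxiliary $s_{12}$ and the extra bilinear identity $D_x\,g_1\cdot g_2=s_{12}f$.
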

In the following subsection, we will give the proof of the above theorem.
    
\subsection{Derivation of bright-bright soliton solution to the coupled Hirota equation}
A substitution of \(u_1=g_1/f,\ u_2=g_2/f\) into the system \eqref{chirota_1}-\eqref{chirota_2}, a set of four
bilinear equations to the cHirota equation is obtained 
\begin{align}
    &(D_x^3 - D_t)g_1 \cdot f= - 3 \varepsilon_1 s_{12} g_2^*,\label{2cmKdVSF_BL_1}\\
    &(D_x^3 - D_t)g_2 \cdot f= - 3 \varepsilon_2 s_{21} g_1^*,\label{2cmKdVSF_BL_2}\\
    &D_x^2 f\cdot f + 2 \varepsilon_1 |g_1|^2 + 2 \varepsilon_2 |g_2|^2=0,\label{2cmKdVSF_BL_3}\\
    &D_x g_1 \cdot g_2 = s_{12} f.\label{2cmKdVSF_BL_4}
\end{align}
by introducing auxiliary functions \(s_{12}, s_{21}\) where  
we assume \(s_{12} = -s_{21}\).


To derive the bright-bright soliton solutions, we start with the \(\tau\)-functions in three-dimensional KP-Toda hierarchy which is given by  
\begin{align}
    &f=|M|,\quad
    g_1=\begin{vmatrix}
        M & \Phi \\ -\bar{\Psi}^T & 0
    \end{vmatrix},
    \quad 
    \bar{g}_1=\begin{vmatrix}
        M & \Psi \\ -\bar{\Phi}^T & 0
    \end{vmatrix},\quad
    g_2=\begin{vmatrix}
        M & \Phi \\ -\bar{\Upsilon}^T & 0
    \end{vmatrix},
    \quad 
    \bar{g}_2=\begin{vmatrix}
        M & \Upsilon \\ -\bar{\Phi}^T & 0
    \end{vmatrix},\label{tau_bright_begin}
    \\
    & s_{12}=\begin{vmatrix}
        M & \Phi & \partial_{x_1} \Phi \\
        -\bar{\Upsilon}^T & 0 & 0 \\
        -\bar{\Psi}^T & 0 & 0 
    \end{vmatrix},\label{tau_bright_end}
\end{align}
where \(M\) is \(N\times N\) matrix, \(\Phi \), \(\bar{\Phi}\), \(\Psi\), \(\bar{\Psi}\), \(\Upsilon\) and \(\bar{\Upsilon}\) are \(N\)-component vectors whose elements are defined respectively as
\begin{align}
    &m_{ij}=\frac{1}{p_i+\bar{p}_j}e^{\xi_i+\bar{\xi}_j}+\frac{\tilde{C}_i \bar{C}_j}{q_i+\bar{q}_j}e^{\eta_i+\bar{\eta}_j}+\frac{\tilde{D}_i \bar{D}_j}{r_i+\bar{r}_j}e^{\chi_i+\bar{\chi}_j},\\
    &\Phi =\left( e^{\xi _{1}},e^{\xi _2},\ldots, e^{\xi _{N}}\right)^T, \quad \bar{\Phi} =\left( e^{\bar{\xi}_{1}},e^{\bar{\xi}_2},\ldots ,e^{\bar{\xi}_{N}}\right)^T, \quad\\
    &\Psi =\left(\tilde{C}_1 e^{\eta _{1}}, \tilde{C}_2 e^{\eta _2},\ldots, \tilde{C}_N e^{\eta _{N}}\right)^T,\quad
    \bar{\Psi}=\left(\bar{C}_1 e^{\bar{\eta}_1}, \bar{C}_2 e^{\bar{\eta}_2},\ldots, \bar{C}_N e^{\bar{\eta}_{N}}\right)^T,\\
    &\Upsilon =\left(\tilde{D}_1 e^{\chi_1}, \tilde{D}_2 e^{\chi_2},\ldots, \tilde{D}_N e^{\chi_N}\right)^T,\quad
    \bar{\Upsilon}=\left(\bar{D}_1 e^{\bar{\chi}_1}, \bar{D}_2 e^{\bar{\chi}_2},\ldots, \bar{D}_N e^{\bar{\chi}_{N}}\right)^T,\\
    &\xi_i=p_i x_1 + p_i^2 x_2 + p_i^3 x_3+\xi_{i0},\quad \bar{\xi}_i=\bar{p}_i x_1 - \bar{p}_i^2 x_2 + \bar{p}_i^3 x_3+\bar{\xi}_{i0},\\
    &\eta_i=q_i y_1,\quad \bar{\eta}_i=\bar{q}_i y_1, \quad \chi_i=r_i z_1,\quad \bar{\chi}_i=\bar{r}_i z_1. 
\end{align}
It is shown that the above \( \tau\)-functions satisfy 
the following bilinear equations \cite{gilson2003sasa}
\begin{align}
    &\left(D_{x_1}^3+3D_{x_1}D_{x_2}-4D_{x_3}\right)g_1 \cdot f=0,\label{KP_bright_1}\\
    &\left(D_{x_1}^3+3D_{x_1}D_{x_2}-4D_{x_3}\right)g_2 \cdot f=0,\label{KP_bright_2}\\
    &D_{y_1}D_{x_1}f\cdot f=-2g_1 \bar{g}_1,\label{KP_bright_3}\\
    &D_{z_1}D_{x_1}f\cdot f=-2g_2 \bar{g}_2,\label{KP_bright_4}\\
    &D_{x_1}g_1\cdot g_2=s_{12}f,\label{KP_bright_5}\\
    &D_{y_1}\left(D_{x_1}^2-D_{x_2}\right)g_1 \cdot f=0,\label{KP_bright_6}\\
    &D_{z_1}\left(D_{x_1}^2-D_{x_2}\right)g_1 \cdot f=-4s_{12}\bar{g}_2,\label{KP_bright_7}\\
    &D_{y_1}\left(D_{x_1}^2-D_{x_2}\right)g_2 \cdot f=-4s_{21}\bar{g}_1.\label{KP_bright_8}\\
    &D_{z_1}\left(D_{x_1}^2-D_{x_2}\right)g_2 \cdot f=0.\label{KP_bright_9}
\end{align}
\begin{remark}\label{remark:KP_bright_relation}
    With above defined \(\tau\)-function,  \(s_{21}\)  is defined from \(s_{12}\) by exchanging the last two rows in \(s_{12}\) such that \(s_{12}=-s_{21}\).
\end{remark}

Next, we perform reductions from the KP-Toda hierarchy \eqref{KP_bright_1}-\eqref{KP_bright_7} to the bilinear form of the cHirota equation \eqref{2cmKdVSF_BL_1}-\eqref{2cmKdVSF_BL_4}. 
Firstly we consider the dimension reduction to eliminate the variables \(x_2,\ y_1, \ z_1\).
To achieve this, let us introduce the following lemma
\begin{lemma}
Under the restrictions
\begin{align}
    p_i = - \varepsilon_1 q_i = - \varepsilon_2 r_i, \quad  \bar{p}_i = - \varepsilon_1 \bar{q}_i = - \varepsilon_2 \bar{r}_i,
\end{align}
where \(i=1,2,\ldots, N\). \(\tau\)-functions \eqref{tau_bright_begin}-\eqref{tau_bright_end} satisfy the following relation
\begin{align}\label{bright_dim_reduction}
    \varepsilon_1\partial_{y_1}+\varepsilon_2\partial_{z_1}=\partial_{x_1}.
\end{align}
\end{lemma}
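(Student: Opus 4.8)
The plan is to establish \eqref{bright_dim_reduction} as an operator identity on each of the \(\tau\)-functions \eqref{tau_bright_begin}-\eqref{tau_bright_end} by first reducing it to an elementary statement about the single entry \(m_{ij}\) and the bordering vectors. Every building block depends on \(x_1,\,y_1,\,z_1\) only through the exponentials \(e^{\xi_i+\bar\xi_j}\), \(e^{\eta_i+\bar\eta_j}\), \(e^{\chi_i+\bar\chi_j}\), so writing \(m_{ij}=T_1+T_2+T_3\) for its three summands I would first record
\[
\partial_{x_1}m_{ij}=(p_i+\bar p_j)\,T_1,\qquad
\partial_{y_1}m_{ij}=(q_i+\bar q_j)\,T_2,\qquad
\partial_{z_1}m_{ij}=(r_i+\bar r_j)\,T_3,
\]
each derivative clearing exactly one denominator. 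Introducing the shorthand \(D:=\partial_{x_1}-\varepsilon_1\partial_{y_1}-\varepsilon_2\partial_{z_1}\), the parameter restrictions \(p_i=-\varepsilon_1q_i=-\varepsilon_2r_i\) and \(\bar p_i=-\varepsilon_1\bar q_i=-\varepsilon_2\bar r_i\) give \(\varepsilon_1(q_i+\bar q_j)=\varepsilon_2(r_i+\bar r_j)=-(p_i+\bar p_j)\), so the three pieces recombine into the single clean relation \(D\,m_{ij}=(p_i+\bar p_j)\,m_{ij}\).

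Next I would extend this to the bordered determinants. A direct computation using the same constraints yields \(D\Phi_i=p_i\Phi_i\), \(D\Psi_i=p_i\Psi_i\), \(D\Upsilon_i=p_i\Upsilon_i\) and \(D\bar\Phi_j=\bar p_j\bar\Phi_j\), \(D\bar\Psi_j=\bar p_j\bar\Psi_j\), \(D\bar\Upsilon_j=\bar p_j\bar\Upsilon_j\), together with \(D(\partial_{x_1}\Phi_i)=p_i(\partial_{x_1}\Phi_i)\) since \(D\) commutes with \(\partial_{x_1}\). Viewing \(f,g_1,\bar g_1,g_2,\bar g_2\) and \(s_{12}\) as determinants of \((N+k)\times(N+k)\) matrices \(\widetilde M=(\widetilde m_{ab})\), these identities say precisely that \(D\widetilde m_{ab}=(\alpha_a+\beta_b)\widetilde m_{ab}\), where the row/column weights attached to the genuine indices are \(p_i\) and \(\bar p_j\), while \emph{every bordering row and column carries weight zero}. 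I would then invoke the elementary determinant lemma: if \(D\widetilde m_{ab}=(\alpha_a+\beta_b)\widetilde m_{ab}\) for a first-order operator \(D\), then \(D\det\widetilde M=\bigl(\sum_a\alpha_a+\sum_b\beta_b\bigr)\det\widetilde M\), which follows at once from \(D\det\widetilde M=\sum_{a,b}\mathrm{cof}_{ab}\,D\widetilde m_{ab}=\sum_{a,b}(\alpha_a+\beta_b)\,\widetilde m_{ab}\,\mathrm{cof}_{ab}\) and the two Laplace expansions \(\sum_b\widetilde m_{ab}\,\mathrm{cof}_{ab}=\sum_a\widetilde m_{ab}\,\mathrm{cof}_{ab}=\det\widetilde M\).

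Because the bordering indices contribute nothing, the resulting constant is the \emph{same} for all the \(\tau\)-functions, namely \(C_0:=\sum_{i=1}^N(p_i+\bar p_i)\), so that \(D\tau=C_0\tau\) holds uniformly for \(\tau\in\{f,g_1,\bar g_1,g_2,\bar g_2,s_{12},s_{21}\}\). Finally I would remove this common constant by the gauge \(\tau\mapsto e^{-C_0x_1}\tau\): since \(e^{-C_0x_1}\) depends on \(x_1\) alone, one checks \(D\bigl(e^{-C_0x_1}\tau\bigr)=0\), i.e. \(\partial_{x_1}=\varepsilon_1\partial_{y_1}+\varepsilon_2\partial_{z_1}\) on the gauged functions, and this gauge is harmless because it cancels in every ratio \(u_a=g_a/f\) and preserves all Hirota bilinear equations \eqref{KP_bright_1}-\eqref{KP_bright_9}. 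I expect the only real bookkeeping obstacle to be the uniform treatment of the several bordered determinants — in particular verifying that each bordering row/column genuinely carries weight zero so that the constant \(C_0\) is common — together with keeping track of the harmless overall constant; the core entrywise identity \(D\,m_{ij}=(p_i+\bar p_j)m_{ij}\) is immediate once the constraints are used.
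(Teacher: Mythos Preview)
Your argument is correct and rests on the same entrywise identity the paper uses: under the parameter constraints, each building block satisfies \(D\widetilde m_{ab}=(\alpha_a+\beta_b)\widetilde m_{ab}\), and one then passes to determinants. The packaging differs slightly. The paper first factors \(\prod_n e^{\xi_n+\bar\xi_n}\) out of each \(\tau\)-function (equivalently, your gauge \(e^{-C_0x_1}\) applied up front), drops it by appeal to the invariance of Hirota bilinear forms under a common exponential factor, and then verifies \((\varepsilon_1\partial_{y_1}+\varepsilon_2\partial_{z_1})\widetilde m_{ij}=\partial_{x_1}\widetilde m_{ij}\) directly on the normalized entries via a cofactor sum; it treats only \(f\) explicitly and asserts the bordered cases are analogous. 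Your route instead keeps the original entries, proves the clean eigen-relation \(D\tau=C_0\tau\) via a row/column weight determinant lemma, and gauges at the end. The advantage of your formulation is that the bordered determinants \(g_1,\bar g_1,g_2,\bar g_2,s_{12}\) are handled uniformly once you observe the bordering rows and columns carry weight zero, so the constant \(C_0\) is visibly common to all \(\tau\)-functions; the paper's advantage is that after pre-factoring no residual gauge step is needed. Both are the same computation rearranged, and your justification that the gauge cancels in the ratios \(u_a=g_a/f\) and in the bilinear identities \eqref{KP_bright_1}--\eqref{KP_bright_9} is exactly the point the paper invokes when it drops the common exponential.
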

\begin{proof}
Consider \(\tau\)-function \(f\), note that 
\begin{align*}
    f&=\begin{vmatrix}
        \dfrac{1}{p_i+\bar{p}_j}e^{\xi_i+\bar{\xi}_j}+\dfrac{\tilde{C}_i \bar{C}_j}{q_i+\bar{q}_j}e^{\eta_i+\bar{\eta}_j}+\dfrac{\tilde{D}_i \bar{D}_j}{r_i+\bar{r}_j}e^{\chi_i+\bar{\chi}_j}
    \end{vmatrix}\\
    &=\left(\prod_{i=1}^{N}e^{\xi_i}\right)\left(\prod_{i=1}^{N}e^{\bar{\xi}_i}\right) \begin{vmatrix}
        \dfrac{1}{p_i+\bar{p}_j}+\dfrac{\tilde{C}_i \bar{C}_j}{q_i+\bar{q}_j}e^{\eta_i+\bar{\eta}_j-\xi_i-\bar{\xi}_j}+\dfrac{\tilde{D}_i \bar{D}_j}{r_i+\bar{r}_j}e^{\chi_i+\bar{\chi}_j-\xi_i-\bar{\xi}_j}
    \end{vmatrix},
\end{align*}
the term \((\prod_{i=1}^{N}e^{\xi_i})(\prod_{i=1}^{N}e^{\bar{\xi}_i})\) can be dropped due to property of the \(D\)-operator, thus
\begin{align*}
    \left(\varepsilon_1\partial_{y_1}+\varepsilon_2\partial_{z_1}\right)f&=\sum_{i,j=0}^{N} \Delta_{ij} \left(\varepsilon_1\partial_{y_1}+\varepsilon_2\partial_{z_1}\right)\left[\dfrac{1}{p_i+\bar{p}_j}+\dfrac{\tilde{C}_i \bar{C}_j}{q_i+\bar{q}_j}e^{\eta_i+\bar{\eta}_j-\xi_i-\bar{\xi}_j}+\dfrac{\tilde{D}_i \bar{D}_j}{r_i+\bar{r}_j}e^{\chi_i+\bar{\chi}_j-\xi_i-\bar{\xi}_j}\right]\\
    &=\sum_{i,j=0}^{N} \Delta_{ij} \left[\varepsilon_1(q_i+\bar{q}_j)\dfrac{\tilde{C}_i \bar{C}_j}{q_i+\bar{q}_j}e^{\eta_i+\bar{\eta}_j-\xi_i-\bar{\xi}_j}+\varepsilon_2(r_i+\bar{r}_j)\dfrac{\tilde{D}_i \bar{D}_j}{r_i+\bar{r}_j}e^{\chi_i+\bar{\chi}_j-\xi_i-\bar{\xi}_j}\right]
\end{align*}
where \(\Delta_{ij}\) is the cofactor of the \((i,j)\) entry of \(f\). Also, note 
\begin{align*}
    \partial_{x_1}f&=\sum_{i,j=0}^{N} \Delta_{ij} \partial_{x_1}\left[\dfrac{1}{p_i+\bar{p}_j}+\dfrac{\tilde{C}_i \bar{C}_j}{q_i+\bar{q}_j}e^{\eta_i+\bar{\eta}_j-\xi_i-\bar{\xi}_j}+\dfrac{\tilde{D}_i \bar{D}_j}{r_i+\bar{r}_j}e^{\chi_i+\bar{\chi}_j-\xi_i-\bar{\xi}_j}\right]\\
    &=\sum_{i,j=0}^{N} \Delta_{ij} \left[(-p_i-\bar{p}_j)\dfrac{\tilde{C}_i \bar{C}_j}{q_i+\bar{q}_j}e^{\eta_i+\bar{\eta}_j-\xi_i-\bar{\xi}_j}+(-p_i-\bar{p}_j)\dfrac{\tilde{D}_i \bar{D}_j}{r_i+\bar{r}_j}e^{\chi_i+\bar{\chi}_j-\xi_i-\bar{\xi}_j}\right],
\end{align*}
while \(p_i = - \varepsilon_1 p_i = - \varepsilon_2 r_i, \ \bar{p}_i = - \varepsilon_1 \bar{q}_i = - \varepsilon_2 \bar{r}_i\), the relation \eqref{bright_dim_reduction} holds. 
One can prove this for \(g_1,\ \bar{g}_1,\ g_2,\ \bar{g}_2\) and \(s_{12}\) analogously.
\end{proof}

With the above dimension reduction, one obtains
\begin{align}
    &D_{x_1}^2 f\cdot f = -2\varepsilon_1 g_1 \bar{g_{1}} -2\varepsilon_2 g_2 \bar{g_{2}}, \label{KP_bright_3.1}\\
    &\left(D_{x_1}^3-D_{x_1}D_{x_2}\right)g_1 \cdot f = -4\varepsilon_2 s_{12}\bar{g_2},\label{KP_bright_6.1}\\
    &\left(D_{x_1}^3-D_{x_1}D_{x_2}\right)g_2 \cdot f = -4\varepsilon_1 s_{21}\bar{g_1}\label{KP_bright_8.1}
\end{align}
from \eqref{KP_bright_3}-\eqref{KP_bright_4}, \eqref{KP_bright_6}-\eqref{KP_bright_7} and \eqref{KP_bright_8}-\eqref{KP_bright_9}, respectively. 
The elimination of \(D_{x_2}\) by using (\eqref{KP_bright_1}) and \ (\eqref{KP_bright_2}) leads to
\begin{align}
    \left(D_{x_1}^3-D_{x_3}\right)g_1 \cdot f = -3\varepsilon_2 s_{12}\bar{g_2},\label{KP_bright_1.2}\\
    \left(D_{x_1}^3-D_{x_3}\right)g_2 \cdot f = -3\varepsilon_1 s_{21}\bar{g_1}.\label{KP_bright_2.2}
\end{align}
Hence, by performing the dimension reductions mentioned above, the variables \(x_2,\ y_1,\ z_1\) are reduced into dummy variables, and we can set \(x_2=y_1=z_1=0,\ x_1 = x,\  x_3 = t\). 

Furthermore, to achieve complex conjugate condition, we have the following lemma  
\begin{lemma}
Under following parameter restrictions
\begin{equation}
    \bar{p}_i^*=p_i,\quad \bar{\xi}_{i,0}^*=\xi_{i0}, \quad C_i = \left(\tilde{C}_i\right)^* = \bar{C}_i, \quad D_i = \left(\tilde{D}_i\right)^* = \bar{D}_i
\end{equation}
where \(i=1,2,\ldots,N\), we have
\begin{align}
    \bar{g}_1=g_1^*,\quad \bar{g}_2=g_2^*, \quad f\in \mathbb{R}.
\end{align}
\end{lemma}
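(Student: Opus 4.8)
The plan is to show that the stated restrictions make the coefficient matrix $M$ Hermitian and pair up the bordering vectors under complex conjugation, after which $f\in\mathbb{R}$ and the two $\bar g$-relations drop out of elementary determinant manipulations. Throughout I would work with the already-reduced $\tau$-functions, i.e.\ with $x_2=y_1=z_1=0$, $x_1=x$, $x_3=t$, and the dimension-reduction constraints $q_i=-p_i/\varepsilon_1$, $r_i=-p_i/\varepsilon_2$ (and their barred analogues) imposed, so that the entries collapse to $m_{ij}=\frac{1}{p_i+\bar p_j}\bigl(e^{\xi_i+\bar\xi_j}-\varepsilon_1\tilde C_i\bar C_j-\varepsilon_2\tilde D_i\bar D_j\bigr)$, since at $y_1=z_1=0$ the factors $e^{\eta_i+\bar\eta_j}$ and $e^{\chi_i+\bar\chi_j}$ are unity.

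First I would record the scalar conjugation identities forced by the hypotheses. From $\bar p_i^{\,*}=p_i$ and $\bar\xi_{i0}^{\,*}=\xi_{i0}$, together with $x,t$ real and $x_2=0$, one gets $p_i^{*}=\bar p_i$ and $\xi_i^{*}=\bar\xi_i$; the sign-flipped $x_2$-term in $\bar\xi_i$ is exactly what makes $x_2=0$ necessary at this step. Likewise $\tilde C_i^{*}=C_i=\bar C_i$ and $\tilde D_i^{*}=D_i=\bar D_i$. Using these, a single conjugated entry transforms as follows: the denominator $p_i+\bar p_j$ becomes $\bar p_i+p_j=p_j+\bar p_i$, the exponential $e^{\xi_i+\bar\xi_j}$ becomes $e^{\bar\xi_i+\xi_j}=e^{\xi_j+\bar\xi_i}$, and the product $\tilde C_i\bar C_j$ becomes $C_iC_j^{*}=\tilde C_j\bar C_i$ (similarly for the $D$-term). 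Hence $m_{ij}^{*}=m_{ji}$, so $M^{\dagger}=M$ and $f=|M|\in\mathbb{R}$, the last assertion. The same identities give the vector relations $\Phi^{*}=\bar\Phi$, $\bar\Psi^{*}=\Psi$ and $\bar\Upsilon^{*}=\Upsilon$, where at $y_1=z_1=0$ the latter two reduce to conjugating the constants $C_i$ and $D_i$.

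Finally I would assemble the $g$-relations. Writing $g_1=\det B$ with $B=\begin{pmatrix} M & \Phi\\ -\bar\Psi^{T} & 0\end{pmatrix}$ and using $\det(B^{\dagger})=\overline{\det B}$, the block adjoint evaluates, via $(M^{*})^{T}=M^{\dagger}=M$, $-\bar\Psi^{*}=-\Psi$, and $(\Phi^{*})^{T}=\bar\Phi^{T}$, to $B^{\dagger}=\begin{pmatrix} M & -\Psi\\ \bar\Phi^{T} & 0\end{pmatrix}$. Rescaling its last column by $-1$ and its last row by $-1$ (two cancelling sign changes) leaves the determinant unchanged while turning the matrix into $\begin{pmatrix} M & \Psi\\ -\bar\Phi^{T} & 0\end{pmatrix}$, whose determinant is precisely $\bar g_1$; thus $g_1^{*}=\det(B^{\dagger})=\bar g_1$. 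Replacing $\Psi,\bar\Psi$ by $\Upsilon,\bar\Upsilon$ throughout yields $g_2^{*}=\bar g_2$ verbatim.

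The routine-but-careful part is verifying $m_{ij}^{*}=m_{ji}$ while keeping all the conjugation and dimension-reduction constraints straight, and the one place that genuinely demands attention is the sign bookkeeping in the bordered determinant, where the column and row rescalings must be arranged so that the two factors of $-1$ cancel rather than compound.
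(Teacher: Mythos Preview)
Your proposal is correct and follows essentially the same route as the paper: both establish $m_{ij}^{*}=m_{ji}$ from the parameter constraints (giving $f\in\mathbb{R}$ immediately), then obtain $\bar g_1=g_1^{*}$ and $\bar g_2=g_2^{*}$ via a transpose combined with two cancelling sign flips on the bordering row and column. The only cosmetic difference is that you package the transpose-plus-conjugation step as $\det(B^{\dagger})=\overline{\det B}$, whereas the paper writes out the intermediate determinants explicitly; the underlying manipulation is identical.
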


\begin{proof}
With these restrictions, we have
\begin{align*}
    &\bar{\xi}_i=\bar{p}_i x+\bar{p}_i^3 t+\bar{\xi}_{i,0}=p_i^* x+(p_i^*)^3 t+\xi_{i,0}^*=\xi_i^*
\end{align*}
thus
\begin{align*}
    m_{i,j}^*&=\frac{1}{p_i^*+\bar{p}_j^*} \left(e^{\xi_i^*+\bar{\xi}_j^*} - \varepsilon_1\tilde{C}_i^* \bar{C}_j^* - \varepsilon_2 \tilde{D}_i^* \bar{D}_j^*\right) = \frac{1}{\bar{p}_i + p_j} \left(e^{\bar{\xi}_i + \xi_j} - \varepsilon_1 C_i C_j^* + - \varepsilon_2 D_i D_j^* \right) \\
    &=\frac{1}{p_j+\bar{p}_i} \left(e^{\xi_j+\bar{\xi}_i} - \varepsilon_1\tilde{C}_j \bar{C}_i - \varepsilon_2 \tilde{D}_j \bar{D}_i \right) =m_{j,i}.
\end{align*}
Based on this, we can prove \(\bar{g}_1=g_1^*\) as 
\begin{align*}
    \bar{g}_1=\begin{vmatrix}
        m_{i,j} & \tilde{C}_i \\ -e^{\bar{\xi}_j} & 0
    \end{vmatrix} = \begin{vmatrix}
        m_{i,j} & -\tilde{C}_i \\ e^{\bar{\xi}_j} & 0
    \end{vmatrix} = \begin{vmatrix}
        m_{j,i} & e^{\bar{\xi}_i} \\ -\tilde{C}_j & 0
    \end{vmatrix} = \begin{vmatrix}
        m_{i,j}^* &  e^{\xi_i^*} \\ -\left(C_j\right)^* & 0
    \end{vmatrix} = \begin{vmatrix}
        m_{i,j}^* &  e^{\xi_i^*} \\ -\left(\bar{C}_j\right)^* & 0
    \end{vmatrix} = g_1^*.
\end{align*}
Similarly,  \(\bar{g}_2=g_2^*\) can be approved. Moreover, note that
\begin{align*}
    f = \det\left(m_{i,j}\right) = \det\left(m_{j,i}\right) = \det\left(\left(m_{i,j}\right)^*\right) = f^*
\end{align*}
which implies \(f\in \mathbb{R}\).
\end{proof}

Under above complex conjugate reduction, \eqref{KP_bright_1.2}, \eqref{KP_bright_2.2}, \eqref{KP_bright_3.1}, and \eqref{KP_bright_4} become nothing but the bilinear form of the cHirota equation \eqref{2cmKdVSF_BL_1}-\eqref{2cmKdVSF_BL_4}. This complete the proof of \cref{theorem:bb_solution}.

\subsection{Dynamics of the bright-bright soliton solution to the coupled Hirota equation}
In this subsection, we are going to explore the dynamics of the afore-obtained bright-bright soliton solutions.

By taking \(N=1\) in \cref{theorem:bb_solution}, for solution \(u_1 = g_1 /f,\ u_2 = g_2 /f\), we have 
\begin{align*}
    &f = \frac{1}{p_1+p_1^*} \left( \exp \left(\xi_1 + \xi_1^*\right) - \varepsilon_1 |C_1|^2 - \varepsilon_2 |D_1|^2\right),\quad g_1 = C_1 \exp \left(\xi_1\right), \quad g_2 = D_1 \exp \left(\xi_1\right),\\
    &\xi_1 = p_1 x +  p_1^3 t + \xi_{1,0}.
\end{align*}
The regularity condition for this solution is \(f \neq 0\), one may deduce that it requires \(- \varepsilon_1 |C_1|^2 - \varepsilon_2 |D_1|^2 > 0\), which implies there is no bright soliton solution for \( \varepsilon_1=\varepsilon_2=1\). Above solution can be simplified as
\begin{align*}
    u_1 &= C_1 \Re(p_1) \frac{\exp\left(\i \Im (\xi_{1})\right)}{\sqrt{-\varepsilon_1|C_1|^2 -\varepsilon_2|D_1|^2}} \sech\left(\Re (\xi_1) - \log\left(\sqrt{-\varepsilon_1|C_1|^2 -\varepsilon_2|D_1|^2}\right)\right), \\
    u_2 &= D_1 \Re(p_1) \frac{\exp\left(\i \Im (\xi_{1})\right)}{\sqrt{-\varepsilon_1|C_1|^2 -\varepsilon_2|D_1|^2}} \sech\left(\Re (\xi_1) - \log\left(\sqrt{-\varepsilon_1|C_1|^2 -\varepsilon_2|D_1|^2}\right)\right).
\end{align*}
The above expressions indicate that the solitons from \(u_1\) and \(u_2\) are both located along the line \(\Re (\xi_1) - \log\left(\sqrt{-\varepsilon_1|C_1|^2 -\varepsilon_2|D_1|^2}\right) = 0\). Their maximum amplitudes are \[C_1 \Re(p_1)/\sqrt{-\varepsilon_1|C_1|^2 -\varepsilon_2|D_1|^2}, \quad D_1 \Re(p_1)/\sqrt{-\varepsilon_1|C_1|^2 -\varepsilon_2|D_1|^2}\] respectively.  Since \(\Re (\xi_1) = 2 \Re(p_1) \left(x - \left(3 \left[\Im(p_1)\right]^2 - \left[\Re(p_1)\right]^2\right) t \right) + \Re (\xi_{1,0})\), the speed of solitons from \(u_1\) and \(u_2\) are both \(3 \left[\Im(p_1)\right]^2 - \left[\Re(p_1)\right]^2\).

We can also consider the intensity of above solution, note that
\begin{align*}
    \int_{-\infty}^{+\infty} \sech^2 \left(\Re (\xi_1) - \log\left(\sqrt{-\varepsilon_1|C_1|^2 -\varepsilon_2|D_1|^2}\right)\right) \, dx = \frac{2}{|\Re(p_1)|},
\end{align*}
therefore, the intensity for \(u_1\), \(u_2\) are
\begin{align*}
    N(u_1) = \int_{-\infty}^{+\infty} |u_1|^2 \, dx = \frac{2|\Re (p_1)||C_1|^2}{-\varepsilon_1|C_1|^2 -\varepsilon_2|D_1|^2},\quad  N(u_2) = \int_{-\infty}^{+\infty} |u_2|^2 \, dx = \frac{2|\Re (p_1)||D_1|^2}{-\varepsilon_1|C_1|^2 -\varepsilon_2|D_1|^2}.
\end{align*}
Note that for \(-\varepsilon_1N(u_1) -\varepsilon_2N(u_2) = 2|\Re(p_1)|\), therefore when \(\varepsilon_1=\varepsilon_2=-1\), the total intensity for all components is \(N = N(u_1) + N(u_2) = 2|\Re(p_1)|\). Since the amplitude of the soliton is proportional to \(C_1\) and \(D_1\), one may observe soliton with different height on each component, see \cref{figure:bb_N=1}. 
\begin{figure}[!ht]
    \centering
    \subfigure[]{
        \includegraphics[width=60mm]{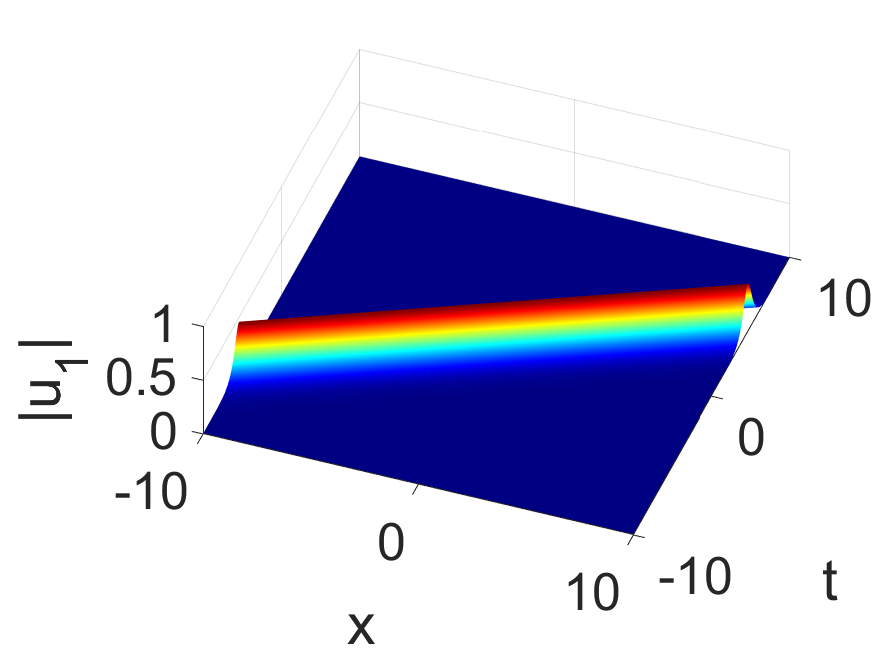}
    }
    \subfigure[]{
        \includegraphics[width=60mm]{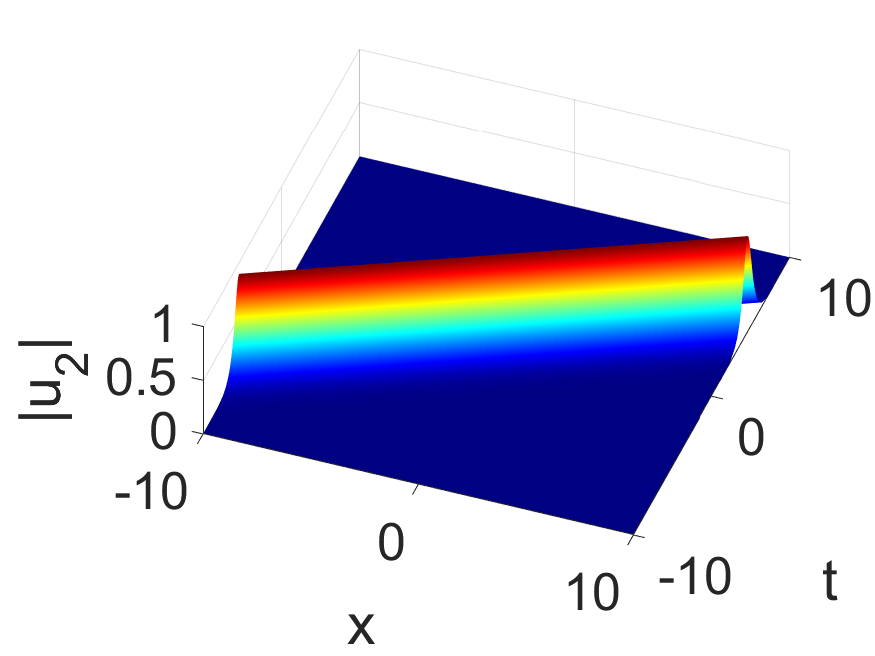}
    }
    \caption{One-bright soliton solution to the cHirota equation with parameters \(p_1=1+\i, C_1=1, D_1=2, \xi_{1,0} = 0, \varepsilon_1 = \varepsilon_2 = -1\).}\label{figure:bb_N=1}
\end{figure}

The two-bright soliton is given by \cref{theorem:bb_solution} for \(N=2\) 
\begin{align*}
    & f = \det(M), \quad
    g_1 = \begin{vmatrix}
        M & \begin{matrix} \exp(\xi_1) \\ \exp(\xi_2) \end{matrix} \\
        \begin{matrix} -C_1 & -C_2 \end{matrix} &  0
    \end{vmatrix}, \quad
    g_2 = \begin{vmatrix}
        M & \begin{matrix} \exp(\xi_1) \\ \exp(\xi_2) \end{matrix} \\
        \begin{matrix} -D_1 & -D_2 \end{matrix} &  0
    \end{vmatrix}, \\
    &  M = \begin{pmatrix}
        \dfrac{1}{p_1 + p_1^*} \left(e^{\xi_1 + \xi_1^*} + c_{1,1}\right) &
        \dfrac{1}{p_1 + p_2^*} \left(e^{\xi_1 + \xi_2^*} + c_{1,2}\right) \\
        \dfrac{1}{p_2 + p_1^*} \left(e^{\xi_2 + \xi_1^*} + c_{1,2}^*\right) &
        \dfrac{1}{p_2 + p_2^*} \left(e^{\xi_2 + \xi_2^*} + c_{2,2}\right) 
    \end{pmatrix}\,.
\end{align*}
where 
\begin{align*}
    c_{1,1} = -\varepsilon_1|C_1|^2 - \varepsilon_2|D_1|^2,\quad c_{1,2} = -\varepsilon_1C_1^* C_2  - \varepsilon_2D_1^* D_2,\quad c_{2,2} = -\varepsilon_1|C_2|^2 - \varepsilon_2|D_2|^2, 
\end{align*}
\(\xi_1 = p_1 x + p_1^3 t + \xi_{1,0}, \ \xi_2 = p_2 x + p_2^3 t + \xi_{2,0}\). 
Let us perform asymptotic analysis to the two-solution solution.
The dynamical behavior of second order bright soliton can be illustrated by the asymptotic analysis. To start, we name soliton determined by \(\xi_1 + \xi_1^*\) as soliton 1, and soliton 2 is from \(\xi_2 + \xi_2^*\). Assume soliton 1 is on the right of soliton 2 before the collision. Next, consider the solution \(u_1, u_2\) under \(t \rightarrow \pm \infty\):
\begin{enumerate}[(1)]
    \item Before collision, i.e., \(t \rightarrow - \infty\), this gives
    
    Soliton 1 (\(\xi_1 + \xi_1^* \approx 0,\ \xi_2 + \xi_2^* \rightarrow - \infty\))
    \begin{align*}
        u_1 &\simeq \frac{(p_1+p_1^*)(p_2+p_2^*)}{c_{2,2}} \left(\frac{C_1 c_{2,2}}{p_2 + p_2^*} - \frac{C_2 c_{1,2}^*}{p_2 + p_1^*}\right) \frac{\exp(\i \Im(\xi_1))}{2 \theta_1} \sech \left(\Re (\xi_1) - \log\left(\theta_1\right)\right), \\
        u_2 &\simeq \frac{(p_1+p_1^*)(p_2+p_2^*)}{c_{2,2}} \left(\frac{D_1 c_{2,2}}{p_2 + p_2^*} - \frac{D_2 c_{1,2}^*}{p_2 + p_1^*}\right) \frac{\exp(\i \Im(\xi_1))}{2 \theta_1} \sech \left(\Re (\xi_1) - \log\left(\theta_1\right)\right),
    \end{align*}
    where the phase term \(\theta_1\) is given by
    \begin{align*}
        \theta_1 = \sqrt{c_{1,1} - \frac{(p_1+p_1^*)(p_2+p_2^*)|c_{1,2}|^2}{|p_2+p_1^*|^2 c_{2,2}}}.
    \end{align*}

    Soliton 2 (\(\xi_2 + \xi_2^* \approx 0,\ \xi_1 + \xi_1^* \rightarrow + \infty\))
    this can be simplified as
    \begin{align*}
        u_1 &\simeq C_2 \frac{(p_2+p_2^*)(p_2 - p_1)}{(p_2+p_1^*)c_{2,2}} \frac{\exp(\i \Im(\xi_2))}{2\varphi_1} \sech \left(\Re (\xi_2) + \log\left(\varphi_1\right)\right), \\
        u_2 &\simeq D_2 \frac{(p_2+p_2^*)(p_2 - p_1)}{(p_2+p_1^*)c_{2,2}} \frac{\exp(\i \Im(\xi_2))}{2\varphi_1} \sech \left(\Re (\xi_2) + \log\left(\varphi_1\right)\right),
    \end{align*}
    where the phase term \(\varphi_1\) is given by
    \begin{align*}
        \varphi_1 = \sqrt{\frac{1}{c_{2,2}} - \frac{(p_1+p_1^*)(p_2+p_2^*)}{|p_1+p_2^*|^2 c_{2,2}}}
    \end{align*}

    \item After collision, i.e., \(t \rightarrow + \infty\), this gives

    Soliton 1 (\(\xi_1 + \xi_1^* \approx 0,\ \xi_2 + \xi_2^* \rightarrow + \infty\))
    \begin{align*}
        u_1 &\simeq C_1 \frac{(p_1+p_1^*)(p_1 - p_2)}{(p_1+p_2^*) c_{1,1}} \frac{\exp(\i \Im(\xi_1))}{2\theta_2} \sech \left(\Re (\xi_1) + \log\left(\theta_2\right)\right), \\
        u_2 &\simeq D_1 \frac{(p_1+p_1^*)(p_1 - p_2)}{(p_1+p_2^*) c_{1,1}} \frac{\exp(\i \Im(\xi_1))}{2\theta_2} \sech \left(\Re (\xi_1) + \log\left(\theta_2\right)\right),
    \end{align*}
    where the phase term \(\theta_2\) is given by
    \begin{align*}
        \theta_2 = \sqrt{\frac{1}{c_{1,1}} - \frac{(p_1+p_1^*)(p_2+p_2^*)}{|p_2+p_1^*|^2 c_{1,1}}.}
    \end{align*}

    Soliton 2 (\(\xi_2 + \xi_2^* \approx 0,\ \xi_1 + \xi_1^* \rightarrow - \infty\))

    \begin{align*}
        u_1 &\simeq \frac{(p_1+p_1^*)(p_2+p_2^*)}{c_{1,1}} \left(\frac{C_2 c_{1,1}}{p_1 + p_1^*} - \frac{C_1 c_{1,2}}{p_1 + p_2^*}\right) \frac{\exp(\i \Im(\xi_2))}{2 \varphi_2} \sech \left(\Re (\xi_2) - \log\left(\varphi_2\right)\right), \\
        u_2 &\simeq \frac{(p_1+p_1^*)(p_2+p_2^*)}{c_{1,1}} \left(\frac{D_2 c_{1,1}}{p_1 + p_1^*} - \frac{D_1 c_{1,2}}{p_1 + p_2^*}\right) \frac{\exp(\i \Im(\xi_2))}{2 \varphi_2} \sech \left(\Re (\xi_2) - \log\left(\varphi_2\right)\right),
    \end{align*}
    where the phase term \(\varphi_2\) is given by
    \begin{align*}
        \varphi_2 = \sqrt{c_{2,2} - \frac{(p_1+p_1^*)(p_2+p_2^*)|c_{1,2}|^2}{|p_1+p_2^*|^2 c_{1,1}}}.
    \end{align*}

\end{enumerate}
It is observed that, as depicted in \cref{figure:bb_N=2}, a phase shift occurs after the collision of the two solitons, given that the phases term \(-\log(\theta_1) \neq \log(\theta_2)\) and \(-\log(\varphi_1) \neq \log(\varphi_2)\). The aforementioned asymptotic analysis also suggests that setting one of \(C_1, C_2, D_1, D_2\) to zero results in a Y-shaped behavior in the soliton solution. This is because one of the solitons before or after the collision becomes zero, as illustrated in \cref{figure:bb_N=2_Y-shape}.
\begin{figure}[!ht]
    \centering
    \subfigure[]{
        \includegraphics[width=60mm]{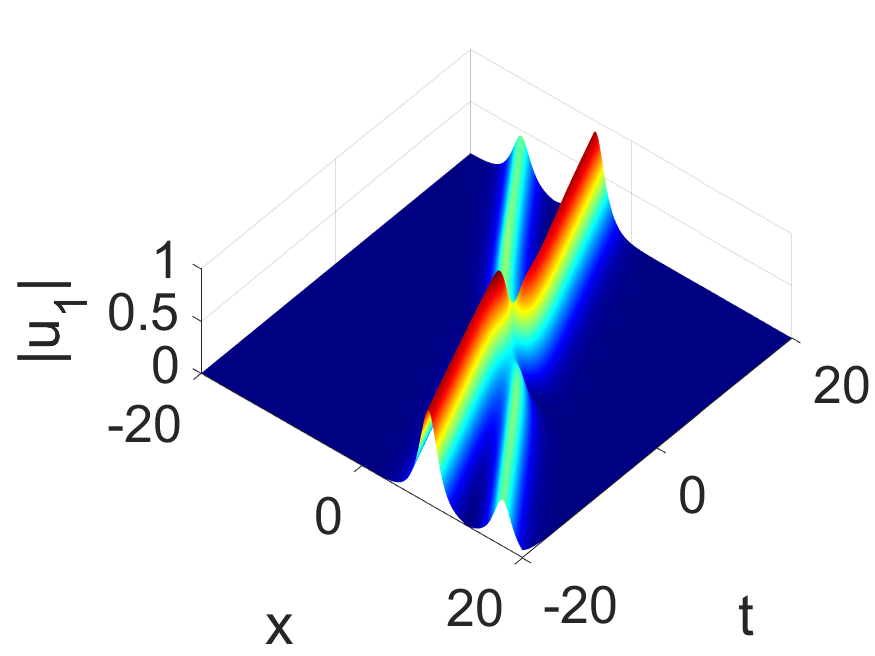}
    }
    \subfigure[]{
        \includegraphics[width=60mm]{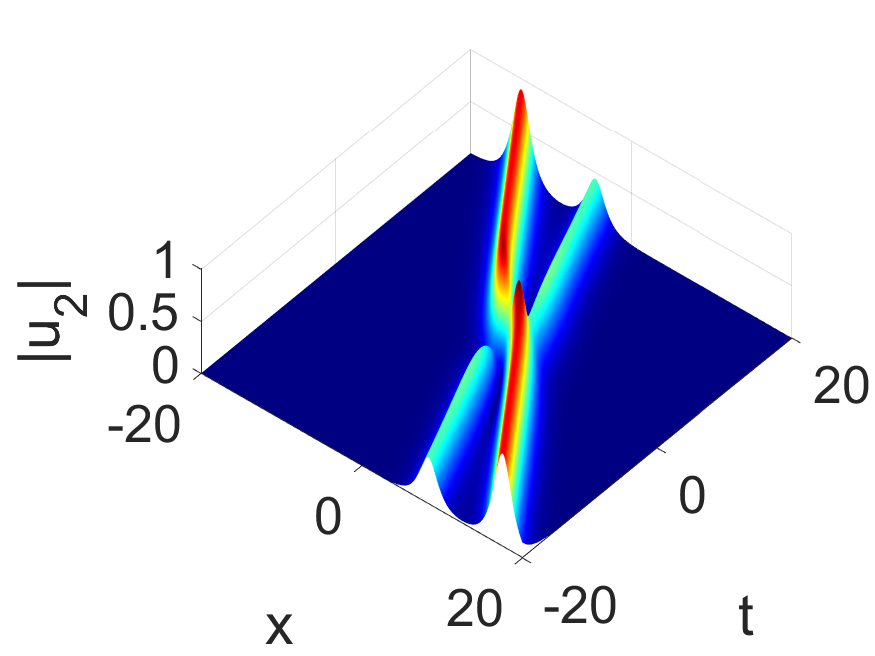}
    }
    \caption{Two-bright soliton solution to the cHirota equation with parameters \(p_1=1+\i /\sqrt{5}, p_2 = 1+\i /\sqrt{10}, C_1 = 2, D_1 = 1, C_2 = 1, D_2 = 2, \xi_{1,0} = \xi_{2,0} = 0, \varepsilon_1 = \varepsilon_2 = -1\).}\label{figure:bb_N=2}
\end{figure}

\begin{figure}[!ht]
    \centering
    \subfigure[]{
        \includegraphics[width=60mm]{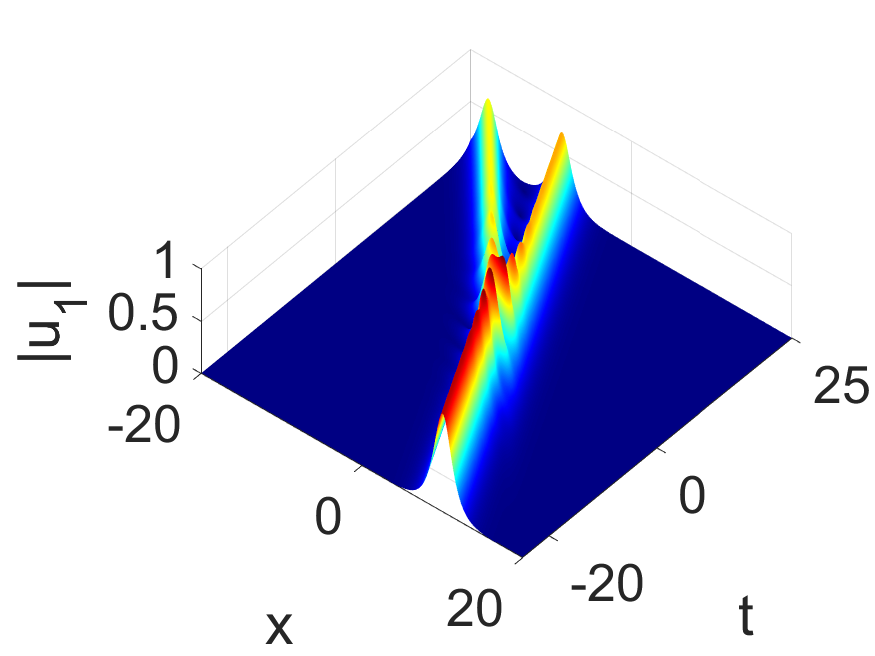}
    }
    \subfigure[]{
        \includegraphics[width=60mm]{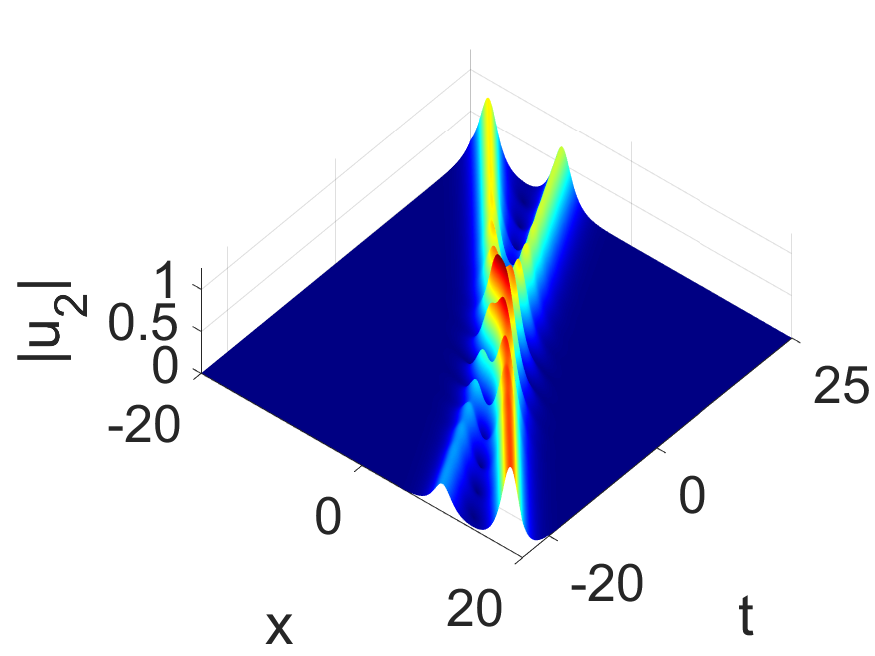}
    }
    \caption{Y-shaped bright soliton solution to the cHirota equation with parameters \(p_1=1-\i /\sqrt{5}, p_2 = 1+\i /\sqrt{10}, C_1 = 1, D_1 = 0, C_2 = 1, D_2 = 1, \xi_{1,0} = \xi_{2,0} = 0, \varepsilon_1 = \varepsilon_2 = -1\).}\label{figure:bb_N=2_Y-shape}
\end{figure}

\section{Dark-Dark soliton solution to the coupled Hirota equation}\label{section:dd_chirota}
The general dark soliton to \eqref{chirota_1}-\eqref{chirota_2} is given by the following theorem.

\begin{theorem}\label{theorem:dd_solution}
Equation \eqref{chirota_1}-\eqref{chirota_2} admits the dark soliton solutions given by
\begin{equation}\label{def_dark_u}
    \begin{split}
        u_1&=\rho_1 \frac{g_1}{f} \exp\left(\i \left(\alpha_1 x - \left(\alpha_1^3+3\alpha_1\left(\varepsilon_1\rho_1^2+\varepsilon_2\rho_2^2\right)+3 (\varepsilon_1\rho_1^2 \alpha_1+\varepsilon_2\rho_2^2 \alpha_2)\right) t\right)\right),\\
        u_2&=\rho_2 \frac{g_2}{f} \exp\left(\i \left(\alpha_2 x - \left(\alpha_2^3+3\alpha_2\left(\varepsilon_1\rho_1^2+\varepsilon_2\rho_2^2\right)+3 (\varepsilon_1\rho_1^2 \alpha_1+\varepsilon_2\rho_2^2 \alpha_2)\right) t\right)\right),
    \end{split}
\end{equation}
and \(f,\ g_1,\ g_2\) are defined as
\begin{equation}
    f=\tau_{0,0}, \quad g_1=\tau_{1,0}, \quad g_2=\tau_{0,1},
\end{equation}
where \(\tau_{k,l}\) is a \(N\times N\) determinant defined as
\begin{equation}
    \tau_{k,l}=\det\left(\delta_{ij}d_i e^{-\xi_i-\eta_j} + \frac{1}{p_i+q_j} \left(-\frac{p_i - \i \alpha_1}{q_j + \i \alpha_1}\right)^k\left(-\frac{p_i - \i \alpha_2}{q_j + \i \alpha_2}\right)^l\right),
\end{equation}
with \(\xi_i=p_i (x - 3(\varepsilon_1\rho_1^2+\varepsilon_2\rho_2^2) t) + p_i^3 t + \xi_{i 0}\), \(\eta_i=q_i (x - 3 (\varepsilon_1\rho_1^2+\varepsilon_2\rho_2^2) t) + q_i^3 t + \xi_{i 0}\). Here \(\alpha_1,\ \alpha_2,\ \rho_1,\ \rho_2\) are real parameters, the parameters \(d_i,\ \xi_{i,0},\ p_i,\ q_i\) satisfy the following complex conjugate relation for each \(h = 0, 1, \ldots, \lfloor N/2 \rfloor\)
\begin{align}\label{dark_complex_relation}
    \begin{split}
        p_i = q_i^*, p_{N+1-i} = q_{N+1-i}^*,\quad d_i,d_{N+1-i} \in \mathbb{R},\quad \xi_{i,0},\xi_{N+1-i} \in \mathbb{R},& \quad \text{for } i \in \{\mathbb{Z} | 1 \leq i \leq h\},\\
        p_i = q_{N+1-i}^*, p_{N+1-i} = q_i^*,\quad d_i = d_{N+1-i} \in \mathbb{R},\quad \xi_{i,0} = \xi_{N+1-i,0} \in \mathbb{R},& \quad \text{for } i \in  \{\mathbb{Z} | h+1 \leq i \leq \lceil N/2 \rceil\}\,.
    \end{split}
\end{align}
Moreover, these parameters need to satisfy the constraint \(G(p_i,q_i) = 0\), for \(i = 1, 2, \ldots, N\), where \(G(p,q)\) defined as
\begin{align}\label{Gpq}
    G(p,q)&=\frac{\varepsilon_1\rho_1^2}{(p_i-\i \alpha_1)(q_i+\i \alpha_1)}+\frac{\varepsilon_2\rho_2^2}{(p_i-\i \alpha_2)(q_i+\i \alpha_2)}-1.
\end{align}
\end{theorem}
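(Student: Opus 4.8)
The plan is to follow the KP-reduction scheme used for \cref{theorem:bb_solution}, now starting from Gram-type \(\tau\)-functions of a two-component KP (Toda) hierarchy adapted to a plane-wave background. First I would substitute the ansatz \eqref{def_dark_u} into \eqref{chirota_1}-\eqref{chirota_2}. Factoring out the phases \(\exp(\i(\alpha_j x - \omega_j t))\) and demanding that the bare background \(u_j = \rho_j \exp(\i(\alpha_j x - \omega_j t))\) itself solve the system fixes the frequencies \(\omega_j\) exactly as written in \eqref{def_dark_u}; the surviving terms then organize into a set of Hirota bilinear equations in \(D_x\) and \(D_t\) relating \(f=\tau_{0,0}\), \(g_1=\tau_{1,0}\), \(g_2=\tau_{0,1}\) and their complex conjugates. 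This reduces the theorem to two tasks: showing that the shifted determinants \(\tau_{k,l}\) obey the bilinear identities of the hierarchy, and then carrying out the reductions so that \(\tau_{0,0},\tau_{1,0},\tau_{0,1}\) satisfy precisely those bilinear equations.

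Second, I would record the bilinear identities satisfied by \(\tau_{k,l}\) as functions of \(x\), \(t\) and the lattice indices \(k,l\). Because each matrix entry is the Cauchy kernel \(\frac{1}{p_i+q_j}\) dressed by the shift factors \(\left(-\frac{p_i-\i\alpha_1}{q_j+\i\alpha_1}\right)^{k}\left(-\frac{p_i-\i\alpha_2}{q_j+\i\alpha_2}\right)^{l}\) together with a diagonal exponential, differentiation in \(x\) and \(t\) brings down powers of \(p_i\) and \(q_j\), while the shifts \(k\to k+1\) and \(l\to l+1\) multiply column \(j\) by the corresponding factor. The standard differentiation-of-determinant (Sato) lemmas, used exactly as in the bright case, then yield the bilinear equations coupling \(\tau_{k,l}\) to its nearest neighbors, with the one-shift functions \(\tau_{1,0}\) and \(\tau_{0,1}\) playing the role of the two dark components.

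Third --- and this is where I expect the main difficulty --- comes the reduction realized by the constraint \(G(p_i,q_i)=0\) of \eqref{Gpq}. The generic \(\tau_{k,l}\) solves a two-directional (Toda-type) bilinear system, and collapsing it onto a single evolution equation in \(x,t\) requires that an auxiliary flow act on \(\tau_{k,l}\) as a single constant multiple for every mode. The quantity \(\frac{\varepsilon_1\rho_1^2}{(p-\i\alpha_1)(q+\i\alpha_1)}+\frac{\varepsilon_2\rho_2^2}{(p-\i\alpha_2)(q+\i\alpha_2)}\) is precisely the value taken by that flow on the \((p,q)\)-mode, so imposing \(G(p_i,q_i)=0\) forces all modes to share the same eigenvalue and renders the extra direction a dummy variable; the delicate point is to check that \(G=0\) genuinely annihilates the reducing operator rather than merely simplifying it, since it ties \(p_i\) and \(q_i\) together on an algebraic curve instead of independently. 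Finally I would impose the conjugation relations \eqref{dark_complex_relation}: the two families indexed by \(h\) encode the two admissible pairings \(p_i=q_i^*\) and \(p_i=q_{N+1-i}^*\) that render \(f\) real and positive, give \(g_1,g_2\) the correct conjugation, and keep \(f\neq 0\) so the dark soliton stays nonsingular. With these reductions in place the bilinear equations for \(\tau_{0,0},\tau_{1,0},\tau_{0,1}\) coincide with those obtained in the first step, completing the proof of \cref{theorem:dd_solution}.
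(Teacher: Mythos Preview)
Your plan is correct and mirrors the paper's proof: bilinearize via the plane-wave ansatz, embed the \(\tau\)-functions in a KP--Toda hierarchy with auxiliary flows (the paper uses variables \(y,r,s\) in addition to \(x,t\), so the bilinear identities in your Step~2 are initially in \(D_x,D_y,D_t,D_r,D_s\), not just \(D_x,D_t\)), impose \(G(p_i,q_i)=0\) so that \(\varepsilon_1\rho_1^2\partial_r+\varepsilon_2\rho_2^2\partial_s=\partial_x\) collapses the extra directions (together with a Galilean shift \(x\mapsto x-3(\varepsilon_1\rho_1^2+\varepsilon_2\rho_2^2)t\) to eliminate \(y\)), and finally impose the pairings \eqref{dark_complex_relation} to obtain \(\tau_{k,l}^*=\tau_{-k,-l}\). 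Two small caveats: the bilinearized cHirota system also carries an auxiliary function \(s_{12}=\tau_{1,1}\) that you must track alongside \(f,g_1,g_2\), and the conjugation step yields \(f\in\mathbb{R}\) but not positivity---nonsingularity is not part of the theorem as stated.
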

\subsection{Derivation of dark-dark soliton solution to the coupled Hirota equation}
Under the dependent variable transformations 
\begin{equation}
u_1=\rho_1(g_1/f)\exp(\i (\alpha_1 x - \omega_1 t)), \quad u_2=\rho_2(g_2/f)\exp(\i (\alpha_2 x - \omega_2 t))    
\end{equation}
with \begin{equation}\label{w_j}
\omega_1=\alpha_1^3+3\alpha_1\left(\varepsilon_1\rho_1^2+\varepsilon_2\rho_2^2\right)+3(\varepsilon_1\rho_1^2\alpha_1+\varepsilon_2\rho_2^2\alpha_2),\quad
\omega_2=\alpha_2^3+3\alpha_2\left(\varepsilon_1\rho_1^2+\varepsilon_2\rho_2^2\right)+3(\varepsilon_1\rho_1^2\alpha_1+\varepsilon_2\rho_2^2\alpha_2).
\end{equation}
it can be shown that the coupled Hirtota equation \eqref{chirota_1}--\eqref{chirota_2} can converted into a set of bilinear equations
\begin{align}
    \begin{split}
        &\left[D_x^3 - D_t +3\i \alpha_1 D_x^2-3 \left(\alpha_1^2 +2\varepsilon_1 \rho_1^2+2\varepsilon_2 \rho_2^2\right)D_x - 3\i \alpha_1 (\varepsilon_1\rho_1^2+\varepsilon_2\rho_2^2) + 3\i (\varepsilon_1\rho_1^2 \alpha_1+\varepsilon_2\rho_2^2 \alpha_2)\right]g_1\cdot f\\
        &\qquad = - 3\i \varepsilon_2 (\alpha_1-\alpha_2) \rho_2^2 s_{12} g_2^* ,\\
    \end{split}\label{vcmKdVDE_BL_1}\\
    \begin{split}
        &\left[D_x^3 - D_t +3\i \alpha_2 D_x^2-3 \left(\alpha_1^2 +2\varepsilon_1 \rho_1^2+2\varepsilon_2 \rho_2^2\right)D_x - 3\i \alpha_2 (\varepsilon_1\rho_1^2+\varepsilon_2\rho_2^2) + 3\i  (\varepsilon_1\rho_1^2 \alpha_1+\varepsilon_2\rho_2^2 \alpha_2)\right]g_2\cdot f\\
        &\qquad = - 3\i \varepsilon_1 (\alpha_2-\alpha_1) \rho_1^2 s_{12} g_1^* ,\\
    \end{split}\label{vcmKdVDE_BL_2}\\
    &\left[D_x  + \i (\alpha_1-\alpha_2)\right]g_1 \cdot g_2=\i (\alpha_1-\alpha_2)s_{12} f, \label{vcmKdVDE_BL_3}\\
    &\left(D_x^2-2\varepsilon_1 \rho_1^2-2\varepsilon_2 \rho_2^2\right) f \cdot f + 2\varepsilon_1 \rho_1^2 |g_1|^2 + 2\varepsilon_2 \rho_2^2 |g_2|^2=0.\label{vcmKdVDE_BL_4}
\end{align}

To deduce the dark soliton solution, we start with the \(\tau\)-function defined as
\begin{equation}\label{tau_dark}
    \tau_{k,l}=\det\left(m_{i j}^{k,l}\right)_{1 \leq i, j \leq N},
\end{equation}
where \(k,l \in \mathbb{Z}\). \(N\) is a positive integer. And the matrix element is defined as
\begin{align*}
    & m_{i j}^{k,l}=d_{ij} + \frac{e^{\xi_i+\eta_j}}{p_i+q_j} \left(-\frac{p_i-a}{q_j+a}\right)^k \left(-\frac{p_i-b}{q_j+b}\right)^l, \\
    & \xi_i=p_i x+p_i^2 y+p_i^3 t+\frac{1}{p_i-a} r+\frac{1}{p_i-b} s+\xi_{i 0}, \\
    & \eta_i=q_i x-q_i^2 y+q_i^3 t+\frac{1}{q_i+a} r+\frac{1}{q_i+b} s+\eta_{i 0}.
\end{align*}
Here \(d_{ij},\ p_i,\ q_j,\ \xi_{i 0},\ \eta_{j 0}\), and \(a_n\) are constants.

It is shown by Ohta \cite{ohta2010dark} that \(\tau_{k,l}\) satisfies a set of bilinear equations below 
\begin{align}
    & \left(D_r D_x-2\right) \tau_{k,l} \cdot \tau_{k,l}=-2 \tau_{k+1,l} \tau_{k-1,l}, \label{KP_dark_1} \\
    & \left(D_s D_x-2\right) \tau_{k,l} \cdot \tau_{k,l}=-2 \tau_{k,l+1} \tau_{k,l-1}, \label{KP_dark_2} \\
    & \left(D_x^2-D_y+2 a D_x\right) \tau_{k+1,l} \cdot \tau_{k,l}=0, \label{KP_dark_3}\\
    & \left(D_x^2-D_y+2 b D_x\right) \tau_{k,l+1} \cdot \tau_{k,l}=0, \label{KP_dark_4}\\
    & \left(D_x^3+3 D_x D_y-4 D_t+3 a\left(D_x^2+D_y\right)+6 a^2 D_x\right) \tau_{k+1,l} \cdot \tau_{k,l}=0, \label{KP_dark_5}\\
    & \left(D_x^3+3 D_x D_y-4 D_t+3 b\left(D_x^2+D_y\right)+6 b^2 D_x\right) \tau_{k,l+1} \cdot \tau_{k,l}=0, \label{KP_dark_6}\\
    & \left(D_s\left(D_x^2-D_y+2 a D_x\right)-4\left(D_x+a-b\right)\right) \tau_{k+1,l} \cdot \tau_{k,l} + 4(a-b) \tau_{k+1,l+1} \cdot \tau_{k,l-1}=0, \label{KP_dark_7}\\
    & \left(D_r\left(D_x^2-D_y+2 a D_x\right)-4D_x\right) \tau_{k+1,l} \cdot \tau_{k,l} =0, \label{KP_dark_8}\\
    & \left(D_r\left(D_x^2-D_y+2 b D_x\right)-4\left(D_x+b-a\right)\right) \tau_{k,l+1} \cdot \tau_{k,l} + 4(b-a) \tau_{k+1,l+1} \cdot \tau_{k-1,l}=0, \label{KP_dark_9}\\
    & \left(D_s\left(D_x^2-D_y+2 b D_x\right)-4D_x\right) \tau_{k,l+1} \cdot \tau_{k,l}=0, \label{KP_dark_10}\\
    & \left(D_x+a-b\right) \tau_{k+1,l} \cdot \tau_{k,l+1}=(a-b) \tau_{k+1,l+1}  \tau_{k,l}\,.\label{KP_dark_11}
\end{align}


The following lemma gives the an algebraic condition to achieve the dimension reduction.
\begin{lemma}\label{lemma:dim_rd_dark}
Take \(d_{ij} = \delta_{ij} d_i\), where \(d_i\) is a constant, and \(\delta_{ij}\) is the Kronecker delta, then under the condition
\begin{equation}
    \frac{\varepsilon_1\rho_1^2}{(p_i-a)(q_i+a)}+\frac{\varepsilon_2\rho_2^2}{(p_i-b)(q_i+b)}=1
\end{equation}
we have
\begin{equation}\label{dmr_dark}
    \left(\varepsilon_1\rho_1^2 \partial_r+\varepsilon_2\rho_2^2 \partial_s\right)\tau_{k,l} = \partial_x\tau_{k,l}.
\end{equation}
\end{lemma}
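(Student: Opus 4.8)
The plan is to reduce the claim to showing that the single first-order operator
\[
\mathcal{L} := \varepsilon_1\rho_1^2\,\partial_r + \varepsilon_2\rho_2^2\,\partial_s - \partial_x
\]
annihilates $\tau_{k,l}$ for every $k,l$, which is exactly \eqref{dmr_dark}. The flow variables $x,r,s$ enter each entry $m_{ij}^{k,l}$ only through the exponential $e^{\xi_i+\eta_j}$, since $d_{ij}$ and the rational prefactors $\tfrac{1}{p_i+q_j}$, $\bigl(-\tfrac{p_i-a}{q_j+a}\bigr)^k$, $\bigl(-\tfrac{p_i-b}{q_j+b}\bigr)^l$ do not depend on them. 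So I would first record the entrywise action. Using $\partial_x(\xi_i+\eta_j)=p_i+q_j$, $\partial_r(\xi_i+\eta_j)=\tfrac{1}{p_i-a}+\tfrac{1}{q_j+a}$ and $\partial_s(\xi_i+\eta_j)=\tfrac{1}{p_i-b}+\tfrac{1}{q_j+b}$, one gets
\[
\mathcal{L}\,m_{ij}^{k,l} = (A_i + B_j)\bigl(m_{ij}^{k,l} - d_{ij}\bigr),
\]
where $A_i := \varepsilon_1\rho_1^2\tfrac{1}{p_i-a}+\varepsilon_2\rho_2^2\tfrac{1}{p_i-b}-p_i$ and $B_j := \varepsilon_1\rho_1^2\tfrac{1}{q_j+a}+\varepsilon_2\rho_2^2\tfrac{1}{q_j+b}-q_j$. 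The crucial feature is that the coefficient splits into a row-only piece $A_i$ and a column-only piece $B_j$; this separation is what lets the determinant identity close.

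The key algebraic step is to interpret the reduction hypothesis. Since $d_{ij}=\delta_{ij}d_i$, the correction $-(A_i+B_j)d_{ij}$ survives only on the diagonal $i=j$, so everything hinges on $A_i+B_i$. Combining partial fractions via $\tfrac{1}{p_i-a}+\tfrac{1}{q_i+a}=\tfrac{p_i+q_i}{(p_i-a)(q_i+a)}$ and the analogous identity for $b$, I would factor
\[
A_i+B_i = (p_i+q_i)\left(\frac{\varepsilon_1\rho_1^2}{(p_i-a)(q_i+a)}+\frac{\varepsilon_2\rho_2^2}{(p_i-b)(q_i+b)}-1\right),
\]
so the hypothesis of the lemma (equivalently the constraint \eqref{Gpq} after $a=\i\alpha_1$, $b=\i\alpha_2$) is precisely $A_i+B_i=0$ for every $i$. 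Hence the diagonal correction drops out and $\mathcal{L}\,m_{ij}^{k,l}=(A_i+B_j)\,m_{ij}^{k,l}$ holds for all $i,j$.

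It then remains to differentiate the determinant. Expanding $\tau_{k,l}=\det\bigl(m_{ij}^{k,l}\bigr)$ by the cofactor rule gives
\[
\mathcal{L}\,\tau_{k,l} = \sum_{i,j}\Delta_{ij}\,\mathcal{L}\,m_{ij}^{k,l} = \sum_{i,j}\Delta_{ij}(A_i+B_j)\,m_{ij}^{k,l},
\]
with $\Delta_{ij}$ the $(i,j)$ cofactor of $(m_{ij}^{k,l})$. Splitting the summand and invoking the Laplace expansions $\sum_j \Delta_{ij}m_{ij}^{k,l}=\tau_{k,l}$ (along row $i$) and $\sum_i \Delta_{ij}m_{ij}^{k,l}=\tau_{k,l}$ (along column $j$) collapses this to $\bigl(\sum_i A_i + \sum_j B_j\bigr)\tau_{k,l}=\bigl(\sum_{i=1}^N(A_i+B_i)\bigr)\tau_{k,l}$, which vanishes by the previous step. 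This establishes $\mathcal{L}\tau_{k,l}=0$, i.e.\ \eqref{dmr_dark}, and the same computation applies verbatim to $g_1=\tau_{1,0}$, $g_2=\tau_{0,1}$ and $f=\tau_{0,0}$.

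The only genuinely delicate point is the bookkeeping around the constant diagonal $d_{ij}=\delta_{ij}d_i$: the single scalar condition $A_i+B_i=0$ must do double duty, simultaneously cancelling the inhomogeneous diagonal contribution in the entrywise derivative and forcing the eigenvalue sum $\sum_i(A_i+B_i)$ to vanish. Recognizing that the reduction constraint delivers both at once, and that this is the manifestation of the partial-fraction factorization above, is the heart of the argument; the determinant differentiation itself is routine once the entrywise action has been written in the separated form $(A_i+B_j)\,m_{ij}^{k,l}$.
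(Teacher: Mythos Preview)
Your proof is correct. It differs from the paper's in one tactical choice: the paper first factors the global exponential $\prod_n e^{\xi_n+\eta_n}$ out of $\tau_{k,l}$, so that the off-diagonal entries of the reduced matrix $\mathfrak{m}_{ij}^{k,l}=\delta_{ij}d_i e^{-\xi_i-\eta_i}+\tfrac{1}{p_i+q_j}(\cdots)^k(\cdots)^l$ become constants and $\mathcal{L}$ acts only on the diagonal terms; the hypothesis $G(p_i,q_i)=0$ then kills each entry's derivative outright, and the cofactor sum is trivially zero. You instead keep the original entries, obtain the multiplicative action $\mathcal{L}\,m_{ij}^{k,l}=(A_i+B_j)\,m_{ij}^{k,l}$, and invoke the Laplace expansions along rows and columns to collapse the cofactor sum to $\bigl(\sum_i(A_i+B_i)\bigr)\tau_{k,l}$. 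Your route avoids the gauge-factoring step and proves the identity directly for $\tau_{k,l}$ itself, at the mild cost of needing the row/column separation trick; the paper's route makes the entrywise vanishing more transparent once the factoring is accepted. Both rest on the same partial-fraction identity $A_i+B_i=(p_i+q_i)\,G(p_i,q_i)$.
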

\begin{proof}
To begin, by setting \(d_{ij} = \delta_{ij} d_i\), our \(\tau\)-functions are now expressed as
\begin{align*}
    &\tau_{k,l}=\prod_{n=1}^{N} e^{\xi_n+\eta_n} \det\left(\mathfrak{m}_{i j}^{k,l} \right),
    \\ 
    &\mathfrak{m}_{i j}^{k,l}= \delta_{ij} d_i e^{-\xi_i-\eta_i} + \frac{1}{p_i+q_j} \left(-\frac{p_i-a}{q_j+a}\right)^k \left(-\frac{p_i-b}{q_j+b}\right)^l.
\end{align*}
The term \(\prod_{n=1}^{N} e^{\xi_i+\eta_i}\) can be dropped in \eqref{KP_dark_1}-\eqref{KP_dark_5} due to the property of the \(D\)-operator in the bilinear equations. 
Note that 
\begin{align}
    \left(\varepsilon_1\rho_1^2 \partial_r+\varepsilon_2\rho_2^2 \partial_s - \partial_x\right)\mathfrak{m}_{i j}^{k,l}&=(p_i+q_i) G(p_i,q_i)\delta_{ij} d_i e^{-\xi_i-\eta_i},\label{diff_rel_dark}\\
    G(p,q)&=\frac{\varepsilon_1\rho_1^2}{(p_i-a)(q_i+a)}+\frac{\varepsilon_2\rho_2^2}{(p_i-b)(q_i+b)}-1.
\end{align}
Hence, if the right-hand side of \eqref{diff_rel_dark} is zero, the dimension reduction condition \eqref{dmr_dark} is satisfied. To archive this, let us consider the case of \(i \neq j\), which results \(\delta_{ij} = 0\), and this leads the right-hand side of \eqref{diff_rel_dark} to be zero. On the other hand, for the case of \(i = j\), since \(\delta_{ii} = 1\), we need to require \(G(p_i,q_i) = 0\). This implies
\begin{equation*}
    \frac{\varepsilon_1\rho_1^2}{(p_i-a)(q_i+a)}+\frac{\varepsilon_2\rho_2^2}{(p_i-b)(q_i+b)}=1.
\end{equation*}
Note that we are not taking \(p_i+q_i=0\) since this term is the denominator of diagonal elements of the determinant of \(\tau\)-function. Overall, we have
\begin{equation*}
    \left(\varepsilon_1 \rho_1^2 \partial_r + \varepsilon_2 \rho_2^2 \partial_s\right)\tau_{\mathrm{n}}=\sum_{i,j=1}^{N} \Delta_{ij}\left(\varepsilon_1 \rho_1^2 \partial_r + \varepsilon_2 \rho_2^2 \partial_s\right)\mathfrak{m}_{ij}^{\mathrm{n}}=\sum_{i,j=1}^{N} \partial_x\mathfrak{m}_{ij}^{\mathrm{n}}=\partial_x\tau_{\mathrm{n}}.
\end{equation*}
Which is nothing but \eqref{dmr_dark}.
\end{proof}

With the help of \cref{lemma:dim_rd_dark}, \eqref{KP_dark_1}-\eqref{KP_dark_2}, \eqref{KP_dark_7}-\eqref{KP_dark_8} and \eqref{KP_dark_9}-\eqref{KP_dark_10} lead to
\begin{align}
    & \left(D^2_x - 2\varepsilon_1 \rho_1^2 - 2\varepsilon_2 \rho_2^2\right)\tau_{k,l} \cdot \tau_{k,l} =-2\varepsilon_1 \rho_1^2\tau_{k+1,l} \tau_{k-1,l}-2\varepsilon_2 \rho_2^2\tau_{k,l+1} \tau_{k,l-1}, \label{KP_dark_1.1} \\
    \begin{split}
        & \left(D_x^3-D_x D_y+2 a D_x^2 -4(\varepsilon_1\rho_1^2 + \varepsilon_2\rho_2^2)D_x -4\varepsilon_2\rho_2^2\left(a-b\right)\right) \tau_{k+1,l} \cdot \tau_{k,l} \\
        &\qquad + 4 \varepsilon_2 \rho_2^2 (a-b) \tau_{k+1,l+1} \cdot \tau_{k,l-1}=0 \label{KP_dark_7.1}
    \end{split}\\
    \begin{split}
        & \left(D_x^3-D_x D_y+2 b D_x^2 -4(\varepsilon_1\rho_1^2 + \varepsilon_2\rho_2^2)D_x -4\varepsilon_1\rho_1^2\left(b-a\right)\right) \tau_{k,l+1} \cdot \tau_{k,l} \\
        &\qquad + 4 \varepsilon_1 \rho_1^2 (b-a) \tau_{k+1,l+1} \cdot \tau_{k-1,l}=0\,. \label{KP_dark_9.1}
    \end{split}
\end{align}
Next, with the use of \eqref{KP_dark_3}--\eqref{KP_dark_6} and the gauge transformation \(x\to x - 3(\varepsilon_1\rho_1^2 + \varepsilon_2\rho_2^2) t,\ t\to t\), \cref{KP_dark_7.1,KP_dark_9.1} become  
\begin{align}
    \begin{split}
        & \left(D_x^3 - D_t + 3 a D_x^2 + 3 \left(a^2 - 2\varepsilon_1 \rho_1^2 - 2\varepsilon_2 \rho_2^2\right) D_x - 3 a (\varepsilon_1\rho_1^2+\varepsilon_2\rho_2^2) + 3 (\varepsilon_1\rho_1^2 a+ \varepsilon_2 \rho_2^2 b) \right) \tau_{k+1,l} \cdot \tau_{k,l} \\
        &\qquad+ 3 \varepsilon_2 \rho_2^2 (a-b) \tau_{k+1,l+1} \cdot \tau_{k,l-1}=0,\label{KP_dark_5.1}
    \end{split}\\
    \begin{split}
        & \left(D_x^3 - D_t + 3 b D_x^2 + 3 \left(b^2 - 2\varepsilon_1 \rho_1^2 - 2\varepsilon_2 \rho_2^2\right) D_x - 3 b (\varepsilon_1\rho_1^2+\varepsilon_2\rho_2^2) + 3 (\varepsilon_1 \rho_1^2 a+ \varepsilon_2\rho_2^2 b) \right) \tau_{k,l+1} \cdot \tau_{k,l} \\
        &\qquad+ 3 \varepsilon_1 \rho_1^2 (b-a) \tau_{k+1,l+1} \cdot \tau_{k-1,l}=0\,,\label{KP_dark_6.1}
    \end{split}
\end{align}
respectively. With the completion of the dimension reduction, we can then remove the dummy variables by setting them to zero, i.e., take \(y=r=s=0\).

The following lemma gives 
the complex conjugate condition. 

\begin{lemma}\label{lemma:cmp_rd_dark}
For \(h = 0, 1, \ldots, \lfloor N/2 \rfloor\), define the following index sets, 
\begin{align}\label{index_sets}
    \begin{split}
        I_1(h) &= \{i \in \mathbb{Z} | 1 \leq i \leq h\}, \\ 
        I_2(h) &= \{i \in \mathbb{Z} | h+1 \leq i \leq \lceil N/2 \rceil\}, \\
        I_3(h) &= \{N+1-i \in \mathbb{Z} | h+1 \leq i \leq \lceil N/2 \rceil\}, \\
        I_4(h) &= \{N+1-i \in \mathbb{Z} | 1 \leq i \leq h\},
    \end{split}
\end{align}
therefore \(I_1(h) \cup I_2(h) \cup I_3(h) \cup I_4(h) = \{i \in \mathbb{Z} | 1 \leq i \leq N\}\). Under the dimension reduction \cref{lemma:dim_rd_dark}, by taking the following restrictions,
\begin{align}\label{aby_dark}
\begin{split}
    p_i = q_i^*,\ d_i \in \mathbb{R},\ \eta_{i,0} = \xi_{i,0} \in \mathbb{R},& \quad \text{for} \ i \in I_1(h) \cup I_4(h),\\
    p_i = q_{N+1-i}^*,\ p_{N+1-i} = q_i^*,\ d_i = d_{N+1-i} \in \mathbb{R},\ \eta_{i,0} = \xi_{N+1-i,0} \in \mathbb{R},& \quad \text{for} \ i \in I_2(h),\\
    a = \i \alpha_1 \in \i \mathbb{R}, \ b = \i \alpha_2 \in \i \mathbb{R},&
\end{split}
\end{align}
we have the \(\tau\)-function \eqref{tau_dark} that satisfies
\begin{equation}\label{cmp_relation_dark}
    \tau_{k,l}^* = \tau_{-k,-l}.
\end{equation}
\end{lemma}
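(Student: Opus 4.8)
The plan is to produce an index involution $\sigma$ of $\{1,\dots,N\}$ under which the complex conjugate of the $(i,j)$ entry of the matrix in \eqref{tau_dark} equals its $(\sigma(j),\sigma(i))$ entry with $(k,l)$ replaced by $(-k,-l)$; the identity \eqref{cmp_relation_dark} then follows because simultaneously permuting rows and columns and then transposing leaves a determinant unchanged. Since $a=\i\alpha_1$ and $b=\i\alpha_2$ are purely imaginary, $a^*=-a$ and $b^*=-b$. I define $\sigma$ by $\sigma(i)=i$ on $I_1(h)\cup I_4(h)$ and $\sigma(i)=N+1-i$ on $I_2(h)\cup I_3(h)$, which by \eqref{index_sets} is a well-defined involution of $\{1,\dots,N\}$; the restrictions \eqref{aby_dark} then read uniformly as $p_i^*=q_{\sigma(i)}$, $q_i^*=p_{\sigma(i)}$ and $d_i=d_{\sigma(i)}\in\mathbb{R}$ for every $i$.

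Next I would establish the phase relations $\xi_i^*=\eta_{\sigma(i)}$ and $\eta_i^*=\xi_{\sigma(i)}$. Because $x$, $t$ and the coefficient $3(\varepsilon_1\rho_1^2+\varepsilon_2\rho_2^2)$ are real and the phase constants $\xi_{i,0},\eta_{i,0}$ are real, conjugating $\xi_i$ merely conjugates $p_i$ and $\xi_{i,0}$: for $i\in I_1(h)\cup I_4(h)$ this yields $\xi_i^*=\eta_i$ directly from $p_i=q_i^*$ and $\eta_{i,0}=\xi_{i,0}$, while for $i\in I_2(h)$ it yields $\xi_i^*=\eta_{N+1-i}$ from $p_i=q_{N+1-i}^*$ and $\eta_{i,0}=\xi_{N+1-i,0}$, so that in both cases the index is carried exactly by $\sigma$.

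With these relations the conjugation of a single entry is mechanical: conjugating $m_{ij}^{k,l}$, substituting $a^*=-a$, $b^*=-b$, $d_i^*=d_{\sigma(i)}$ and the phase relations, and then using the elementary identity $(-X)^{-k}=(-X^{-1})^k$ to flip the two exponents, I expect to land exactly on $m_{\sigma(j)\sigma(i)}^{-k,-l}$. Consequently $\tau_{k,l}^*=\det\!\big(m_{\sigma(j)\sigma(i)}^{-k,-l}\big)$, and since this matrix is obtained from the defining matrix of $\tau_{-k,-l}$ by transposition followed by the common row and column permutation $\sigma$ (whose sign cancels against itself), its determinant equals $\tau_{-k,-l}$, which is precisely \eqref{cmp_relation_dark}.

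The step I expect to be the main obstacle is the phase and index bookkeeping in the mixed block $I_2(h)\cup I_3(h)$, where conjugation swaps $i$ with $N+1-i$: one must check that the exponential $e^{\xi_i+\eta_j}$, the denominator $p_i+q_j$, and both rational factors $\big(-\tfrac{p_i-a}{q_j+a}\big)^k$ and $\big(-\tfrac{p_i-b}{q_j+b}\big)^l$ transform consistently under $\sigma$, so that the Kronecker term $\delta_{ij}d_i$ and the rank-one part match simultaneously. Here the involutivity of $\sigma$ is exactly what guarantees $\delta_{\sigma(i)\sigma(j)}=\delta_{ij}$ and $d_{\sigma(i)}=d_{\sigma(j)}$ on the diagonal, keeping the whole scheme self-consistent.
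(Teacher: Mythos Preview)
Your proposal is correct and follows essentially the same route as the paper: the paper also verifies entrywise that $(\mathfrak{m}_{ij}^{k,l})^*=\mathfrak{m}_{\sigma(j)\sigma(i)}^{-k,-l}$ (carrying out the four index-block cases separately rather than packaging them into a single involution $\sigma$) and then applies the same simultaneous row/column swap on $I_2\cup I_3$ together with a transpose to conclude. Your uniform formulation via $\sigma$ is cleaner bookkeeping but not a different argument.
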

\begin{proof}
By the restriction \eqref{aby_dark}, we have 
\begin{align*}
    \eta_i^* &= q_i^* \left(x - 3 (\varepsilon_1\rho_1^2 + \varepsilon_2\rho_2^2) t\right) + \left(q_i^*\right)^3 t + \eta_{i 0}^* \\
    &= p_i \left(x - 3 (\varepsilon_1\rho_1^2 + \varepsilon_2\rho_2^2) t\right) + p_i^3 t + \xi_{i 0} = \xi_i,& \quad \text{for} \quad i \in I_1(h) \cup I_4(h),\\
    \eta_{N+i}^* &= q_{N+i}^* \left(x - 3 (\varepsilon_1\rho_1^2 + \varepsilon_2\rho_2^2) t\right) + \left(q_{N+i}^*\right)^3 t + \eta_{N+1-i, 0}^* \\
    &= p_i \left(x - 3 (\varepsilon_1\rho_1^2 + \varepsilon_2\rho_2^2) t\right) + p_i^3 t + \xi_{i 0} = \xi_i,& \quad \text{for} \quad i \in I_2(h),\\
    \eta_i^* &= q_i^* \left(x - 3 (\varepsilon_1\rho_1^2 + \varepsilon_2\rho_2^2) t\right) + \left(q_i^*\right)^3 t + \eta_{i0}^* \\
    &= p_{N+i} \left(x - 3 (\varepsilon_1\rho_1^2 + \varepsilon_2\rho_2^2) t\right) + \left(p_{N+i}\right)^3 t + \xi_{N+1-i, 0} = \xi_{N+i},& \quad \text{for} \quad i \in I_2(h).
\end{align*}
Note that
\begin{align*}
    \tau_{k,l}^* = \det \left(\left(\mathfrak{m}_{ij}^{k,l}\right)^*\right)&= \det\left(\delta_{ij} d_i e^{-\xi_i^*-\eta_j^*} + \dfrac{1}{p_i^* + q_j^*}\left(-\frac{p_i^* + \i \alpha_1}{q_j^* - \i \alpha_1}\right)^k \left(-\frac{p_i^* + \i \alpha_2}{q_j^* - \i \alpha_2}\right)^l\right).
\end{align*}
For \(i,j \in I_1(h) \cup I_4(h)\), we have
\begin{align*}
    \left(\mathfrak{m}_{i,j}^{k,l}\right)^* &= \delta_{ji} d_j e^{-\xi_j-\eta_i} + \dfrac{1}{q_i+ p_j}\left(-\frac{q_i + \i \alpha_1}{p_j - \i \alpha_1}\right)^k \left(-\frac{q_i + \i \alpha_2}{p_j - \i \alpha_2}\right)^l = \mathfrak{m}_{j,i}^{-k,-l}.
\end{align*}
For \(i,j \in I_2(h)\), we have
\begin{align*}
    \left(\mathfrak{m}_{i,j}^{k,l}\right)^* &= \delta_{ij} c_{N+i-1} e^{-\eta_{N+1-i}-\xi_{N+1-j}} + \dfrac{1}{q_{N+1-i} + p_{N+1-j}}\left(-\frac{q_{N+1-i} + \i \alpha_1}{p_{N+1-j} - \i \alpha_1}\right)^k \left(-\frac{q_{N+1-i} + \i \alpha_2}{p_{N+1-j} - \i \alpha_2}\right)^l\\
    &= \mathfrak{m}_{N+1-j,N+1-i}^{-k,-l}.
\end{align*}
For \(i \in I_1(h) \cup I_4(h), j \in I_2(h)\), we have
\begin{align*}
    \left(\mathfrak{m}_{ij}^{k,l}\right)^* &=  \dfrac{1}{q_i + p_{N+1-j}} \left(-\frac{q_i + \i \alpha_1}{p_{N+1-j} - \i \alpha_1}\right)^k\left(-\frac{q_i + \i \alpha_2}{p_{N+1-j} - \i \alpha_2}\right)^l = \mathfrak{m}_{N+1-j,i}^{-k,-l},\\
    \left(\mathfrak{m}_{i,N+1-j}^{k,l}\right)^* &=  \dfrac{1}{q_i + p_j} \left(-\frac{q_i + \i \alpha_1}{p_j - \i \alpha_1}\right)^k\left(-\frac{q_i + \i \alpha_2}{p_j - \i \alpha_2}\right)^l = \mathfrak{m}_{j,i}^{-k,-l}.
\end{align*}
For \(i \in I_2(h), j \in I_1(h) \cup I_4(h)\), we have
\begin{align*}
    \left(\mathfrak{m}_{ij}^{k,l}\right)^* &=  \dfrac{1}{q_{N+1-i}+ p_i} \left(-\frac{q_{N+1-i} + \i \alpha_1}{p_i - \i \alpha_1}\right)^k\left(-\frac{q_{N+1-i} + \i \alpha_2}{p_i - \i \alpha_2}\right)^l = \mathfrak{m}_{j,N+1-i}^{-k,-l}, \\
    \left(\mathfrak{m}_{N+1-i,j}^{k,l}\right)^* &=  \dfrac{1}{q_i+ p_i} \left(-\frac{q_i + \i \alpha_1}{p_i - \i \alpha_1}\right)^k\left(-\frac{q_{N+1-i} + \i \alpha_2}{p_i - \i \alpha_2}\right)^l = \mathfrak{m}_{j,i}^{-k,-l}.
\end{align*}
We can then consider the following row operation on determinant \(\tau_{k,l}^*\): for \(i \in I_2(h)\), switching \(i\)-th row with \((N+1-i)\)-th row, \(i\)-th column with \((N+1-i)\)-th column. This involves an even number of row operations and does not change the sign of the determinant. 
\begin{align*}
    \tau_{k,l}^* &= \begin{vmatrix}
        \left(\mathfrak{m}_{i\in I_1, j\in I_1}^{k,l}\right)^* & \left(\mathfrak{m}_{i\in I_1, j\in I_2}^{k,l}\right)^* & \left(\mathfrak{m}_{i\in I_1, j\in I_3}^{k,l}\right)^* & \left(\mathfrak{m}_{i\in I_1, j\in I_4}^{k,l}\right)^*\\
        \left(\mathfrak{m}_{i\in I_2, j\in I_1}^{k,l}\right)^* & \left(\mathfrak{m}_{i\in I_2, j\in I_2}^{k,l}\right)^* & \left(\mathfrak{m}_{i\in I_2, j\in I_3}^{k,l}\right)^* & \left(\mathfrak{m}_{i\in I_2, j\in I_4}^{k,l}\right)^* \\
        \left(\mathfrak{m}_{i\in I_3, j\in I_1}^{k,l}\right)^* & \left(\mathfrak{m}_{i\in I_3, j\in I_2}^{k,l}\right)^* & \left(\mathfrak{m}_{i\in I_3, j\in I_3}^{k,l}\right)^* & \left(\mathfrak{m}_{i\in I_3, j\in I_4}^{k,l}\right)^* \\
        \left(\mathfrak{m}_{i\in I_4, j\in I_1}^{k,l}\right)^* & \left(\mathfrak{m}_{i\in I_4, j\in I_2}^{k,l}\right)^* & \left(\mathfrak{m}_{i\in I_4, j\in I_3}^{k,l}\right)^* & \left(\mathfrak{m}_{i\in I_4, j\in I_4}^{k,l}\right)^* \\
    \end{vmatrix} \\
    &= \begin{vmatrix}
        \mathfrak{m}_{j\in I_1, i\in I_1}^{-k,-l} &\mathfrak{m}_{j\in I_1, i\in I_3}^{-k,-l} & \mathfrak{m}_{j\in I_1, i\in I_2}^{-k,-l} & \mathfrak{m}_{j\in I_1, i\in I_4}^{-k,-l} \\
        \mathfrak{m}_{j\in I_3, i\in I_1}^{-k,-l} & \mathfrak{m}_{j\in I_3, i\in I_3}^{-k,-l} & \mathfrak{m}_{j\in I_3, i\in I_2}^{-k,-l} & \mathfrak{m}_{j\in I_3, i\in I_4}^{-k,-l} \\
        \mathfrak{m}_{j\in I_2, i\in I_1}^{-k,-l} & \mathfrak{m}_{j\in I_2, i\in I_3}^{-k,-l} & \mathfrak{m}_{j\in I_2, i\in I_2}^{-k,-l} & \mathfrak{m}_{j\in I_2, i\in I_4}^{-k,-l}\\
        \mathfrak{m}_{j\in I_4, i\in I_1}^{-k,-l} & \mathfrak{m}_{j\in I_4, i\in I_3}^{-k,-l} & \mathfrak{m}_{j\in I_4, i\in I_2}^{k,l} & \mathfrak{m}_{j\in I_4, i\in I_4}^{-k,-l} \\
    \end{vmatrix} \\
    &= \begin{vmatrix}
        \mathfrak{m}_{i\in I_1, j\in I_1}^{-k,-l} &\mathfrak{m}_{i\in I_1, j\in I_2}^{-k,-l} & \mathfrak{m}_{i\in I_1, j\in I_3}^{-k,-l} & \mathfrak{m}_{i\in I_1, j\in I_4}^{-k,-l} \\
        \mathfrak{m}_{i\in I_2, j\in I_1}^{-k,-l} &\mathfrak{m}_{i\in I_2, j\in I_2}^{-k,-l} & \mathfrak{m}_{i\in I_2, j\in I_3}^{-k,-l} & \mathfrak{m}_{i\in I_2, j\in I_4}^{-k,-l} \\
        \mathfrak{m}_{i\in I_3, j\in I_1}^{-k,-l} &\mathfrak{m}_{i\in I_3, j\in I_2}^{-k,-l} & \mathfrak{m}_{i\in I_3, j\in I_3}^{-k,-l} & \mathfrak{m}_{i\in I_3, j\in I_4}^{-k,-l} \\
        \mathfrak{m}_{i\in I_4, j\in I_1}^{-k,-l} &\mathfrak{m}_{i\in I_4, j\in I_2}^{-k,-l} & \mathfrak{m}_{i\in I_4, j\in I_3}^{-k,-l} & \mathfrak{m}_{i\in I_4, j\in I_4}^{-k,-l} \\
    \end{vmatrix} = \tau_{-k,-l}.
\end{align*}
\end{proof}

By \cref{lemma:cmp_rd_dark}, we have \(\tau_{1,0}=\tau_{-1,0}^*,\ \tau_{0,1}=\tau_{0,-1}^*,\ \tau_{0,0}=\tau_{0,0} \in \mathbb{R}\), and these relations on \(\tau\)-functions allow us to set
\begin{align}\label{tau_settings_dark}
    f=\tau_{0,0}, \quad g_1=\tau_{1,0}=\tau_{-1,0}^*=g_1^*,\quad g_2=\tau_{0,1}=\tau_{0,-1}^*=g_2^*,\quad s_{12}=\tau_{1,1}.
\end{align}
Setting \(k=l=0\) in \cref{KP_dark_5.1,KP_dark_6.1,KP_dark_11,KP_dark_1.1}, one obtains the bilinear form of cHirota equation \eqref{vcmKdVDE_BL_1}-\eqref{vcmKdVDE_BL_4}. 
This 
completes the proof of \cref{theorem:dd_solution}.

\subsection{Dynamics of dark-dark soliton solution}
By taking \(N=1\) in \cref{theorem:dd_solution}, we can only choice \(h=0\) for \eqref{dark_complex_relation}, this gives the solution as
\begin{align*}\label{M=2_plane_wave}
    u_1 &= \rho_1 \exp(\i \theta_1)\frac{d_1 \exp(-\xi_1-\xi_1^*) - \dfrac{1}{p_1 + p_1^*}\left(\dfrac{p_1 - \i \alpha_1}{p_1^* + \i \alpha_1}\right)}{d_1 \exp(-\xi_1-\xi_1^*) + 1/(p_1 + p_1^*)} , \\
    u_2 &= \rho_2 \exp(\i \theta_2)\frac{d_1 \exp(-\xi_1-\xi_1^*) - \dfrac{1}{p_1 + p_1^*}\left(\dfrac{p_1 - \i \alpha_2}{p_1^* + \i \alpha_2}\right)}{d_1 \exp(-\xi_1-\xi_1^*) + 1/(p_1 + p_1^*)} , 
\end{align*}
where \(\xi_1 = p_1 (x - 3 (\varepsilon_1\rho_1^2 + \varepsilon_2\rho_2^2) t) + p_1^3 t + \xi_{1 0}\), and \(\exp(\i \theta_1)\), \(\exp(\i \theta_2)\) are plane wave solution on each component as
\begin{align}
    \theta_i = \alpha_i x - \left(\alpha_i^3 + 3\alpha_i \left(\varepsilon_1\rho_1^2 + \varepsilon_2\rho_2^2\right) + 3 \left(\varepsilon_1\rho_1^2 \alpha_1 + \varepsilon_2\rho_2^2 \alpha_2\right)\right)t, \quad i = 1,2.
\end{align} 
The parameters \(p_1, \rho_1, \rho_2, \alpha_1, \alpha_2, \varepsilon_1, \varepsilon_2\) satisfy \eqref{Gpq}
\begin{align}\label{p_1_equation_n=1}
    \frac{\varepsilon_1 \rho_1^2}{|p_1-\i \alpha_1|^2} + \frac{\varepsilon_2 \rho_2^2}{|p_1-\i \alpha_2|^2}=1.
\end{align}
It is evident that for above solutions \(u_1\) and \(u_2\) to be regular, we require \(d_1 \Re(p_1) > 0\), hence the denominators of \(u_1\) and \(u_2\) will be nonzero.
The first order solution can be further simplified as
\begin{align*}
    u_1 &= \frac{\rho_1 \exp(\i \theta_1)}{p_1^* + \i \alpha_1} \biggl(\i \Bigl(\alpha_1 - \Im(p_1)\Bigr) - \Re(p_1)\tanh\Bigl(\Re(\xi_1) - \log\left(4 d_1 \Re(p_1)\right)\Bigr) \biggr) , \\
    u_2 &= \frac{\rho_2 \exp(\i \theta_2)}{p_1^* + \i \alpha_2} \biggl(\i \Bigl(\alpha_2 - \Im(p_1) \Bigr) - \Re(p_1)\tanh\Bigl(\Re(\xi_1) - \log\left(4 d_1 \Re(p_1)\right)\Bigr) \biggr).
\end{align*}
Note that \(\Re(\xi_1) = \Re(p_1) \Bigl(x - \left(3 (\varepsilon_1\rho_1^2 + \varepsilon_2\rho_2^2) + 3\left[\Im(p_1)\right]^2  - \left[\Re(p_1)\right]^2\right)t\Bigr) + \Re(\xi_{1 0})\), the speed of soliton is therefore \(3 (\varepsilon_1\rho_1^2 + \varepsilon_2\rho_2^2) + 3\left[\Im(p_1)\right]^2  - \left[\Re(p_1)\right]^2\).
Moreover, by assuming \(d_1 > 0\), we have
\begin{align*}
    |u_1|^2 &= \rho_1^2 \left(1 - \frac{\left[\Re(p_1)\right]^2}{|p_1^* + \i \alpha_1|^2}\sech^2\Bigl(\Re(\xi_1) - \log\left(4 d_1 \Re(p_1)\right)\Bigr) \right), \\
    |u_2|^2 &= \rho_2^2 \left(1 - \frac{\left[\Re(p_1)\right]^2}{|p_1^* + \i \alpha_2|^2}\sech^2\Bigl(\Re(\xi_1) - \log\left(4 d_1 \Re(p_1)\right)\Bigr) \right).
\end{align*}
The minimum of \(|u_1|^2\) and \(|u_2|^2\) are given by \(\rho_1^2 \left(1 -\left[\Re(p_1)\right]^2/|p_1^* + \i \alpha_1|^2\right)\) and \(\rho_2^2 \left(1 - \left[\Re(p_1)\right]^2/|p_1^* + \i \alpha_2|^2\right)\) respectively. 

An example of one-dark soliton is illustrated in \cref{figure:dd_N=1}.
\begin{figure}[!ht]
    \centering
    \subfigure[]{\label{figure:dd_N=1(a)}
        \includegraphics[width=60mm]{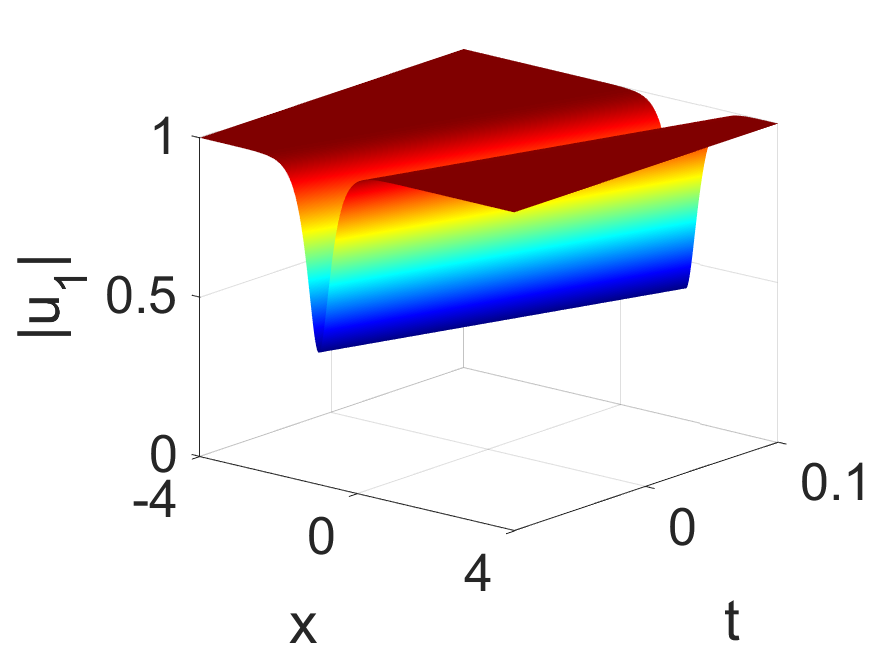}
    }
    \subfigure[]{\label{figure:dd_N=1(b)}
        \includegraphics[width=60mm]{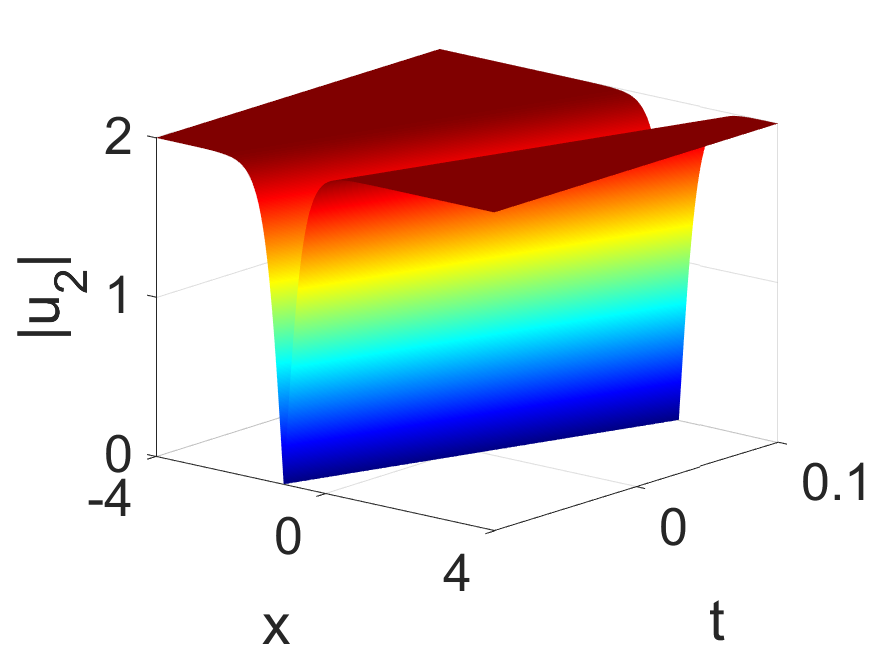}
    }
    \caption{One-dark soliton solution to the cHirota equation with parameters \(p_1= \sqrt{\left(\sqrt{113}+9\right)/2} + \i, d_1 = 1, \alpha_1 = 2, \alpha_2 =1, \rho_1 = 1, \rho_2 = 2, \xi_{1,0} = 0, \varepsilon_1 = \varepsilon_2 = 1\).}\label{figure:dd_N=1}
\end{figure}


{\bf {Two-dark solution}:} In this case, the \(\tau\) functions take the form
\begin{align*}
    g_1 &= \begin{vmatrix}
        d_1 \exp(- \xi_1 - \eta_1) + \dfrac{1}{p_1 + q_1}\left(-\dfrac{p_1 - \i \alpha_1}{q_1 + \i \alpha_1}\right) & \dfrac{1}{p_1 + q_2}\left(-\dfrac{p_1 - \i \alpha_1}{q_2 + \i \alpha_1}\right) \\
        \dfrac{1}{p_2 + q_1}\left(-\dfrac{p_2 - \i \alpha_1}{q_1 + \i \alpha_1}\right) & d_2 \exp(- \xi_2 - \eta_2) + \dfrac{1}{p_2 + q_2}\left(-\dfrac{p_2 - \i \alpha_1}{q_2 + \i \alpha_1}\right) \\
    \end{vmatrix}\\
    g_2 &= \begin{vmatrix}
        d_1 \exp(- \xi_1 - \eta_1) + \dfrac{1}{p_1 + q_1}\left(-\dfrac{p_1 - \i \alpha_2}{q_1 + \i \alpha_2}\right) & \dfrac{1}{p_1 + q_2}\left(-\dfrac{p_1 - \i \alpha_2}{q_2 + \i \alpha_2}\right) \\
        \dfrac{1}{p_2 + q_1}\left(-\dfrac{p_2 - \i \alpha_2}{q_1 + \i \alpha_2}\right) & d_2 \exp(- \xi_2 - \eta_2) + \dfrac{1}{p_2 + q_2}\left(-\dfrac{p_2 - \i \alpha_2}{q_2 + \i \alpha_2}\right) \\
    \end{vmatrix}\\
    f &= \begin{vmatrix}
        d_1 \exp(- \xi_1 - \eta_1) + \dfrac{1}{p_1 + q_1} & \dfrac{1}{p_1 + q_2} \\
        \dfrac{1}{p_2 + q_1} & d_2 \exp(- \xi_2 - \eta_2) + \dfrac{1}{p_2 + q_2} \\
    \end{vmatrix}
\end{align*}
where \(\xi_i = p_i (x - 3 (\varepsilon_1\rho_1^2 + \varepsilon_2\rho_2^2) t) + p_i^3 t + \xi_{i 0}, \eta_i = q_i (x - 3 (\varepsilon_1\rho_1^2 + \varepsilon_2\rho_2^2) t) + q_i^3 t + \xi_{i 0}, i = 1, 2\), and \(\theta_i\) is defined by \eqref{M=2_plane_wave}. By \eqref{dark_complex_relation}, we have two possible choices for \(h\), which means parameters \(p_1, p_2, q_1, q_2, d_1, d_2\) need to satisfy one of the following conditions
\begin{align}
    &(h=0 \ \text{case}) &q_2 = p_1^*, \quad q_1 = p_2^*, \quad d_2 = d_1 \in \mathbb{R}, \label{dark_N=2_complex_1}\\
    &(h=1 \ \text{case}) &q_1 = p_1^*, \quad q_2 = p_2^*, \quad d_1,d_2 \in \mathbb{R}. \label{dark_N=2_complex_2}
\end{align}

Moreover, for \eqref{dark_N=2_complex_1}, the parameters need to satisfy \(G(p_1, p_2^*) = 0\) and \(G(p_2, p_1^*) = 0\); for \eqref{dark_N=2_complex_2}, we need \(G(p_1, p_1^*) = 0\) and \(G(p_2, p_2^*) = 0\), where \(G(p,q)\) is defined by \eqref{Gpq}. 

For \(h = 0\), the following one-breather solution is obtained
\begin{align*}
    g_1 ={} & d_1^2 \exp(-2\Re(\xi_1 + \xi_2)) +\frac{p_1 - \i \alpha_1}{p_1^* + \i \alpha_1}\frac{p_2 - \i \alpha_1}{p_2^* + \i \alpha_1} \left(\frac{1}{|p_1 + p_2^*|^2} - \frac{1}{4 \Re(p_1) \Re(p_2)}\right) \\
    & - d_1 \exp(-\Re(\xi_1 + \xi_2))\left(\frac{\exp(\i \Im(\xi_2 - \xi_1))}{p_1 + p_2^*}\frac{p_1 - \i \alpha_1}{p_2^* + \i \alpha_1} + \frac{\exp(\i \Im(\xi_1 - \xi_2))}{p_1^* + p_2}\frac{p_2 - \i \alpha_1}{p_1^* + \i \alpha_1}\right), \\
    g_2 ={} & d_1^2 \exp(-2\Re(\xi_1 + \xi_2)) +\frac{p_1 - \i \alpha_2}{p_1^* + \i \alpha_2}\frac{p_2 - \i \alpha_2}{p_2^* + \i \alpha_2} \left(\frac{1}{|p_1 + p_2^*|^2} - \frac{1}{4 \Re(p_1) \Re(p_2)}\right)\\
    & - d_1 \exp(-\Re(\xi_1 + \xi_2))\left(\frac{\exp(\i \Im(\xi_2 - \xi_1))}{p_1 + p_2^*}\frac{p_1 - \i \alpha_2}{p_2^* + \i \alpha_2} + \frac{\exp(\i \Im(\xi_1 - \xi_2))}{p_1^* + p_2}\frac{p_2 - \i \alpha_2}{p_1^* + \i \alpha_2}\right),\\
    f ={} & d_1^2 \exp(-2\Re(\xi_1 + \xi_2)) + \frac{1}{|p_1 + p_2^*|^2} - \frac{1}{4 \Re(p_1) \Re(p_2)}\\
    & + d_1 \exp(-\Re(\xi_1 + \xi_2))\left(\frac{\exp(\i \Im(\xi_2 - \xi_1))}{p_1 + p_2^*} + \frac{\exp(\i \Im(\xi_1 - \xi_2))}{p_1^* + p_2}\right).
\end{align*}
The term \(\exp(\i \Im(\xi_1 - \xi_2))\) or \(\exp(\i \Im(\xi_2 - \xi_1))\) represents the periodicity  of the breather.  Such a breather solution is shown in \cref{figure:dd_N=2_h=0} with the period being \(\Im(\xi_1 - \xi_2)\).

\begin{figure}[!ht]
    \centering
    \subfigure[]{\label{figure:dd_N=2_h=0(a)}
        \includegraphics[width=60mm]{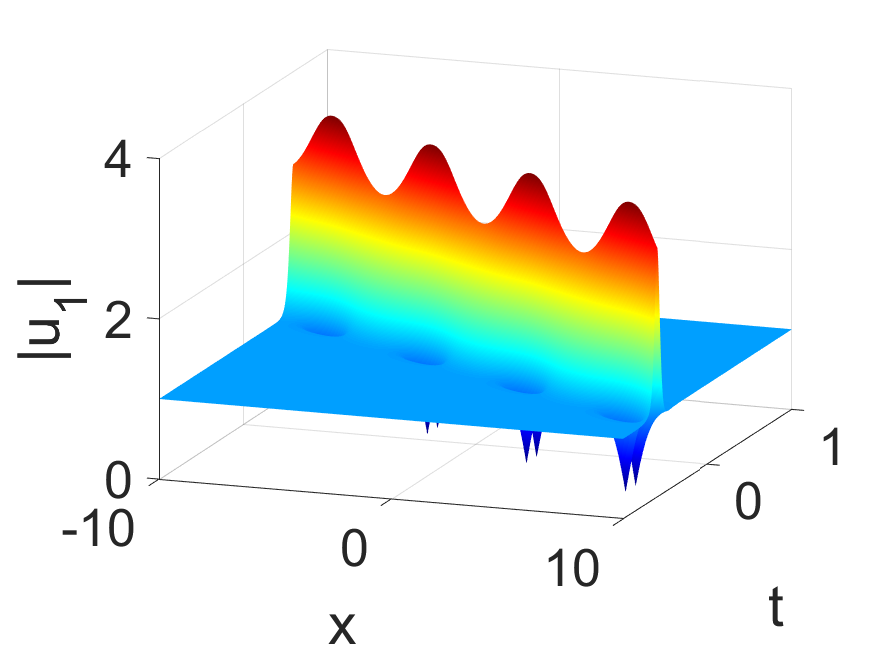}
    }
    \subfigure[]{\label{figure:dd_N=2_h=0(b)}
        \includegraphics[width=60mm]{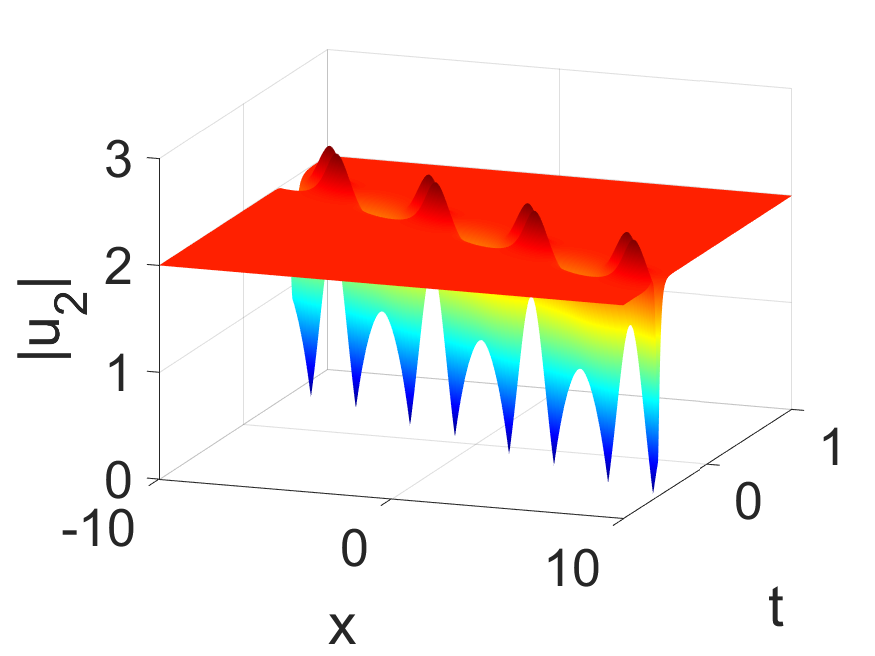}
    }
    \caption{One-breather solution to the cHirota equation with parameters \(p_1= 1 + \i, p_2 = \left(1/4-\i/4\right) \left(-8-\i+\sqrt{-5+36 \i}\right), d_1 = 1, \alpha_1 = 2, \alpha_2 =1, \rho_1 = 1, \rho_2 = 2, \xi_{1,0} = \xi_{2,0} = 0, \varepsilon_1 = \varepsilon_2 = -1\).}\label{figure:dd_N=2_h=0}
\end{figure}

For the case of \(h = 1\), two-dark soliton solution is given
\begin{align*}
    g_1 &= \begin{vmatrix}
        d_1 \exp(- \xi_1 - \xi_1^*) + \dfrac{1}{p_1 + p_1^*}\left(-\dfrac{p_1 - \i \alpha_1}{p_1^* + \i \alpha_1}\right) & \dfrac{1}{p_1 + p_2^*}\left(-\dfrac{p_1 - \i \alpha_1}{p_2^* + \i \alpha_1}\right) \\
        \dfrac{1}{p_2 + p_1^*}\left(-\dfrac{p_2 - \i \alpha_1}{p_1^* + \i \alpha_1}\right) & d_2 \exp(- \xi_2 - \xi_2^*) + \dfrac{1}{p_2 + p_2^*}\left(-\dfrac{p_2 - \i \alpha_1}{p_2^* + \i \alpha_1}\right) \\
    \end{vmatrix}\\
    g_2 &= \begin{vmatrix}
        d_1 \exp(- \xi_1 - \xi_1^*) + \dfrac{1}{p_1 + p_1^*}\left(-\dfrac{p_1 - \i \alpha_2}{p_1^* + \i \alpha_2}\right) & \dfrac{1}{p_1 + p_2^*}\left(-\dfrac{p_1 - \i \alpha_2}{p_2^* + \i \alpha_2}\right) \\
        \dfrac{1}{p_2 + p_1^*}\left(-\dfrac{p_2 - \i \alpha_2}{p_1^* + \i \alpha_2}\right) & d_2 \exp(- \xi_2 - \xi_2^*) - \dfrac{1}{p_2 + p_2^*}\left(-\dfrac{p_2 - \i \alpha_2}{p_2^* + \i \alpha_2}\right) \\
    \end{vmatrix}\\
    f &= \begin{vmatrix}
        d_1 \exp(- \xi_1 - \xi_1^*) + \dfrac{1}{p_1 + p_1^*} & \dfrac{1}{p_1 + p_2^*} \\
        \dfrac{1}{p_2 + p_1^*} & d_2 \exp(- \xi_2 - \xi_2^*) + \dfrac{1}{p_2 + p_2^*} \\
    \end{vmatrix}
\end{align*}
with \(G(p_1, p_1^*) = 0\) and \(G(p_2, p_2^*) = 0\).
Marking \(\xi_1 + \xi_1^*\) as soliton 1 and \(\xi_2 + \xi_2^*\) as soliton 2 and assuming soliton 1 is on the right of soliton 2 before the collision,
the asymptotic behavior of above solution is given below
\begin{enumerate}[(1)]
    \item Before collision, i.e., \(t \rightarrow - \infty\), this gives
    
    Soliton 1 (\(\xi_1 + \xi_1^* \approx 0,\ \xi_2 + \xi_2^* \rightarrow - \infty\))
    \begin{align*}
        u_1 &\simeq \frac{\rho_1 \exp(\i \theta_1)}{p_1^* + \i \alpha_1} \biggl(\i \Bigl(\alpha_1 - \Im(p_1)\Bigr) - \Re(p_1)\tanh\Bigl(\Re(\xi_1) - \log\left(4 d_1 \Re(p_1)\right)\Bigr) \biggr) , \\
        u_2 &\simeq \frac{\rho_2 \exp(\i \theta_2)}{p_1^* + \i \alpha_2} \biggl(\i \Bigl(\alpha_2 - \Im(p_1)\Bigr) - \Re(p_1)\tanh\Bigl(\Re(\xi_1) - \log\left(4 d_1 \Re(p_1)\right)\Bigr) \biggr).
    \end{align*}
    which is exactly same as the first order dark soliton.

    Soliton 2 (\(\xi_2 + \xi_2^* \approx 0,\ \xi_1 + \xi_1^* \rightarrow + \infty\))
    \begin{align*}
        u_1 &\simeq \frac{\rho_1 \exp(\i \theta_1)}{p_2^* + \i \alpha_1} \left(-\dfrac{p_1 - \i \alpha_1}{p_1^* + \i \alpha_1}\right) \biggl(\i \Bigl(\alpha_1 - \Im(p_2)\Bigr) - \Re(p_2)\tanh\Bigl(\Re(\xi_2) + \log\left(\varphi_1\right)\Bigr) \biggr) , \\
        u_2 &\simeq \frac{\rho_2 \exp(\i \theta_2)}{p_2^* + \i \alpha_2} \left(-\dfrac{p_1 - \i \alpha_2}{p_1^* + \i \alpha_2}\right) \biggl(\i \Bigl(\alpha_2 - \Im(p_2)\Bigr) - \Re(p_2)\tanh\Bigl(\Re(\xi_2) + \log\left(\varphi_1\right)\Bigr) \biggr).
    \end{align*}
    where the phase term \(\varphi_1\) is given by \(\varphi_1 = \frac{1}{2d_2} \left(\frac{1}{2\Re(p_2)} - \frac{2\Re(p_1)}{|p_1 + p_2^*|^2}\right)\).

    \item After collision, i.e., \(t \rightarrow + \infty\), this gives

    Soliton 1 (\(\xi_1 + \xi_1^* \approx 0,\ \xi_2 + \xi_2^* \rightarrow + \infty\))
    \begin{align*}
        u_1 &\simeq \frac{\rho_1 \exp(\i \theta_1)}{p_1^* + \i \alpha_1} \left(-\dfrac{p_2 - \i \alpha_1}{p_2^* + \i \alpha_1}\right) \biggl(\i \Bigl(\alpha_1 - \Im(p_1)\Bigr) - \Re(p_1)\tanh\Bigl(\Re(\xi_1) + \log\left(\theta_2\right)\Bigr) \biggr) , \\
        u_2 &\simeq \frac{\rho_2 \exp(\i \theta_2)}{p_1^* + \i \alpha_2} \left(-\dfrac{p_2 - \i \alpha_2}{p_2^* + \i \alpha_2}\right) \biggl(\i \Bigl(\alpha_2 - \Im(p_1)\Bigr) - \Re(p_1)\tanh\Bigl(\Re(\xi_1) + \log\left(\theta_2\right)\Bigr) \biggr).
    \end{align*}
    where the phase term \(\theta_2\) is given by \(\theta_2 = \frac{1}{2d_1} \left(\frac{1}{2\Re(p_1)} - \frac{2\Re(p_2)}{|p_1^* + p_2|^2}\right)\).

    Soliton 2 (\(\xi_2 + \xi_2^* \approx 0,\ \xi_1 + \xi_1^* \rightarrow - \infty\))
    \begin{align*}
        u_1 &\simeq \frac{\rho_1 \exp(\i \theta_1)}{p_2^* + \i \alpha_1} \biggl(\i \Bigl(\alpha_1 - \Im(p_2)\Bigr) - \Re(p_2)\tanh\Bigl(\Re(\xi_2) - \log\left(4 d_1 \Re(p_2)\right)\Bigr) \biggr) , \\
        u_2 &\simeq \frac{\rho_2 \exp(\i \theta_2)}{p_2^* + \i \alpha_2} \biggl(\i \Bigl(\alpha_2 - \Im(p_2)\Bigr) - \Re(p_2)\tanh\Bigl(\Re(\xi_2) - \log\left(4 d_1 \Re(p_2)\right)\Bigr) \biggr).
    \end{align*}

\end{enumerate}

One can find that there is a phase shift after the collision (see \cref{figure:dd_N=2_h=1}). 

\begin{figure}[!ht]
    \centering
    \subfigure[]{\label{figure:dd_N=2_h=1(a)}
        \includegraphics[width=60mm]{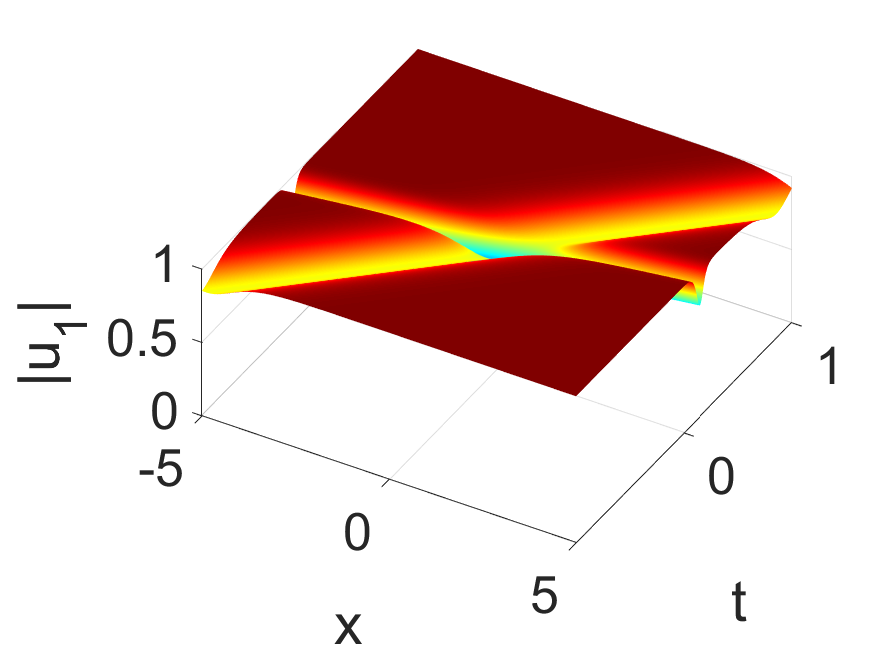}
    }
    \subfigure[]{\label{figure:dd_N=2_h=1(b)}
        \includegraphics[width=60mm]{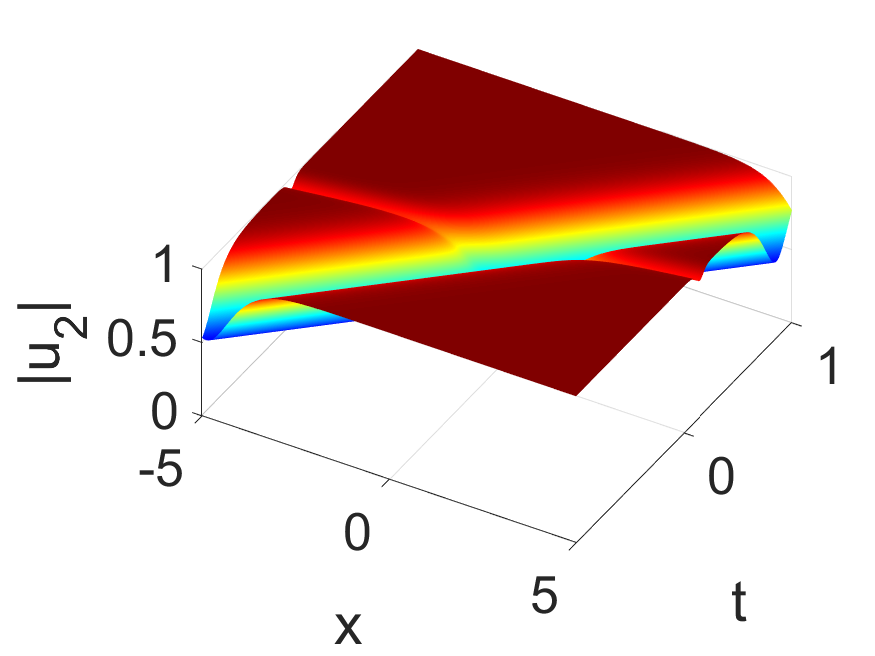}
    }
    \caption{Two-dark-soliton solution to the cHirota equation with parameters \(p_1= 1 + \left(3-\sqrt{5}\right) \i/2 , p_2 = 3/4 + \left(6+\sqrt{4 \sqrt{23}+11}\right) \i/4 , d_1 = d_2 = 1, \alpha_1 = 2, \alpha_2 =1, \rho_1 = \rho_2 = 1, \xi_{1,0} = \xi_{2,0} = 0, \varepsilon_1 = \varepsilon_2 = 1\).}\label{figure:dd_N=2_h=1}
\end{figure}

\section{Bright-dark soliton solution to the coupled Hirota equation}\label{section:bd_chirota}
Similarly, the bright-dark soliton solution is summarized by the following theorem.
\begin{theorem}\label{theorem:bd_solution}
Equation \eqref{chirota_1}-\eqref{chirota_2} admits the following bright-dark soliton solution, where the bright soliton solution \(u_1\) and dark soliton solution \(u_2\) are given by
\begin{equation}
    u_1 = \frac{g_1}{f}\exp\left( -3 \i\rho_2^2 \varepsilon_2 \alpha_2 t \right), \quad u_2 = \frac{h_2}{f}\rho_2 \exp\left(\i (\alpha_2 x - (\alpha_2^3 + 6\varepsilon_2\alpha_2 \rho_2^2) t)\right),
\end{equation}
and \(f,g_1, h_2\) are determinants defined as
\begin{equation}
    f = \left|M_0\right|,\quad g_1 = \begin{vmatrix}
        M_0 & \Phi \\
        -\left(\bar{\Psi}\right)^T & 0
    \end{vmatrix},\quad h_2 = \left|M_1\right|
\end{equation}
where \(M_k\) is \(N\times N\) matrix, \(\Phi\) and \(\bar{\Psi}\) are \(N\)-component vectors whose elements are defined as
\begin{align}
    &m_{ij}^k=\frac{1}{p_i+p_j^*}\left(-\frac{p_i - \i \alpha_2}{p_j^* + \i \alpha_2}\right)^k e^{\xi_i+\xi_j^*}+ \frac{\varepsilon_1 C_i^* C_j}{ (p_i+p_j^*) \left(\dfrac{\varepsilon_2 \rho_2^2}{(p_i - \i \alpha_2)(p_j^* + \i \alpha_2)} - 1\right)},\\
    &\xi_i=p_i \left(x - 3\varepsilon_2 \rho_2^2 t\right) + p_i^3 t +\xi_{i0}, \quad \Phi = \left(e^{\xi _{1}}, e^{\xi _2},\ldots, e^{\xi _{N}}\right)^T, \quad \Psi = \left(C_1, C_2,\ldots, C_N\right)^T.
\end{align}
Here, \(p_i, \xi_{i0}, C_i\) are complex parameters and \(\alpha_2\) is a real number.
\end{theorem}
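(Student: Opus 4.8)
The plan is to fuse the two constructions already completed: the bright machinery of \cref{theorem:bb_solution} for the component $u_1$, and the dark machinery of \cref{theorem:dd_solution} for the component $u_2$. First I would bilinearize by substituting $u_1=(g_1/f)\exp(-3\i\rho_2^2\varepsilon_2\alpha_2 t)$ and $u_2=\rho_2(h_2/f)\exp(\i(\alpha_2 x-(\alpha_2^3+6\varepsilon_2\alpha_2\rho_2^2)t))$ into \eqref{chirota_1}--\eqref{chirota_2}. Since $u_1$ sits on a zero background while $u_2$ sits on the plane wave $\rho_2\exp(\i\theta_2)$, the resulting system is a hybrid: for $g_1\cdot f$ one expects a bright cubic operator $D_x^3-D_t$ as in \eqref{2cmKdVSF_BL_1}, whereas for $h_2\cdot f$ one expects the $\alpha_2$-shifted operator $D_x^3-D_t+3\i\alpha_2 D_x^2+\cdots$ of dark type as in \eqref{vcmKdVDE_BL_2}, together with an $f\cdot f$ equation combining $|g_1|^2$ and $|h_2|^2$ and a single coupling equation tying $g_1$, $h_2$ and $f$ through an auxiliary function $s$. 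I would record these equations explicitly and try to match each to a member of the KP-Toda hierarchy.

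Next I would introduce $\tau$-functions carrying one dark shift (parameter $a$, index $k$) and one bright Toda direction simultaneously, with matrix entry of the schematic form
\begin{align*}
    m_{ij}^{k}=\frac{1}{p_i+\bar{p}_j}\left(-\frac{p_i-a}{\bar{p}_j+a}\right)^{k}e^{\xi_i+\bar{\xi}_j}+\frac{\tilde{C}_i\bar{C}_j}{q_i+\bar{q}_j}\,e^{\eta_i+\bar{\eta}_j},
\end{align*}
so that $f=\tau_0=|M_0|$ and $h_2=\tau_1=|M_1|$ carry the dark index, while $g_1$ is the bordered determinant obtained by adjoining $\Phi$ and $\bar{\Psi}$ to $M_0$ exactly as in \eqref{def_bright_f_g}. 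These are standard Gram-type $\tau$-functions, and by the same bilinear identities invoked in \eqref{KP_bright_1}--\eqref{KP_bright_9} and \eqref{KP_dark_1}--\eqref{KP_dark_11} they satisfy the corresponding KP-Toda bilinear equations; I would collect only the particular combinations matching the system of the first step.

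The crux is the reduction step. I need a \emph{single} dimension reduction collapsing both the dark auxiliary variable and the bright Toda variable onto $x_1$, via a lemma of the form $(\varepsilon_1\partial_{y_1}+\varepsilon_2\rho_2^2\partial_r)\tau=\partial_{x_1}\tau$, patterned on \eqref{bright_dim_reduction} and \eqref{dmr_dark}. Following \cref{lemma:dim_rd_dark}, the dark part forces a per-node constraint of the type $\tfrac{\varepsilon_2\rho_2^2}{(p_i-\i\alpha_2)(\bar{p}_i+\i\alpha_2)}=1$, while the bright Toda direction contributes a scaling relating $q_i$ to $p_i$. The delicate point is that both must hold on the \emph{same} nodes $p_i$, and it is exactly this interplay that generates the nonstandard denominator $\tfrac{\varepsilon_2\rho_2^2}{(p_i-\i\alpha_2)(p_j^*+\i\alpha_2)}-1$ multiplying $\varepsilon_1 C_i^{*}C_j$ in the final entry $m_{ij}^{k}$ of the theorem. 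After proving the reduction lemma I would set the dummy variables to zero and apply the gauge shift $x\to x-3\varepsilon_2\rho_2^2 t$ to reach the traveling frame in which $\xi_i=p_i(x-3\varepsilon_2\rho_2^2 t)+p_i^3 t+\xi_{i0}$.

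Finally I would impose the complex-conjugate conditions $\bar{p}_i=p_i^{*}$, $\tilde{C}_i=\bar{C}_i=C_i^{*}$ and $a=\i\alpha_2$, patterned on the two earlier conjugation lemmas, to obtain $f\in\mathbb{R}$, a genuine bright $g_1$, and the dark relation $\tau_1^{*}=\tau_{-1}$ that validates the phase factor carried by $u_2$. Substituting $f$, $g_1$, $h_2$ back into the bilinear system then finishes the proof. The step I expect to be the main obstacle is the simultaneous reduction: verifying that the Toda-type bright reduction and the algebraic dark reduction are mutually compatible on one parameter set, and tracking that the coupling bilinear equation survives the collapse of both auxiliary directions while producing precisely the $G$-type denominator in $m_{ij}^{k}$.
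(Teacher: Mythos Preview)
Your overall plan matches the paper's approach closely: hybrid bilinearization, Gram-type $\tau$-functions carrying both a dark shift index $k$ and a bright bordering, a single dimension-reduction lemma collapsing the bright direction $y_1$ and the dark negative flow $x_{-1}$ onto $x_1$, followed by the gauge shift $x_1\to x_1-3\varepsilon_2\rho_2^2 x_3$ and the complex-conjugate reduction.

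The one point where your description goes astray is the form of the reduction constraint. You anticipate that ``the dark part forces a per-node constraint of the type $\tfrac{\varepsilon_2\rho_2^2}{(p_i-\i\alpha_2)(\bar{p}_i+\i\alpha_2)}=1$'' together with a separate bright scaling on $q_i$, and you flag the compatibility of the two as the main obstacle. In fact there is \emph{no} algebraic constraint on the $p_i$ at all: the reduction $(\varepsilon_1\partial_{y_1}+\varepsilon_2\rho_2^2\partial_{x_{-1}})\tau=\partial_{x_1}\tau$ is achieved by the single explicit choice
\[
q_i=\frac{1}{\varepsilon_1}\Bigl(-p_i+\frac{\varepsilon_2\rho_2^2}{p_i-a}\Bigr),\qquad
\bar{q}_i=\frac{1}{\varepsilon_1}\Bigl(-\bar{p}_i+\frac{\varepsilon_2\rho_2^2}{\bar{p}_i+a}\Bigr),
\]
which fixes the bright parameters $q_i$ in terms of $p_i$ while leaving $p_i$ completely free. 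There is thus only one condition, not two to reconcile. Computing $q_i+\bar{q}_j$ under this substitution gives $\varepsilon_1^{-1}(p_i+\bar{p}_j)\bigl(\tfrac{\varepsilon_2\rho_2^2}{(p_i-a)(\bar{p}_j+a)}-1\bigr)$, and this is precisely what produces the denominator you noticed in $m_{ij}^k$. So the ``nonstandard denominator'' is a direct consequence of evaluating $\tilde{C}_i\bar{C}_j/(q_i+\bar{q}_j)$ after the reduction, not the residue of a compatibility condition between two independent reductions.
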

\subsection{Derivation of dark-dark soliton solution to the coupled Hirota equation}
By the following transformations
\begin{align}\label{bdd_transformation}
u_1 = \frac{g_1}{f}\exp\left(\i (-\omega_1 t) \right), \quad u_2 = \frac{h_2}{f}\rho_2 \exp\left(\i (\alpha_2 x - \omega_2 t)\right),
\end{align}
where \(\omega_1 = 3 \rho_2^2 \varepsilon_2 \alpha_2, \ \omega_2 = \alpha_2^3 + 6\varepsilon_2\alpha_2 \rho_2^2\), the cHirota equation \eqref{chirota_1}--\eqref{chirota_2}
is converted into the following bilinear equations
\begin{align}
&\left(D_x^3 - D_t - 6\varepsilon_2 \rho_2^2 D_x+ 3\i \rho_2^2 \varepsilon_2 \alpha_2\right) g_1 \cdot f = 3 \i \rho_2^2 \varepsilon_2\alpha_2 s_{12} h_2^*,\label{BL_bright_dark_1}\\
& \left[D_x^3 - D_t + 3\i \alpha_2 D_x^2 - 3\left(\alpha_2^2 + 2 \varepsilon_2 \rho_2^2\right) D_x \right] h_2 \cdot f = -3\i \varepsilon_1 \alpha_2 s_{21} g_1^*,\label{BL_bright_dark_2}\\
& \left(D_x^2 - 2 \varepsilon_2 \rho_2^2 \right)f\cdot f  + 2 \varepsilon_1 |g_1|^2 + 2\rho_2^2 \varepsilon_2 |h_2|^2 = 0,\label{BL_bright_dark_3}\\
& D_x g_1 \cdot h_2 - \i g_1 h_2 \alpha_2 = -\i\alpha_2 s_{12} f.\label{BL_bright_dark_4}
\end{align}
Here \(s_{jk} = s_{kj}\).

To deduce the bright-dark soliton, one needs to start with the \(\tau\)-functions \(\tau_k^{0},\ \tau_k^{1},\ \bar{\tau}_k^{1}\),
\begin{align}\label{tau_bright_dark}
    \tau_k^{0} = \left|M_k\right|,
    \quad
    \tau_k^{1} = \begin{vmatrix}
        M_k & \Phi_k \\
        -\left(\bar{\Psi}\right)^T & 0
    \end{vmatrix},
    \quad 
    \bar{\tau}_k^{1} = \begin{vmatrix}
        M_k & \Psi \\
        -\left(\bar{\Phi}_k\right)^T & 0
    \end{vmatrix}
\end{align}
where \(M_k\) is a \(N \times N\) matrix, \(\Phi_k\), \(\bar{\Phi}_k\), \(\Psi\), and \(\bar{\Psi}\) are \(N\)-component vectors whose elements are defined as
\begin{align}
    &m_{ij}^k=\frac{1}{p_i+\bar{p}_j}\left(-\frac{p_i - a}{\bar{p}_j + a}\right)^k e^{\xi_i+\bar{\xi}_j}+\frac{\tilde{C}_i \bar{C}_j}{q_i+\bar{q}_j}e^{\eta_i+\bar{\eta}_j},\\
    &\Phi_k = \left( \left(\frac{p_1 - a}{-a}\right)^k e^{\xi _{1}}, \left(\frac{p_2 - a}{-a}\right)^k e^{\xi _2},\ldots, \left(\frac{p_N - a}{-a}\right)^k e^{\xi _{N}}\right)^T, \\
    &\bar{\Phi}_k = \left( \left(\frac{a}{\bar{p}_1 + a}\right)^k e^{\bar{\xi}_{1}},\left(\frac{a}{\bar{p}_2 + a}\right)^k e^{\bar{\xi}_2},\ldots ,\left(\frac{a}{\bar{p}_N + a}\right)^k e^{\bar{\xi}_{N}}\right)^T, \quad\\
    &\Psi =\left(\tilde{C}_1 e^{\eta _{1}}, \tilde{C}_2 e^{\eta _2},\ldots, \tilde{C}_N e^{\eta _{N}}\right)^T,\quad
    \bar{\Psi}=\left(\bar{C}_1 e^{\bar{\eta}_1}, \bar{C}_2 e^{\bar{\eta}_2},\ldots, \bar{C}_N e^{\bar{\eta}_{N}}\right)^T,\\
    &\xi_i=p_i x_1 + p_i^2 x_2 + p_i^3 x_3 +\frac{1}{p_i - a} x_{-1}+\xi_{i0},\quad \bar{\xi}_i=\bar{p}_i x_1 - \bar{p}_i^2 x_2 + \bar{p}_i^3 x_3 +\frac{1}{\bar{p}_i + a} x_{-1}+\bar{\xi}_{i0},\\
    &\eta_i=q_i y_1,\quad \bar{\eta}_i=\bar{q}_i y_1.
\end{align}
It cab be shown that the by  \(\tau\)-functions satisfy the following bilinear equations
\begin{align}
    & \left(D_{x_1}^3+3D_{x_1}D_{x_2}-4D_{x_3}\right)\tau_k^1 \cdot \tau_k^0=0,
    \label{KP_bright_dark_1}\\
    & \left(D_{x_{-1}}\left(D_{x_1}^2 - D_{x_2}\right) - 4 (D_{x_1} -a )\right)\tau_k^1\cdot \tau_k^0 - 4 a \tau_{k+1}^1 \tau_{k-1}^0 = 0,\label{KP_bright_dark_2}\\
    & D_{y_1}\left(D_{x_1}^2-D_{x_2}\right)\tau_k^1 \cdot \tau_k^0=0,\label{KP_bright_dark_3}\\
    & D_{y_1}D_{x_1}\tau_k^0 \cdot \tau_k^0=-2\tau_k^1 \bar{\tau}_k^1,\label{KP_bright_dark_4}\\
    & \left(D_{x_{-1}} D_{x_1}-2\right) \tau_k^0 \cdot \tau_k^0=-2 \tau_{k+1}^0 \tau_{k-1}^0, \label{KP_bright_dark_5} \\
    & \left(D_{x_1}^2-D_{x_2}+2 a D_{x_1}\right) \tau_{k+1}^0 \cdot \tau_k^0=0, \label{KP_bright_dark_6}\\
    & \left(D_{x_1}^3+3 D_{x_1} D_{x_2}-4 D_{x_3}+3 a\left(D_{x_1}^2+D_{x_2}\right)+6 a^2 D_{x_1}\right) \tau_{k+1}^0 \cdot \tau_k^0=0, \label{KP_bright_dark_7}\\
    & \left(D_{x_{-1}}\left(D_{x_1}^2-D_{x_2}+2 a D_{x_1}\right)-4D_{x_1}\right) \tau_{k+1}^0 \cdot \tau_k^0 =0, \label{KP_bright_dark_8}\\
    & \left(D_{y_1}\left(D_{x_1}^2 - D_{x_2} + 2a D_{x_1}\right)\right)\tau_{k+1}^0 \cdot \tau_k^0 + 4 a \tau_{k+1}^1 \bar{\tau}_k^1 = 0,\label{KP_bright_dark_9}\\
    & \left(D_{x_1} + a\right)\tau_{k+1}^0 \cdot \tau_k^1 = a \tau_{k+1}^1 \tau_k^0\,.\label{KP_bright_dark_10}
\end{align}

Let us start with the dimension reduction, which can be established by the following lemma.
\begin{lemma}
    By requiring the condition
    \begin{align}\label{dm_para_br_dk}
        q_i = \frac{1}{\varepsilon_1} \left(-p_i + \frac{\varepsilon_2 \rho_2^2}{p_i - a}\right), \quad \bar{q}_i = \frac{1}{\varepsilon_1} \left(-\bar{p}_i + \frac{\varepsilon_2 \rho_2^2}{\bar{p}_i + a}\right),
    \end{align}
    where \(i=1,2,\ldots,N\), the tau functions \eqref{tau_bright_dark} satisfy the differential relation
    \begin{equation}\label{dm_cond_br_dk}
        \varepsilon_1 \partial_y + \varepsilon_2 \rho_2^2 \partial_{x_{-1}} = \partial_{x_1}.
    \end{equation}
\end{lemma}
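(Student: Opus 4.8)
The plan is to follow the dimension-reduction argument used for the bright--bright case earlier in this section, adapting it to the mixed structure of the entries in \eqref{tau_bright_dark}. Write $\mathcal{L}=\varepsilon_1\partial_{y_1}+\varepsilon_2\rho_2^2\partial_{x_{-1}}-\partial_{x_1}$; the desired relation \eqref{dm_cond_br_dk} is the statement that $\mathcal{L}$ annihilates each of $\tau_k^0,\tau_k^1,\bar{\tau}_k^1$ up to a common gauge factor. First I would strip off the plane-wave exponentials: factoring $\prod_{i=1}^N e^{\xi_i}\prod_{j=1}^N e^{\bar{\xi}_j}$ out of every determinant in \eqref{tau_bright_dark} replaces $m_{ij}^k$ by $\hat{m}_{ij}^k=\frac{1}{p_i+\bar{p}_j}\bigl(-\frac{p_i-a}{\bar{p}_j+a}\bigr)^k+\frac{\tilde{C}_i\bar{C}_j}{q_i+\bar{q}_j}e^{\eta_i+\bar{\eta}_j-\xi_i-\bar{\xi}_j}$, turns the border column $\Phi_k$ into the constant vector with entries $(\tfrac{p_i-a}{-a})^k$, and turns the border row $\bar{\Psi}$ into the vector with entries $\bar{C}_j e^{\bar{\eta}_j-\bar{\xi}_j}$. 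Because the bilinear equations \eqref{KP_bright_dark_1}--\eqref{KP_bright_dark_10} involve only Hirota $D$-operators, this common prefactor may be discarded, exactly as in the bright--bright proof.

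The heart of the argument is an entry-wise computation. Using $\partial_{x_1}\xi_i=p_i$, $\partial_{x_{-1}}\xi_i=\frac{1}{p_i-a}$, $\partial_{y_1}\eta_i=q_i$ together with the conjugate relations for $\bar{\xi}_j,\bar{\eta}_j$, applying $\mathcal{L}$ to $\hat{m}_{ij}^k$ leaves the first, constant summand untouched and multiplies the second summand by the scalar $\varepsilon_1(q_i+\bar{q}_j)+(p_i+\bar{p}_j)-\varepsilon_2\rho_2^2\bigl(\frac{1}{p_i-a}+\frac{1}{\bar{p}_j+a}\bigr)$. This scalar separates as a sum of a row part $\varepsilon_1 q_i+p_i-\frac{\varepsilon_2\rho_2^2}{p_i-a}$ and a column part $\varepsilon_1\bar{q}_j+\bar{p}_j-\frac{\varepsilon_2\rho_2^2}{\bar{p}_j+a}$, and requiring it to vanish for all $i,j$ forces each part to vanish separately --- which is precisely the reduction condition \eqref{dm_para_br_dk}. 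Thus \eqref{dm_para_br_dk} is exactly the requirement that $\mathcal{L}\hat{m}_{ij}^k=0$; this matching of the coefficient produced by the background (plane-wave) term with the one produced by the soliton term is the key point, and is where the condition \eqref{dm_para_br_dk} is actually consumed.

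It remains to check the two border vectors and then reassemble the determinants. For the reduced border column the entries are constants, so $\mathcal{L}$ kills them trivially; for the reduced border row one computes $\mathcal{L}(\bar{\eta}_j-\bar{\xi}_j)=\varepsilon_1\bar{q}_j+\bar{p}_j-\frac{\varepsilon_2\rho_2^2}{\bar{p}_j+a}$, which is exactly the column part above and hence vanishes by \eqref{dm_para_br_dk}, giving $\mathcal{L}(\bar{C}_j e^{\bar{\eta}_j-\bar{\xi}_j})=0$; the zero corner entry is immediate. Hence $\mathcal{L}$ annihilates every entry of each reduced determinant, and since the derivative of a determinant is the sum over cofactors of the derivatives of its entries, $\mathcal{L}$ annihilates the reduced $\tau_k^0,\tau_k^1,\bar{\tau}_k^1$ as well, which establishes \eqref{dm_cond_br_dk}. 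I expect the only delicate bookkeeping to be the treatment of the bordered determinants $\tau_k^1$ and $\bar{\tau}_k^1$: one must verify that the exponential factored from each of the first $N$ rows and columns interacts correctly with the border entries, which carry $e^{\xi_i}$ and $e^{\bar{\eta}_j}$ rather than $e^{\bar{\xi}_j}$, so that the reduced border entries are precisely the constants and the single-exponential $e^{\bar{\eta}_j-\bar{\xi}_j}$ used above. Once the $N\times N$ block is settled, these border checks are routine and invoke the same condition \eqref{dm_para_br_dk}.
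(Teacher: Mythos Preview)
Your proposal is correct and follows essentially the same route as the paper: factor out the common plane-wave prefactor $\prod_n e^{\xi_n+\bar{\xi}_n}$, apply $\mathcal{L}=\varepsilon_1\partial_{y_1}+\varepsilon_2\rho_2^2\partial_{x_{-1}}-\partial_{x_1}$ entrywise to the reduced matrix and border vectors, and observe that the resulting scalar factor splits into the row and column pieces that vanish exactly under \eqref{dm_para_br_dk}. Your treatment is in fact slightly more explicit than the paper's in separating the $i$- and $j$-parts and in spelling out the border checks for $\tau_k^1$; the case of $\bar{\tau}_k^1$ goes through symmetrically with the row condition, as you anticipate.
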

\begin{proof}
    For \(\tau_k^0\), we have
    \begin{align}
        &\tau_k^0=\prod_{n=1}^{N} e^{\xi_n+\bar{\xi}_n} \det\left(\mathfrak{m}_{i j}^{k} \right),
        \nonumber\\ 
        &\mathfrak{m}_{i j}^{k}= \frac{1}{p_i+\bar{p}_j}\left(-\frac{p_i - a}{\bar{p}_j + a}\right)^k+\frac{\tilde{C}_i \bar{C}_j}{q_i+\bar{q}_j}e^{\eta_i+\bar{\eta}_j-\xi_i-\bar{\xi}_j}. \label{frakm}
    \end{align}
    The term \(\prod_{n=1}^{N} e^{\xi_i+\eta_i}\) can be dropped in \eqref{KP_bright_dark_1}-\eqref{KP_bright_dark_10} due to the property of the \(D\)-operator in the bilinear equations. 
    Note that 
    \begin{align}
        \left(\varepsilon_1 \partial_y + \varepsilon_2 \rho_2^2 \partial_{x_{-1}} - \partial_{x_1}\right)\mathfrak{m}_{i j}^{k}=\left(\varepsilon_1 (q_i + \bar{q}_j) - \varepsilon_2 \rho_2^2 \left(\frac{1}{p_i-a} + \frac{1}{\bar{p}_j+a}\right) + (p_i + \bar{p}_j)\right)\frac{\tilde{C}_i \bar{C}_j}{q_i+\bar{q}_j}e^{\eta_i+\bar{\eta}_j-\xi_i-\bar{\xi}_j},\label{diff_rel_bright_dark}
    \end{align}
    Hence, by condition \eqref{dm_para_br_dk} we have \(\left(\varepsilon_1 \partial_y + \varepsilon_2 \rho_2^2 \partial_{x_{-1}}\right)\mathfrak{m}_{i j}^{k} = \partial_{x_1}\mathfrak{m}_{i j}^{k} \) and
    \begin{equation*}
        \left(\varepsilon_1 \partial_y + \varepsilon_2 \rho_2^2 \partial_{x_{-1}}\right)\tau_k^0=\sum_{i,j=1}^{N} \Delta_{ij}\left(\rho_1^2 \partial_r+\rho_2^2 \partial_s\right)\mathfrak{m}_{ij}^{\mathrm{n}}=\sum_{i,j=1}^{N} \frac{1}{c}\partial_x\mathfrak{m}_{ij}^{\mathrm{n}}=\frac{1}{c}\partial_x\tau_k^0.
    \end{equation*}
    
    For \(\tau_k^1\), we have
    \begin{align}
        \tau_k^1 = \prod_{n=1}^{N} e^{\xi_n+\bar{\xi}_n} \begin{vmatrix}
        \mathfrak{m}_{i j}^{k} & \left(\dfrac{p_i - a}{-a}\right)^k \\
        -\bar{C}_i e^{\bar{\eta}_j - \bar{\xi}_j} & 0
    \end{vmatrix},
    \end{align}
    where \(\mathfrak{m}_{i j}^{k}\) is defined as \eqref{frakm}.
    Similar to the previous case, by \eqref{dm_para_br_dk} we have \(\left(\varepsilon_1 \partial_y + \varepsilon_2 \rho_2^2 \partial_{x_{-1}}\right)e^{\bar{\eta}_j - \bar{\xi}_j} = \partial_{x_1}e^{\bar{\eta}_j - \bar{\xi}_j} \). Therefore, we have
    \begin{equation*}
        \left(\varepsilon_1 \partial_y + \varepsilon_2 \rho_2^2 \partial_{x_{-1}}\right)\tau_k^1 = \frac{1}{c}\partial_x\tau_k^1.
    \end{equation*}
    Similarly, we can show the proof of  \(\bar{\tau}_k^1\).
\end{proof}
With the help of dimension reduction, \eqref{KP_bright_dark_2} and \eqref{KP_bright_dark_3},  \eqref{KP_bright_dark_8} and \eqref{KP_bright_dark_9},
\eqref{KP_bright_dark_4} and \eqref{KP_bright_dark_5} lead to
\begin{align}
    &\left(D_{x_1}^3 - D_{x_1}D_{x_2} - 4\varepsilon_2 \rho_2^2 (D_{x_1} -a )\right)\tau_k^1\cdot \tau_k^0 - 4 a \varepsilon_2 \rho_2^2 \tau_{k+1}^1 \tau_{k-1}^0 = 0, \label{KP_bright_dark_2.1}\\
    &\left(D_{x_1}^3-D_{x_1}D_{x_2}+2 a D_{x_1}^2-4\varepsilon_2\rho_2^2 D_{x_1}\right) \tau_{k+1}^0 \cdot \tau_k^0 + 4 a \varepsilon_1 \tau_{k+1}^1 \bar{\tau}_k^1=0\,, \label{KP_bright_dark_8.1} \\
    &\left(D_{x_1}^2 - 2 \varepsilon_2 \rho_2^2 \right)\tau_k^0 \cdot \tau_k^0 = -2 \varepsilon_1 \tau_k^1 \bar{\tau}_k^1 -2\varepsilon_2 \rho_2^2 \tau_{k+1}^0 \tau_{k-1}^0, \label{KP_bright_dark_4.1}
\end{align}
respectively.
To eliminate the extra \(D_{x_1}D_{x_2}\) term in \eqref{KP_bright_dark_2.1} and \eqref{KP_bright_dark_8.1} with the use of \eqref{KP_bright_dark_1} and \eqref{KP_bright_dark_7}, one obtains
\begin{align}
    &\left(D_{x_1}^3 - D_{x_3}- 6\varepsilon_2 \rho_2^2 D_{x_1} + 3a \varepsilon_2 \rho_2^2\right)\tau_k^1\cdot \tau_k^0 - 3 a \varepsilon_2 \rho_2^2 \tau_{k+1}^1 \tau_{k-1}^0 = 0, \label{KP_bright_dark_1.2}\\
    &\left(D_{x_1}^3 - D_{x_3}+ 3aD_{x_1}^2 -3\left(-a^2+2\varepsilon_2\rho_2^2\right) D_x\right) \tau_{k+1}^0 \cdot \tau_k^0 + 3 a \varepsilon_1 \tau_{k+1}^1\bar{\tau}_k^1=0. \label{KP_bright_dark_7.2}
\end{align}
after a further gauge transformation \(x_1 \to x_1 - 3\varepsilon_2 \rho_2^2 x_3\).

Note that by performing the dimension reduction, \(x_{-1}, y_1, x_2\) become dummy variables, and we can set them to zero. 
Next, let us consider the complex conjugate reduction.
    Let \(x_1 = x, x_3 = t\) and \( a = \i\alpha_2 \in \i \mathbb{R} \), under the constraints in Lemma 4.2 and the following parameter constraints
    \begin{equation}
       \bar{p}_i^* = p_i, \quad \bar{\xi}_{i,0}^* = \xi_{i0}, \quad C_i = \left(\tilde{C}_i\right)^* = \bar{C}_i,
    \end{equation}
    where \( i = 1, 2, \ldots, N \),  it is shown that the \(\tau\) functions exhibit the following complex conjugate relationships:
    \begin{equation}\label{bd_complex_reduction_res}
        \tau_{-k}^0 = \left(\tau_{k}^0\right)^*, \quad \bar{\tau}_{-k}^1 = \left(\tau_k^1\right)^*.
    \end{equation}

With the help of complex conjugate relation, by taking \(k=0\) in \eqref{KP_bright_dark_1.2}, \eqref{KP_bright_dark_7.2}, \eqref{KP_bright_dark_4.1} and \eqref{KP_bright_dark_10} and setting \(\tau_0^0 = f, \tau_0^1 = g_1, \bar{\tau}_0^1 = g_1^*, \tau_1^0 = h_2, \tau_{-1}^0 = h_2^*, \tau_1^1 = s_{12} = s_{21}\), we get the bilinear form of the cHirota equation \eqref{BL_bright_dark_1}-\eqref{BL_bright_dark_4}. The theorem is approved.

\subsection{Dynamics of the bright-dark soliton solution}
Taking \(N = 1\) in \cref{theorem:bd_solution}, the solution \(u_1\) and \(u_2\) are expressed as
\begin{align}
    \begin{split}
        u_1 &= \frac{C_1 \exp(\xi_1) (p_1 + p_1^*)}{\exp(\xi_1 + \xi_1^*) + d_1} \exp\left(\i \theta_1\right) \\
        &= C_1 \exp\left(\i \theta_1\right) \Re(p_1) \frac{\exp(\i \Im (\xi_1))}{\sqrt{d_1}} \sech \left(\Re(\xi_1) -\log\left(\sqrt{d_1}\right)\right), \label{bd_n=1_b}\\
    \end{split}\\
    \begin{split}
        u_2 &= \frac{\exp(\xi_1 + \xi_1^*)\left(-\dfrac{p_1 - \i \alpha_2}{p_1^* + \i \alpha_2}\right) + d_1}{\exp(\xi_1 + \xi_1^*) + d_1} \rho_2\exp\left(\i \theta_2\right) \\
        &=\rho_2\frac{\exp(\i \theta_2)}{p_1^* + \i \alpha_2} \left[\i (\alpha_2 - \Im(p_1)) - \Re(p_1) \tanh \left(\Re(\xi_1) - \log\left(\sqrt{d_1}\right)\right)\right], \label{bd_n=1_d}
    \end{split}
\end{align}
where \(\xi_1 = p_1 \left(x - 3\varepsilon_2 \rho_2^2 t\right) + p_1^3 t + \xi_{10}\) and
\begin{align}\label{bd_theta_12}
    d_1 = \frac{\varepsilon_1 |C_1|^2 |p_1 - \i \alpha_2|^2}{\varepsilon_2 \rho_2^2 - |p_1 - \i \alpha_2|^2}, \quad
    \theta_1 = -3 \rho_2^2 \varepsilon_2 \alpha_2 t, \quad \theta_2 = \alpha_2 x - (\alpha_2^3 + 6\varepsilon_2\alpha_2 \rho_2^2) t.
\end{align}
One may recognize the bright and dark soliton from the expression of \(u_1\) and \(u_2\). Note that 
\[\Re(\xi_1) = \Re(p_1) \left(x - \left(3\varepsilon_2 \rho_2^2 - \left[\Re(p_1)\right]^2 + 3\left[\Im(p_1)\right]^2\right) t\right),\]
hence both the bright and dark solitons move with a speed \(3\varepsilon_2 \rho_2^2 - \left[\Re(p_1)\right]^2 + 3\left[\Im(p_1)\right]^2\). 
To avoid the singular solution, we require \(d_1 > 0\).
The first order bright-dark soliton solution is illustrate in \cref{figure:bd_N=1}.

\begin{figure}[!ht]
    \centering
    \subfigure[]{\label{figure:bd_N=1(a)}
        \includegraphics[width=60mm]{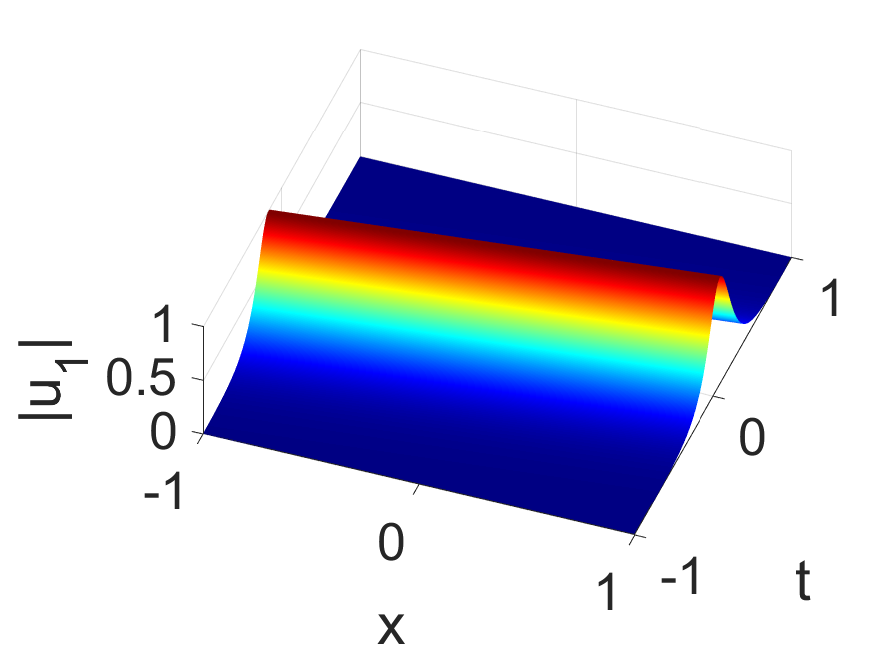}
    }
    \subfigure[]{\label{figure:bd_N=1(b)}
        \includegraphics[width=60mm]{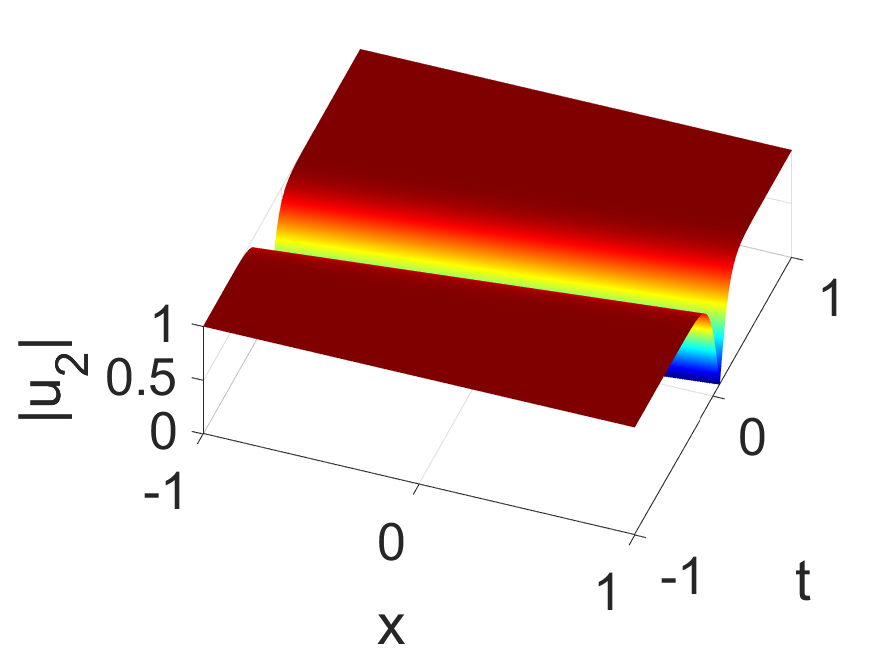}
    }
    \caption{One-bright-one-dark soliton solution to the cHirota equation with parameters \(p_1= 1 + \i, C_1 = 1 + \i, \rho_2 = 1, \alpha_2 =1, \varepsilon_1 = 1, \varepsilon_2 = 2\).}\label{figure:bd_N=1}
\end{figure}
By taking \(N = 2\) in \cref{theorem:bd_solution}, we have
\begin{align*}
    g_1 &= \begin{vmatrix}
        \dfrac{\exp(\xi_1 + \xi_1^*)}{p_1 + p_1^*} + \dfrac{|C_1|^2}{q_1 + q_1^*} & \dfrac{\exp(\xi_1 + \xi_2^*)}{p_1 + p_2^*} + \dfrac{C_1^* C_2}{q_1 + q_2^*} & \exp(\xi_1) \\
        \dfrac{\exp(\xi_2 + \xi_1^*)}{p_2 + p_1^*} + \dfrac{C_2^* C_1}{q_2 + q_1^*} & \dfrac{\exp(\xi_2 + \xi_2^*)}{p_2 + p_2^*} + \dfrac{|C_2|^2}{q_2 + q_2^*} & \exp(\xi_2) \\ 
         -C_1 & -C_2 &  0
    \end{vmatrix}, \\
    h_2 &= \begin{vmatrix}
        \dfrac{\exp(\xi_1 + \xi_1^*)}{p_1 + p_1^*}\left(-\dfrac{p_1 - \i \alpha_2}{p_1^* + \i \alpha_2}\right) + \dfrac{|C_1|^2}{q_1 + q_1^*} & \dfrac{\exp(\xi_1 + \xi_2^*)}{p_1 + p_2^*}\left(-\dfrac{p_1 - \i \alpha_2}{p_2^* + \i \alpha_2}\right) + \dfrac{C_1^* C_2}{q_1 + q_2^*} \\
        \dfrac{\exp(\xi_2 + \xi_1^*)}{p_2 + p_1^*}\left(-\dfrac{p_2 - \i \alpha_2}{p_1^* + \i \alpha_2}\right) + \dfrac{C_2^* C_1}{q_2 + q_1^*} & \dfrac{\exp(\xi_2 + \xi_2^*)}{p_2 + p_2^*}\left(-\dfrac{p_2 - \i \alpha_2}{p_2^* + \i \alpha_2}\right) + \dfrac{|C_2|^2}{q_2 + q_2^*} \\
    \end{vmatrix}, \\
    f &= \begin{vmatrix}
        \dfrac{\exp(\xi_1 + \xi_1^*)}{p_1 + p_1^*} + \dfrac{|C_1|^2}{q_1 + q_1^*} & \dfrac{\exp(\xi_1 + \xi_2^*)}{p_1 + p_2^*} + \dfrac{C_1^* C_2}{q_1 + q_2^*} \\
        \dfrac{\exp(\xi_2 + \xi_1^*)}{p_2 + p_1^*} + \dfrac{C_2^* C_1}{q_2 + q_1^*} & \dfrac{\exp(\xi_2 + \xi_2^*)}{p_2 + p_2^*} + \dfrac{|C_2|^2}{q_2 + q_2^*} \\
    \end{vmatrix},
\end{align*}
where \(\theta_1\), \(\theta_2\) are defined by \eqref{bd_theta_12} and 
\begin{align*}
    q_1 = \frac{1}{\varepsilon_1} \left(-p_1 + \frac{\varepsilon_2 \rho_2^2}{p_1 - \i \alpha_2}\right), \quad q_2 = \frac{1}{\varepsilon_1} \left(-p_2 + \frac{\varepsilon_2 \rho_2^2}{p_2 - \i \alpha_2}\right).
\end{align*}
Note that if we take either \(C_1 = 0\) or \(C_2 = 0\), we will get degenerated first order soliton solution. For example, when \(C_1 = 0\), we have
\begin{align*}
    u_1 &= \frac{\exp(\i \theta_1)}{C_2^*} \frac{(q_2 + q_2^*)(p_2 - p_1)}{p_2 + p_1^*} \frac{\exp(\i \Im(\Re(\xi_2)))}{2 \psi} \sech \Bigl(\Re(\xi_2) + \log(\psi)\Bigr),\\
    u_2 &= \rho_2\exp(\i \theta_2)\left(-\frac{p_1 - \i \alpha_2}{p_1^* + \i \alpha_2}\right)\biggl(\i \Bigl(\alpha_2 - \Im(p_2)\Bigr) - \Re(p_2)\tanh\Bigl(\Re(\xi_2) + \log(\psi)\Bigr) \biggr),
\end{align*}
where 
\begin{align*}
    \psi = \sqrt{\left(\frac{1}{(p_1+p_1^*)(p_2+p_2^*)} - \frac{1}{(p_1+p_2^*)(p_1^*+p_2)}\right)\frac{(p_1+p_1^*)(q_2+q_2^*)}{|C_2|^2}}.
\end{align*}
We call above solution is degenerated first order solution based on its structure in comparison to \eqref{bd_n=1_b} and \eqref{bd_n=1_d}.

Next, we will investigate the asymptotic behavior of the two-soliton solution. In this analysis, we assume that \(C_1 \neq 0\) and \(C_2 \neq 0\) to avoid solution degeneration. We label the soliton determined by \(\xi_1 + \xi_1^*\) as soliton 1, and the soliton determined by \(\xi_2 + \xi_2^*\) as soliton 2. Additionally, we assume that soliton 1 is located to the right of soliton 2 before their collision.

\begin{enumerate}[(1)]
    \item Before collision, i.e., \(t \rightarrow - \infty\), this gives
    
    Soliton 1 (\(\xi_1 + \xi_1^* \approx 0,\ \xi_2 + \xi_2^* \rightarrow - \infty\))
    \begin{align*}
        u_1 &\simeq C_1 \exp(\i \theta_1) \frac{(p_1+p_1^*)(q_1^* - q_2^*)}{q_2 + q_1^*} \frac{\exp(\i \Im(\xi_1))}{2\varphi_1}\sech \Bigl(\Re(\xi_1) - \log (\varphi_1)\Bigr),\\
        u_2 &\simeq \rho_2\frac{\exp(\i \theta_2)}{p_1^* + \i \alpha_2} \biggl(\i \Bigl(\alpha_2 - \Im(p_1)\Bigr) - \Re(p_1)\tanh\Bigl(\Re(\xi) - \log(\varphi_1)\Bigr) \biggr),
    \end{align*}
    where the phase term \(\varphi_1\) is given by
    \begin{align*}
        \varphi_1 = \sqrt{|C_1|^2(p_1+p_1^*)(q_2+q_2^*)\left(\frac{1}{(q_1+q_1^*)(q_2+q_2^*)}-\frac{1}{(q_2+q_1^*)(q_1+q_2^*)}\right)}.
    \end{align*}

    Soliton 2 (\(\xi_2 + \xi_2^* \approx 0,\ \xi_1 + \xi_1^* \rightarrow + \infty\))
   \begin{align*}
        u_1 &\simeq \frac{\exp(\i \theta_1)}{C_2^*} \frac{(q_2 + q_2^*)(p_2 - p_1)}{p_2 + p_1^*} \frac{\exp(\i \Im(\xi_2))}{2 \psi_1} \sech \Bigl(\Re(\xi_2) + \log(\psi_1)\Bigr), \\
        u_2 &\simeq \rho_2\exp(\i \theta_2)\left(-\frac{p_1 - \i \alpha_2}{p_1^* + \i \alpha_2}\right)\biggl(\i \Bigl(\alpha_2 - \Im(p_2)\Bigr) - \Re(p_2)\tanh\Bigl(\Re(\xi_2) + \log(\psi_1)\Bigr) \biggr),
    \end{align*}
    where the phase term \(\psi_1\) is given by
    \begin{align*}
        \psi_1 = \sqrt{\left(\frac{1}{(p_1+p_1^*)(p_2+p_2^*)} - \frac{1}{(p_1+p_2^*)(p_1^*+p_2)}\right)\frac{(p_1+p_1^*)(q_2+q_2^*)}{|C_2|^2}}.
    \end{align*}

    \item After collision, i.e., \(t \rightarrow + \infty\), this gives

    Soliton 1 (\(\xi_1 + \xi_1^* \approx 0,\ \xi_2 + \xi_2^* \rightarrow + \infty\))
    \begin{align*}
        u_1 &\simeq \frac{\exp(\i \theta_1)}{C_1^*} \frac{(q_1 + q_1^*)(p_1 - p_2)}{p_1 + p_2^*} \frac{\exp(\i \Im(\xi_1))}{2 \varphi_2} \sech \Bigl(\Re(\xi_1)  + \log(\varphi_2)\Bigr), \\
        u_2 &\simeq \rho_2\exp(\i \theta_2)\left(-\frac{p_2 - \i \alpha_2}{p_2^* + \i \alpha_2}\right)\biggl(\i \Bigl(\alpha_2 - \Im(p_1)\Bigr) - \Re(p_1)\tanh\Bigl(\Re(\xi_1) + \log(\varphi_2)\Bigr) \biggr),
    \end{align*}
    where the phase term \(\varphi_2\) is given by
    \begin{align*}
        \varphi_2 = \sqrt{\left(\frac{1}{(p_1+p_1^*)(p_2+p_2^*)} - \frac{1}{(p_1+p_2^*)(p_1^*+p_2)}\right)\frac{(p_2+p_2^*)(q_1+q_1^*)}{|C_1|^2}}.
    \end{align*}

    Soliton 2 (\(\xi_2 + \xi_2^* \approx 0,\ \xi_1 + \xi_1^* \rightarrow - \infty\))
    \begin{align*}
        u_1 &\simeq C_2 \exp(\i \theta_1) \frac{(p_2+p_2^*)(q_2^* - q_1^*)}{q_1 + q_2^*} \frac{\exp(\i \Im(\xi_2))}{2\psi_2}\sech \Bigl(\Re(\xi_2) - \log (\psi_2)\Bigr),\\
        u_2 &\simeq \rho_2\frac{\exp(\i \theta_2)}{p_2^* + \i \alpha_2} \biggl(\i \Bigl(\alpha_2 - \Im(p_2)\Bigr) - \Re(p_2)\tanh\Bigl(\Re(\xi) - \log(\psi_2)\Bigr) \biggr),
    \end{align*}
    where the phase term \(\varphi_2\) is given by
    \begin{align*}
        \psi_2 = \sqrt{|C_2|^2(p_2+p_2^*)(q_1+q_1^*)\left(\frac{1}{(q_1+q_1^*)(q_2+q_2^*)}-\frac{1}{(q_2+q_1^*)(q_1+q_2^*)}\right)}.
    \end{align*}
\end{enumerate}
Therefore, the asymptotic analysis shows that there is a phase shift between the solitons before and after the collision. Specifically, for soliton 1, the phase term \(-\log(\varphi_1) \neq \log(\varphi_2)\), and for soliton 2, \(-\log(\psi_1) \neq \log(\psi_2)\). This phenomenon is evident in the illustration of our two-soliton solution, as shown in \cref{figure:bd_N=2}.

\begin{figure}[!ht]
    \centering
    \subfigure[]{\label{figure:bd_N=2(a)}
        \includegraphics[width=60mm]{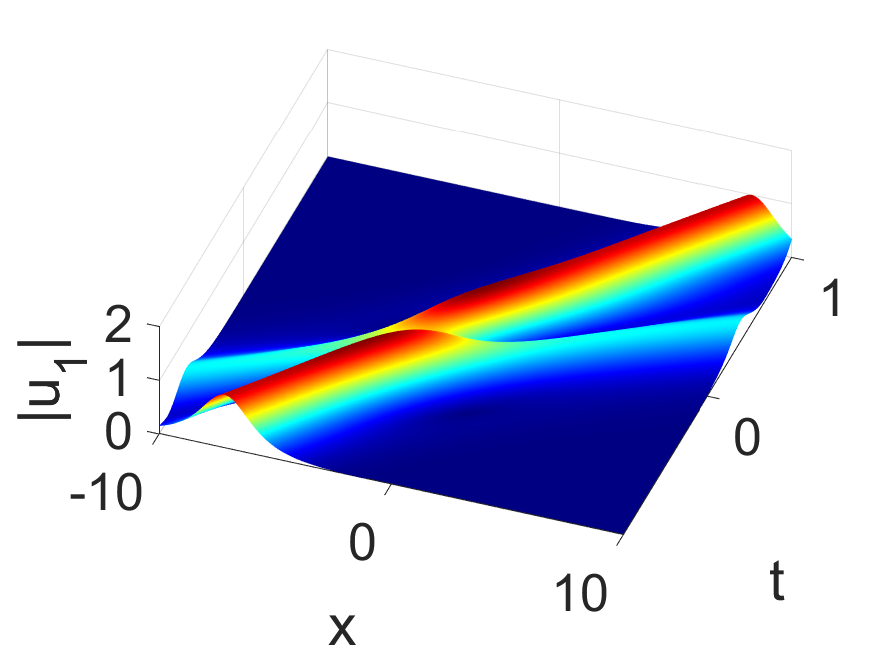}
    }
    \subfigure[]{\label{figure:bd_N=2(b)}
        \includegraphics[width=60mm]{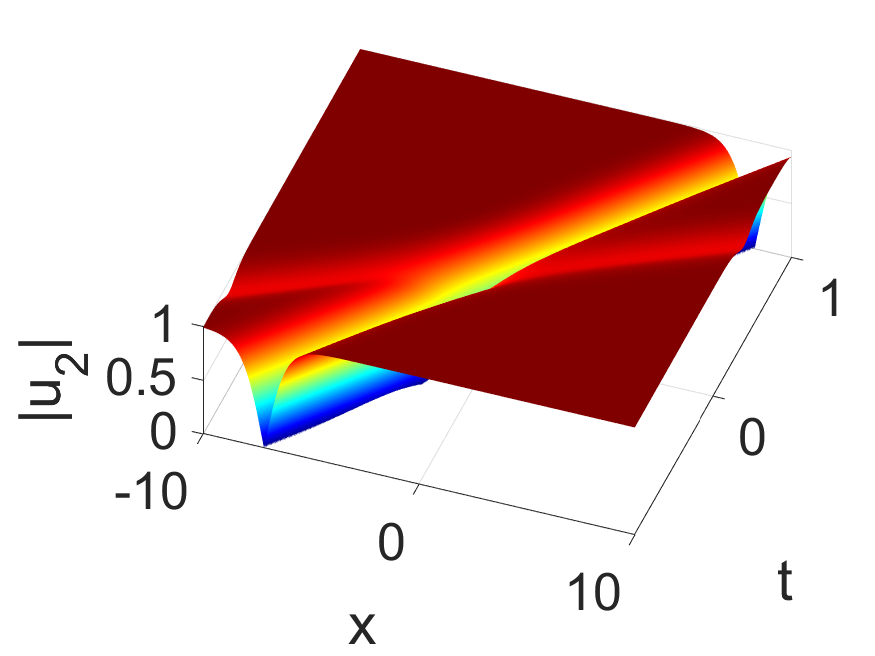}
    }
    \caption{Two-bright-two-dark soliton solution to the cHirota equation with parameters \(p_1= 1 + \i, p_2 = 1/2 + 2\i, C_1 = 1 + \i, C_2 = 2+\i, \rho_2 = 1, \alpha_2 =1, \varepsilon_1 = 1, \varepsilon_2 = 2\).}\label{figure:bd_N=2}
\end{figure}

\section{Bright soliton solution to the Sasa-Satsuma equation}\label{section:b_sasa}
As we mentioned previously, the bright soliton solution to SS equation \eqref{sse} can be obtained by further requiring \(u_1^* = u_2\) and \(\varepsilon = \varepsilon_1 = \varepsilon_2\).  To archive this, we have the following lemma.
\begin{lemma}
    The reduction condition \(u_1^* = u_2\) holds under the following parameter restrictions
    \begin{equation}\label{cond_bright_sse}
        \varepsilon = \varepsilon_1 = \varepsilon_2, \quad p_{N+1-i}^* = p_i, \quad \xi_{N+1-i,0}^* = \xi_{i,0}, \quad D_{N+1-i}^* = C_i, \quad D_i^* = C_{N+1-i}.
    \end{equation}
\end{lemma}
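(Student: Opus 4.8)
The plan is to reduce the identity $u_1^*=u_2$ to the single determinant identity $g_1^*=g_2$, using that the matrix $M$ from \cref{theorem:bb_solution} is already Hermitian so that $f$ is automatically real. Indeed, from $m_{ij}=\frac{1}{p_i+p_j^*}\bigl(e^{\xi_i+\xi_j^*}-\varepsilon_1 C_i^*C_j-\varepsilon_2 D_i^*D_j\bigr)$ one checks directly that $m_{ij}^*=m_{ji}$, hence $M^*=M^T$ and $f^*=\det(M^*)=\det(M^T)=\det(M)=f$. Consequently it suffices to prove $g_1^*=g_2$, for then $u_1^*=(g_1/f)^*=g_1^*/f^*=g_2/f=u_2$.

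First I would record the elementary consequences of the restrictions \eqref{cond_bright_sse}. Writing $\xi_i=p_i x+p_i^3 t+\xi_{i0}$ and using $p_{N+1-i}^*=p_i$ with $\xi_{N+1-i,0}^*=\xi_{i0}$ gives the index-reversal relation $\xi_i^*=\xi_{N+1-i}$ (equivalently $\xi_{N+1-i}^*=\xi_i$). The amplitude restrictions $D_{N+1-i}^*=C_i$ and $D_i^*=C_{N+1-i}$ likewise yield $C_{N+1-i}=D_i^*$ and $D_{N+1-i}=C_i^*$. The heart of the argument is then the entrywise identity $m_{N+1-i,\,N+1-j}^*=m_{ij}$: conjugating $m_{N+1-i,N+1-j}$ and substituting $p_{N+1-i}^*=p_i$, $p_{N+1-j}=p_j^*$, $\xi_{N+1-i}^*=\xi_i$, $\xi_{N+1-j}=\xi_j^*$, together with $C_{N+1-i}=D_i^*$, $C_{N+1-j}^*=D_j$, $D_{N+1-i}=C_i^*$, $D_{N+1-j}^*=C_j$, the two amplitude contributions swap roles (the $C$-term becomes a $D$-term and vice versa) and collapse to $m_{ij}$ precisely because $\varepsilon_1=\varepsilon_2=\varepsilon$.

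With the entrywise identity in hand I would apply the reversal permutation $i\mapsto N+1-i$ simultaneously to the first $N$ rows and the first $N$ columns of the bordered determinant $g_1^*$, leaving the $(N+1)$-st border row and column fixed. This operation multiplies the determinant by $\mathrm{sgn}(\text{reversal})^2=1$, so it is value-preserving; it sends the top-left block $(m_{ij}^*)$ to $(m_{ij})=M$ by the entrywise identity, the border column $\Phi^*=(e^{\xi_i^*})_i$ to $\Phi=(e^{\xi_i})_i$ via $\xi_i^*=\xi_{N+1-i}$, and the border row $-(\bar\Psi^*)^T=(-C_i^*)_i$ to $-\bar\Upsilon^T=(-D_i)_i$ via $C_i^*=D_{N+1-i}$. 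The transformed determinant is exactly the one defining $g_2$, so $g_1^*=g_2$ and therefore $u_1^*=u_2$.

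The only delicate point is the bookkeeping: verifying the entrywise identity and tracking how each of the two border vectors transforms under the relabeling $i\mapsto N+1-i$ while keeping the conjugation consistent across all four blocks of the bordered determinant. The structural reasons the reduction closes are that the reversal permutation is an involution (so the square of its sign is trivial and the determinant value is unchanged) and that the hypothesis $\varepsilon_1=\varepsilon_2$ renders the $C$- and $D$-contributions interchangeable; once these are secured the conclusion $u_1^*=u_2$ follows immediately from $f^*=f$.
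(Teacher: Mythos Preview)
Your proposal is correct and follows essentially the same route as the paper: both establish the index-reversal relations $\xi_i^*=\xi_{N+1-i}$ and $m_{i,j}^*=m_{N+1-i,N+1-j}$, then apply the reversal permutation simultaneously to the first $N$ rows and columns of the bordered determinant to convert $g_1^*$ into $g_2$. The only (harmless) deviation is that you dispose of $f^*=f$ by citing the Hermiticity of $M$ already proved in the complex-conjugate reduction lemma, whereas the paper simply remarks that the same permutation argument applies to $f$.
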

\begin{proof}
    By employing \eqref{cond_bright_sse} on solution determined by \cref{theorem:bb_solution}, we have
    \begin{align*}
        \xi_{i}^* &= p_i^* x + \left(p_i^*\right)^3 t + \xi_{i,0}^* = p_{N+1-i} + p_{N+1-i}^3 t + \xi_{N+1-i},\\
        m_{i,j}^* &= \frac{1}{p_i^*+p_j}\left(e^{\xi_i^*+\xi_j} - \varepsilon C_i C_j^* - \varepsilon D_i D_j^* \right) \\
        &= \frac{1}{p_{N+1-i}+p_{N+1-j}}\left(e^{\xi_i^*+\xi_j} - \varepsilon D_{N+1-i}^* D_{N+1-j} - \varepsilon C_{N+1-i}^* C_{N+1-j} \right) = m_{N+1-i,N+1-j}.
    \end{align*}
    Above complex conjugate relations allow us to rewrite \(g_1^*\) as
    \begin{align*}
        g_1^* = \begin{vmatrix}
            m_{i,j}^* & e^{\xi_i^*} \\ -C_j^* & 0
        \end{vmatrix} 
        = \begin{vmatrix}
            m_{N+1-i,N+1-j} & e^{\xi_{N+1-i}} \\ -D_{N+1-j} & 0
        \end{vmatrix},
    \end{align*}
    now, let us perform the following row operations: switching \(i\)-th row with \((N+1-i)\)-th row, and \(j\)-th column with \((N+1-j)\)-th column in above determinant, where \(i = 1,\ldots, \lfloor N/2 \rfloor\), this results
    \begin{align*}
        g_1^* =  \begin{vmatrix}
            m_{i,j} & e^{\xi_i} \\ -D_j & 0
        \end{vmatrix} = g_2.
    \end{align*}
   Similarly, can show \(f^* = f\), hence we have \(u_1^*  = u_2\). 
\end{proof}

This shows that solution from \cref{theorem:bb_solution} solves the SS equation \eqref{sse} with additional restriction \eqref{cond_bright_sse}. 
In what follows, we will summarize bright soliton solution to the SS equation into the following theorem.
\begin{theorem}\label{theorem:bright_ss}
    Equation \eqref{sse} admits the bright soliton solutions given by \(u=g/f,\) with \(f,\ g\) defined as
    \begin{align}
        f=|M|,\quad
        g=\begin{vmatrix}
            M & \Phi \\ -\left(\bar{\Psi}\right)^T & 0
            \end{vmatrix},
    \end{align}
    where \(M\) is \(N\times N\) matrix, \(\Phi \), \(\bar{\Psi} \), are \(N\)-component vectors whose elements are defined respectively as
    \begin{align}
        &m_{ij}=\frac{1}{p_i+p_j^*}\left(e^{\xi_i+\xi_j^*} - \varepsilon C_i^* C_j - \varepsilon C_{N+1-i} C_{N+1-j}^* \right),\quad \xi_i=p_i x + p_i^3 t+\xi_{i0},\\
        &\Phi = \left(e^{\xi_{1}},e^{\xi_2},\ldots, e^{\xi_N}\right)^T,\quad
        \bar{\Psi}=\left(C_1, C_2,\ldots, C_N\right)^T.
    \end{align}
    Here, \(p_i^* = p_{N+1-i}\), \(\xi_{i0}^* = \xi_{N+1-i,0}\), \(C_i\) are complex parameters, where \(i = 1, 2, \ldots, N\).
\end{theorem}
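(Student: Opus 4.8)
The plan is to obtain \cref{theorem:bright_ss} not by constructing fresh \(\tau\)-functions, but as a direct reduction of the bright-bright solution already established in \cref{theorem:bb_solution}. First I would confirm, purely at the level of the PDEs, that the SS equation \eqref{sse} is exactly the constraint \(u_1^* = u_2\), \(\varepsilon_1 = \varepsilon_2 = \varepsilon\) imposed on the coupled Hirota system \eqref{chirota_1}-\eqref{chirota_2}. Substituting \(u_2 = u_1^*\) gives \(\varepsilon_1|u_1|^2 + \varepsilon_2|u_2|^2 = 2\varepsilon|u_1|^2\) and \(\varepsilon_1 u_1^* u_{1,x} + \varepsilon_2 u_2^* u_{2,x} = \varepsilon\,(|u_1|^2)_x\), so \eqref{chirota_1} collapses precisely to \eqref{sse} with \(u = u_1\), while a short computation shows \eqref{chirota_2} becomes the complex conjugate of \eqref{chirota_1}. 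Hence it suffices to exhibit a solution of the coupled system honouring the reduction constraint \(u_1^* = u_2\).

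That constraint is exactly what the lemma preceding the theorem supplies: under the parameter restrictions \eqref{cond_bright_sse}, the determinants of \cref{theorem:bb_solution} satisfy \(f^* = f\) and \(g_1^* = g_2\), so that \(u_1^* = u_2\). I would therefore simply set \(u = u_1 = g_1/f\). By the two observations above, this \(u\) solves \eqref{sse}, and the companion relation \(u^* = u_1^* = u_2 = g_2/f\) is automatically consistent rather than an extra condition to be enforced.

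It then remains only to put \(f\) and \(g\) into the stated \(N\times N\) form, which is a matter of inserting \eqref{cond_bright_sse} into the entries of \cref{theorem:bb_solution}. With \(\varepsilon_1 = \varepsilon_2 = \varepsilon\) and \(D_i^* = C_{N+1-i}\) (equivalently \(D_i = C_{N+1-i}^*\)), the term \(\varepsilon_2 D_i^* D_j\) becomes \(\varepsilon\,C_{N+1-i} C_{N+1-j}^*\), yielding exactly
\[
m_{ij} = \frac{1}{p_i + p_j^*}\left(e^{\xi_i + \xi_j^*} - \varepsilon C_i^* C_j - \varepsilon C_{N+1-i} C_{N+1-j}^*\right),
\]
while \(g = g_1\) keeps its bordered structure with \(\bar{\Psi} = (C_1, \ldots, C_N)^T\). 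The second-component data \(D_i\) has thus been traded entirely for \(C_i\), giving the compact \(N\times N\) determinant advertised in the introduction.

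The only point requiring genuine care — and the natural candidate for the main obstacle — is the internal consistency and symmetry of the conjugation conditions \eqref{cond_bright_sse}. One must check that the two prescriptions \(D_{N+1-i}^* = C_i\) and \(D_i^* = C_{N+1-i}\) agree (they coincide under the involution \(i \mapsto N+1-i\)), and that together with \(p_{N+1-i}^* = p_i\) they render \(m_{ij}\) symmetric under conjugation in the way needed for \(f\in\mathbb{R}\) and \(g_1^* = g_2\). These are precisely the identities verified in the preceding lemma via the row/column swap \(i \leftrightarrow N+1-i\); since that swap involves an even number of transpositions, the sign of each determinant is preserved and no new estimate is needed. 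With this bookkeeping in place, the theorem follows at once.
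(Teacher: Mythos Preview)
Your proposal is correct and follows essentially the same route as the paper: reduce from \cref{theorem:bb_solution} by invoking the preceding lemma to secure \(u_1^* = u_2\) under \eqref{cond_bright_sse}, then substitute those restrictions into the matrix entries to obtain the stated \(N\times N\) form. The paper's own argument is precisely this reduction-plus-bookkeeping, so there is nothing to add.
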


\subsection{Dynamics of the bright soliton solution to the Sasa-Satsuma equation}
When \(N=1\),  one-soliton solution to the SS equation \eqref{sse} can be simplified as
\begin{equation*}
    u = \frac{2 p_1 C_1 \exp \left(\xi_1\right)}{\exp \left(2\xi_1\right) - 2\varepsilon |C_1|^2 } = \frac{C_1 p_1}{\sqrt{-2\varepsilon|C_1|^2}} \sech\left(\xi_1 - \log\left(\sqrt{-2\varepsilon|C_1|^2}\right)\right), 
\end{equation*}
where \(\xi_1 = p_1 x + p_1^3 t +\xi_{10}\) and \(p_1, \xi_{10} \in \mathbb{R}\). To guarantee  a regular solution, we require \(\varepsilon < 0\). 
We remark here that this solution also satisfies the Hirota equation \eqref{hirota_cmkdv_form}. 
It is noted that \(u^2 u_x^* = |u|^2 u_x\) for the above one-soliton solution. This implies that the solution satisfies the equation
\begin{equation*}
    u_t = u_{xxx} - 12\varepsilon |u|^2 u_x,
\end{equation*}
which corresponds to equation \eqref{hirota_cmkdv_form} with \(\alpha = 4\varepsilon\) and \(\gamma = -1\).

For \(N=2\), we get the second order bright soliton solution to the SS equation. It turns out in this case, one hump solution and one oscillated soliton solution are obtained rather than two-soliton solution. Denoting \(p_1 = a + \i b\), the solution to the SS equation is expressed as \(u = g/f\), where \(\xi_1 = p_1 x + p_1^3 t + \xi_{1,0}\) and
\begin{align*}
    f ={} & \frac{b^2}{4 a^2 (a^2 + b^2)} \exp(2\xi_1+2\xi_1^*) - \varepsilon\frac{|C_1|^2 + |C_2|^2}{2a^2} \exp(\xi_1+\xi_1^*) \\
    &+ \frac{\varepsilon}{2(a^2+b^2)} \left(C_1 C_2^* \exp(2\xi_1) + C_1^* C_2 \exp(2\xi_1^*)\right) \\
    & - \varepsilon^2 \frac{a^2 (|C_1|^2 - |C_2|^2)^2 + b^2 (|C_1|^2 + |C_2|^2)^2}{4 a^2 (a^2 + b^2)},\\
    g ={} & \frac{1}{2a} \left(C_1\exp(\xi_1) + C_2\exp(\xi_1^*)\right) \left(\exp(\xi_1 + \xi_1^*) - \varepsilon|C_1|^2 - \varepsilon|C_2|^2\right) \\
    & - \frac{C_1}{2a+2\i b} \exp(\xi_1 + \xi_1^*)\left(\exp(\xi_1) - 2\varepsilon C_1^* C_2 \exp(-\xi_1)\right)\\
    & - \frac{C_2}{2a-2\i b} \exp(\xi_1 + \xi_1^*)\left(\exp(\xi_1^*) - 2\varepsilon C_1 C_2^* \exp(-\xi_1^*)\right).
\end{align*}

Note that if \(C_1, C_2 \neq 0\), one term of \(g\) can be rewritten as
\begin{align*}
    C_1\exp(\xi_1) + C_2\exp(\xi_1^*) &= \sqrt{C_1 C_2} \exp(\Re (\xi_1)) \cos\left(\Im (\xi_1) - \i \log \left(\sqrt{\frac{C_1}{C_2}}\right)\right),
\end{align*}
which is periodic if \(\Im (\xi_1) \neq 0\), and we get the so called oscillated soliton solution, see \cref{fig:formation_oscillated(c)}. If \(\Im (\xi_1) = 0\), i.e., \(b = 0, \xi_{10}\in \mathbb{R}\), we have
\begin{equation*}
    u = \frac{2 a (C_1-C_2) \left(|C_2| ^2 - |C_1|^2\right)\exp(\xi_1)}{\varepsilon \left(| C_1| ^2 - |C_2|^2\right)^2-2 |C_1-C_2|^2 \exp(2\xi_1)},
\end{equation*}
which is equivalent to the bright soliton when \(N=1\).

Moreover, if either \(C_1 =0\) or \(C_2=0\), we get the a single or double hump solution. Assuming \(C_2 = 0\) without loss of generality, the solution can be rewritten as
\begin{align*}
    u = \frac{2 a C_1 (a - \i b)\exp(\xi_1)  \left(\i b \exp(\xi_1 + \xi_1^*) -\varepsilon(a + \i b) |C_1|^2\right)}{b^2 \exp(2\xi_1+2\xi_1^*) - 2\varepsilon(a^2 + b^2) |C_1|^2 \exp(\xi_1+\xi_1^*) + \varepsilon^2(a^2 + b^2) |C_1|^4},
\end{align*}
which is the bright soliton solution obtained originally by Sasa and Satsuma  \cite{sasa1991new}. 

In what follow, we give a detailed analysis by denoting \(y = \exp(\xi_1+\xi_1^*)\). A simple calculation gives
\begin{align*}
    |u|^2 &= \frac{4 a (a^2 + b^2) |C_1|^2 y \left(b^2 y^2 -2 b^2 \varepsilon |C_1|^2 y + \varepsilon^2(a^2 + b^2) |C_1|^4\right)}{\left(b^2 y^2 - 2\varepsilon(a^2 + b^2) |C_1|^2 y + \varepsilon^2(a^2 + b^2) |C_1|^4\right)^2},\\
    \frac{\partial |u|^2}{\partial x} &= \frac{8 a^3 (a^2 + b^2) |C_1|^4 y \left(\varepsilon^2(a^2 + b^2) |C_1|^4 - b^2 y^2\right)\left(b^2 y^2 - 2\varepsilon (a^2-b^2) |C_1|^2y + \varepsilon^2(a^2+b^2) |C_1|^4\right)}{\left(b^2 y^2 - 2\varepsilon(a^2 + b^2) |C_1|^2 y + \varepsilon^2(a^2 + b^2) |C_1|^4\right)^3}.
\end{align*}
Above \(u\) is singular if \(\varepsilon>0\).
Hence let us consider the case of \(\varepsilon < 0\), the equation \(\partial |u|^2/\partial_x   = 0\) will have five roots 
\begin{align*}
    &y_1 = 0,\quad y_2 = \varepsilon\frac{\sqrt{a^2+b^2}}{b} |C_1|^2,\quad y_3 = -\varepsilon\frac{\sqrt{a^2+b^2}}{b} |C_1|^2, \\
    &y_4 = \frac{\varepsilon(b^2 - a^2) - \sqrt{a^2 \varepsilon^2(a^2 - 3 b^2)}}{b^2} |C_1|^2,\quad y_5 = \frac{\varepsilon(b^2 - a^2) + \sqrt{a^2 \varepsilon^2(a^2 - 3 b^2)}}{b^2} |C_1|^2\,.
\end{align*}
By assuming \(\text{Im}(p_1) = b > 0\) without loss of generality, the hump solution can be divided into two types 
\begin{enumerate}
    \item When \(a^2 > 3 b^2\), a double-hump solution is obtained, see \cref{fig:formation_two_hump(c)}. In this case, three extremes \(y_3, y_4, y_5\) exist. It can be proven that \(y_4 < y_3 < y_5\), i.e., the two humps are located on the lines \(\xi_1 + \xi_1^* = \log y_4\) and \(\xi_1 + \xi_1^* = \log y_5\) and they are symmetric with respect to \(\xi_1 + \xi_1^* = \log y_3\).
    \item When \(a^2 \leq 3 b^2\), a single-hump solution is obtained, see \cref{fig:formation_one_hump(c)}. In this case, only one maximum exists at \(y_3\).
\end{enumerate}
Since \(\xi_1+\xi_1^* = 2a (x - (3b^2 - a^2) t)\), which implies the velocity of the hump solution is \( 3 \left[\Im(p_1)\right]^2 - \left[\Re(p_1)\right]^2 = 3b^2 - a^2\). Therefore, a single-hump solution always moves to the right while a double-hump soliton always moves to the left.
Similar result was obtained by Xu in the coupled SS equation previously \cite{xu2013single}.
\begin{figure}[!ht]
    \centering
    \subfigure[]{\label{fig:formation_oscillated(c)}
        \includegraphics[width=50mm]{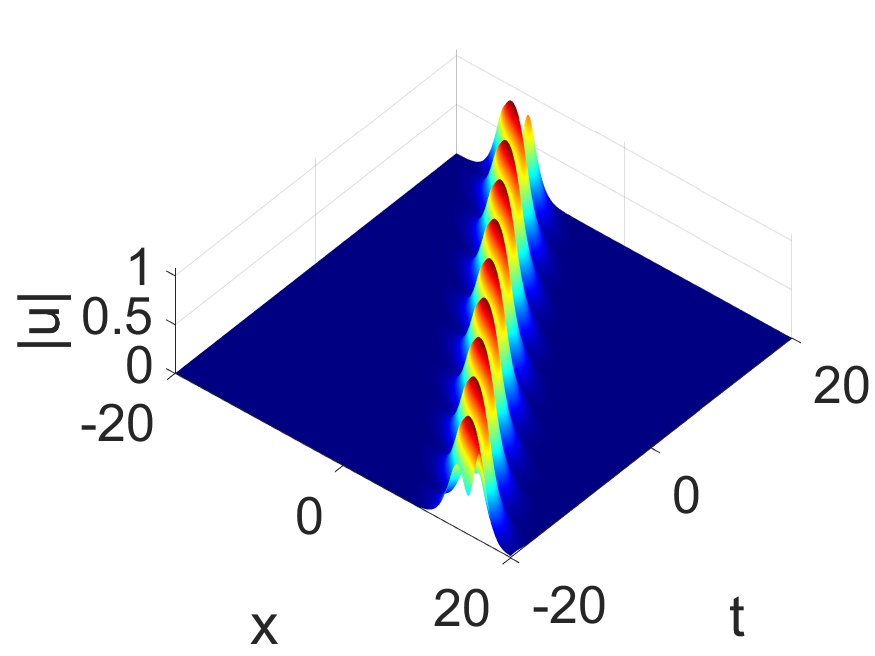}
    }
     \subfigure[]{\label{fig:formation_two_hump(c)}
        \includegraphics[width=50mm]{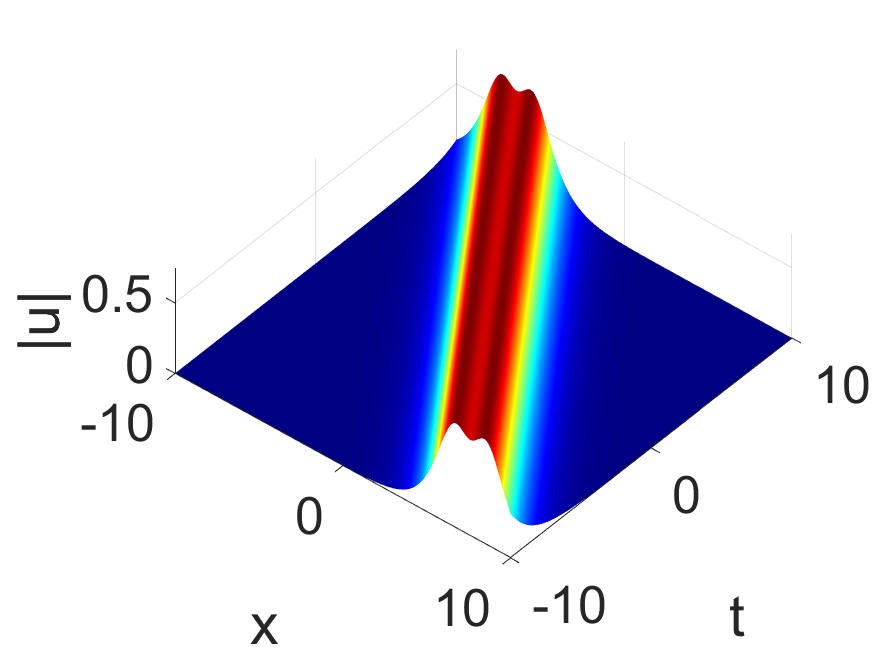}
    }
        \subfigure[]{\label{fig:formation_one_hump(c)}
        \includegraphics[width=50mm]{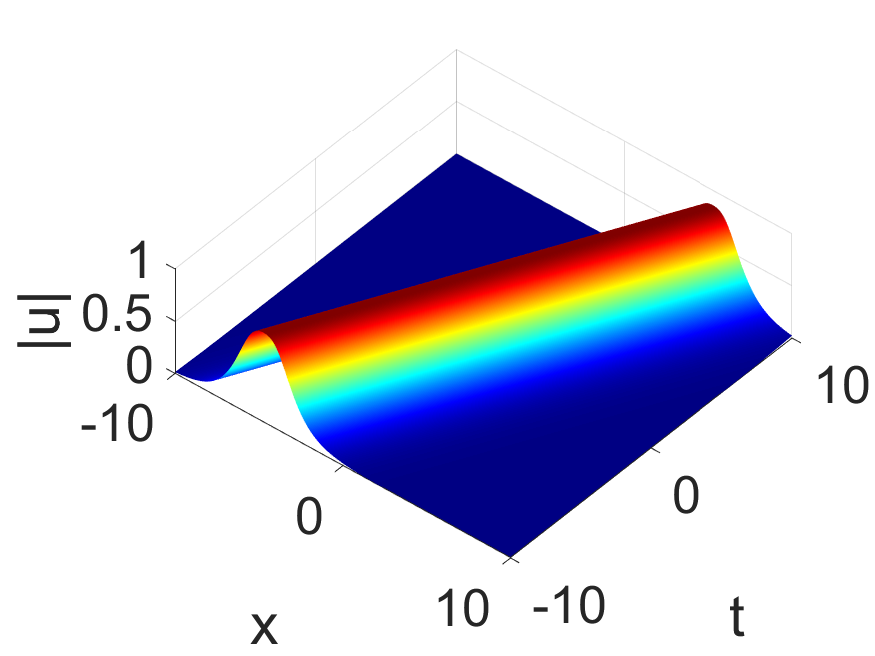}
    }
    \caption{Three types of bright soliton for \(N=2\); (a) oscillating type while \(p_1 = 1-\i/\sqrt{10}, C_1 = 1, C_2 = 1, \xi_{1,0} = 0, \epsilon = 1\); (b) double-hump while \(p_1 = 1-\i/\sqrt{10}, C_1 = 1, C_2 = 0, \xi_{1,0} = 0, \epsilon = 1\); (c) single-hump while \(p_1 = 1-\i/\sqrt{2}, C_1 = 1, C_2 = 0, \xi_{1,0} = 0, \epsilon = 1\).}\label{figure:bright_N=2} 
\end{figure}
In \cref{figure:bright_N=3}, we show two examples of the third-order bright soliton with \(N=3\). It can be seen it involves a collision between one soliton of \(N=1\) and one soliton of \(N=2\) which are discussed previously. It is observed that it is the soliton of \(N=1\) keeps its shape after the collision, but the soliton of \(N=2\) can change it type, for example, from an oscillated one to a double-hump one (see \cref{fig:SSF_N=3(b)}).  

We can further explore the fourth-order bright soliton which involves a collision between two solitons of \(N=4\). In 
\cref{fig:SSF_N=4(a),fig:SSF_N=4(b)}, we can observe the collision between two oscillated solitons, and the one between a single-hump and a double-hump solitons. The collisions are elastic without the change of shape. However, in \cref{fig:SSF_N=4(c),fig:SSF_N=4(d)}, we do observe the change of shape in one of the solitons due to the collision (a change of an oscillated soliton to a single-hump in \cref{fig:SSF_N=4(c)}, while a change of double-hump to an oscillated one in \cref{fig:SSF_N=4(d)}).
\begin{figure}
    \centering
    \subfigure[]{\label{fig:SSF_N=3(a)}
        \includegraphics[width=60mm]{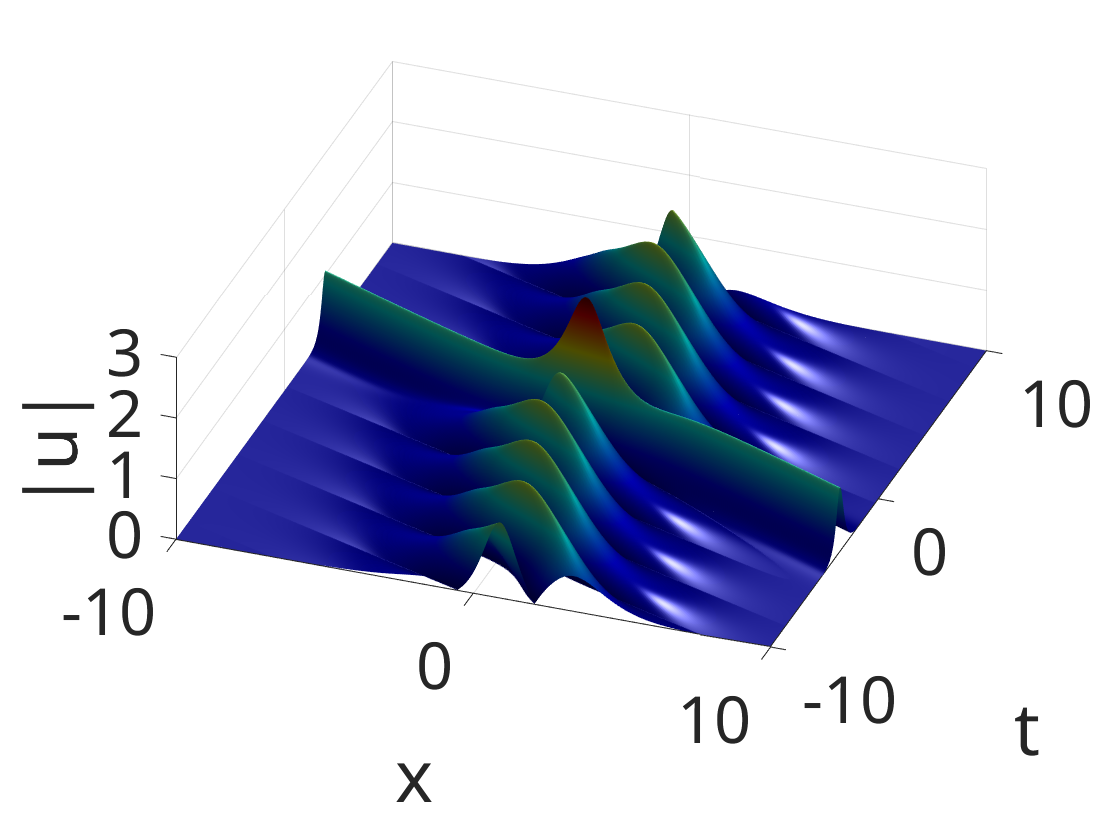}
    }
     \subfigure[]{\label{fig:SSF_N=3(b)}
        \includegraphics[width=60mm]{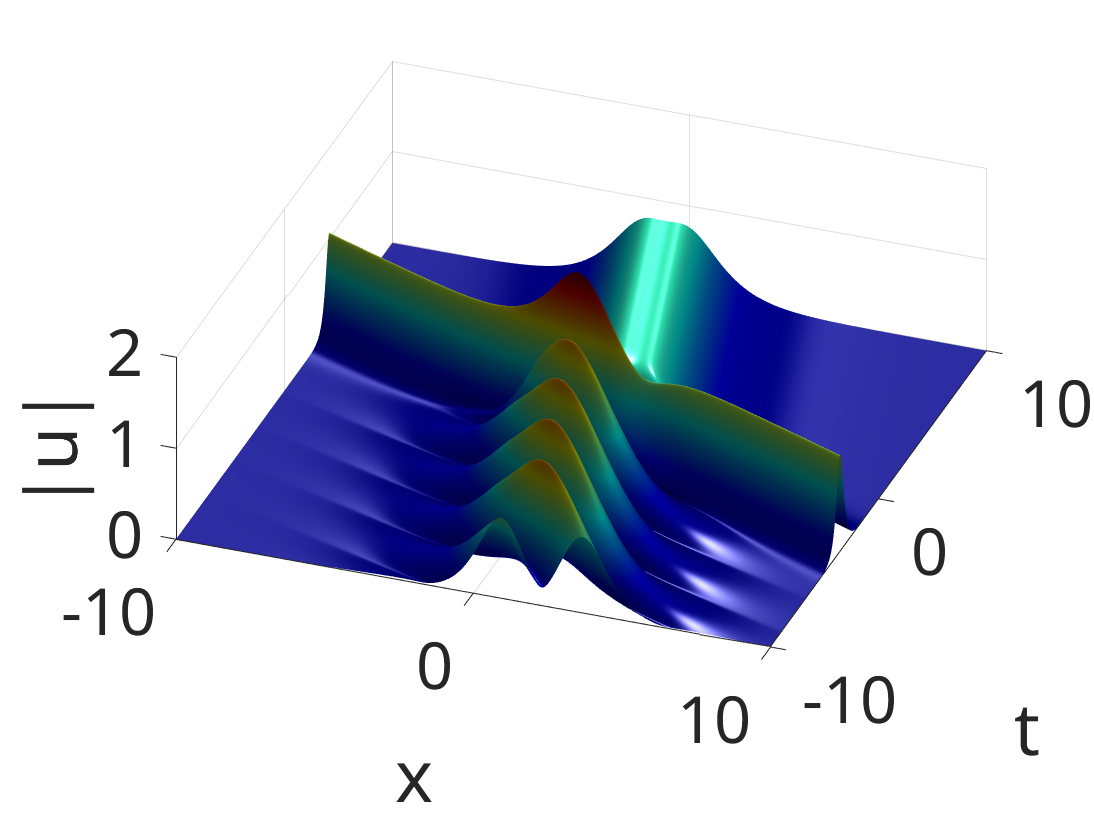}
    }
    \caption{Third-order bright soliton solutions for \(N=3\); (a) collision between oscillated soliton and one bright soliton of \(N=1\) with \(p_1=1+\i, p_2 = 1, C_1 = C_2 = C_3 = 1, \xi_{10} = \xi_{20} = 0, \epsilon = -1\); (b) collision between oscillated soliton and of \(N=1\) with \(p_1=1+\i, p_2 = 1, C_1 = C_2 = 1, C_3 = 0, \xi_{10} = \xi_{20} = 0, \epsilon = -1\).}\label{figure:bright_N=3} 
\end{figure}

\begin{figure}
    \centering
        \subfigure[]{\label{fig:SSF_N=4(a)}
        \includegraphics[width=60mm]{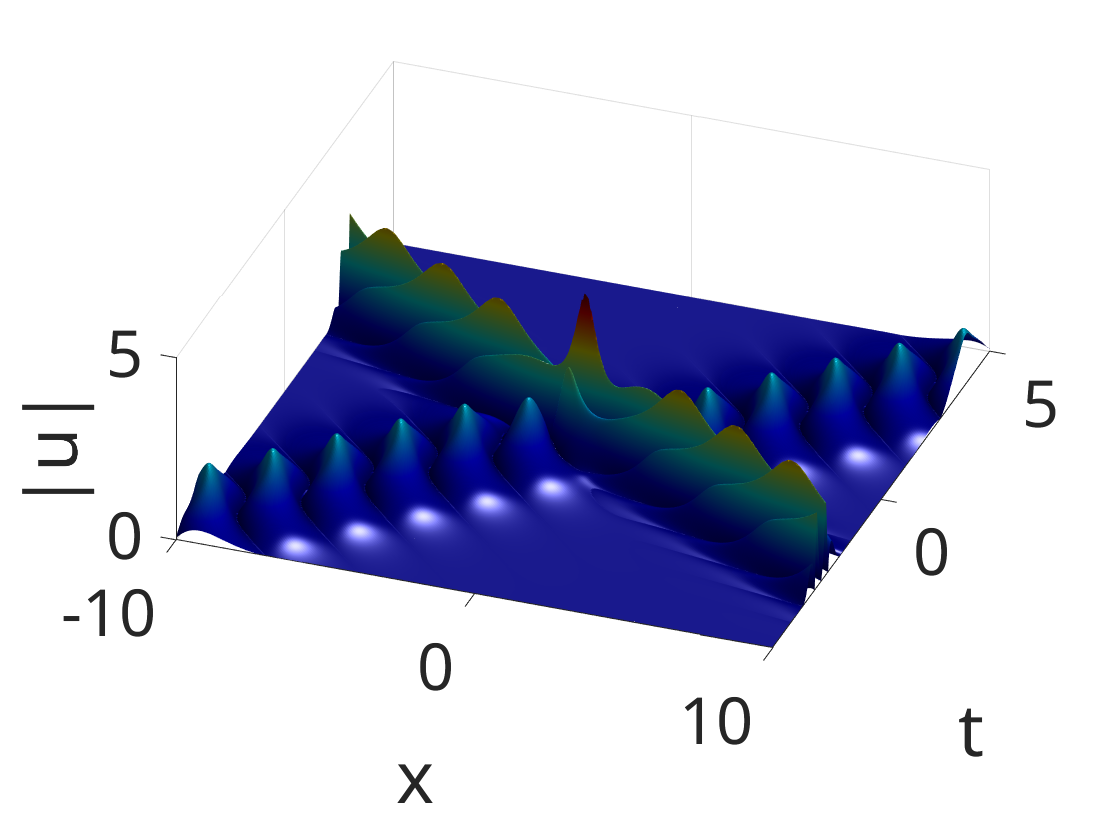}
    }
          \subfigure[]{\label{fig:SSF_N=4(b)}
        \includegraphics[width=60mm]{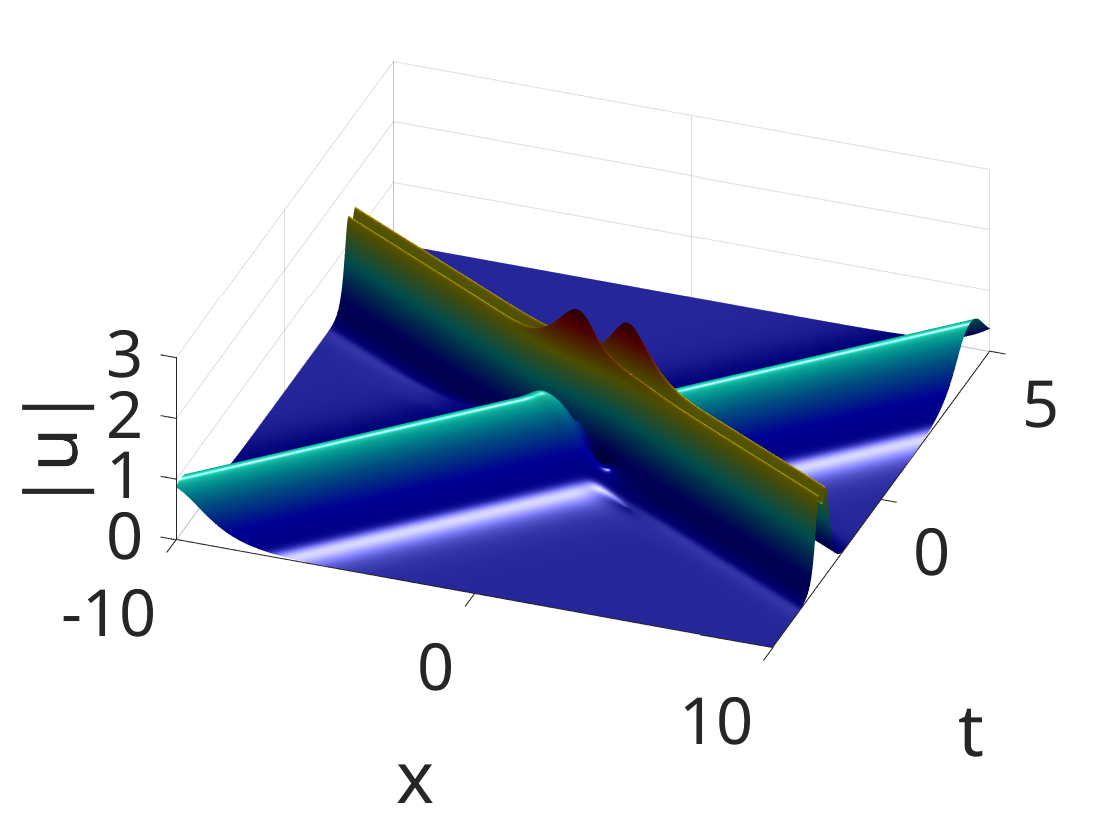}
    }
    \subfigure[]{\label{fig:SSF_N=4(c)}
        \includegraphics[width=60mm]{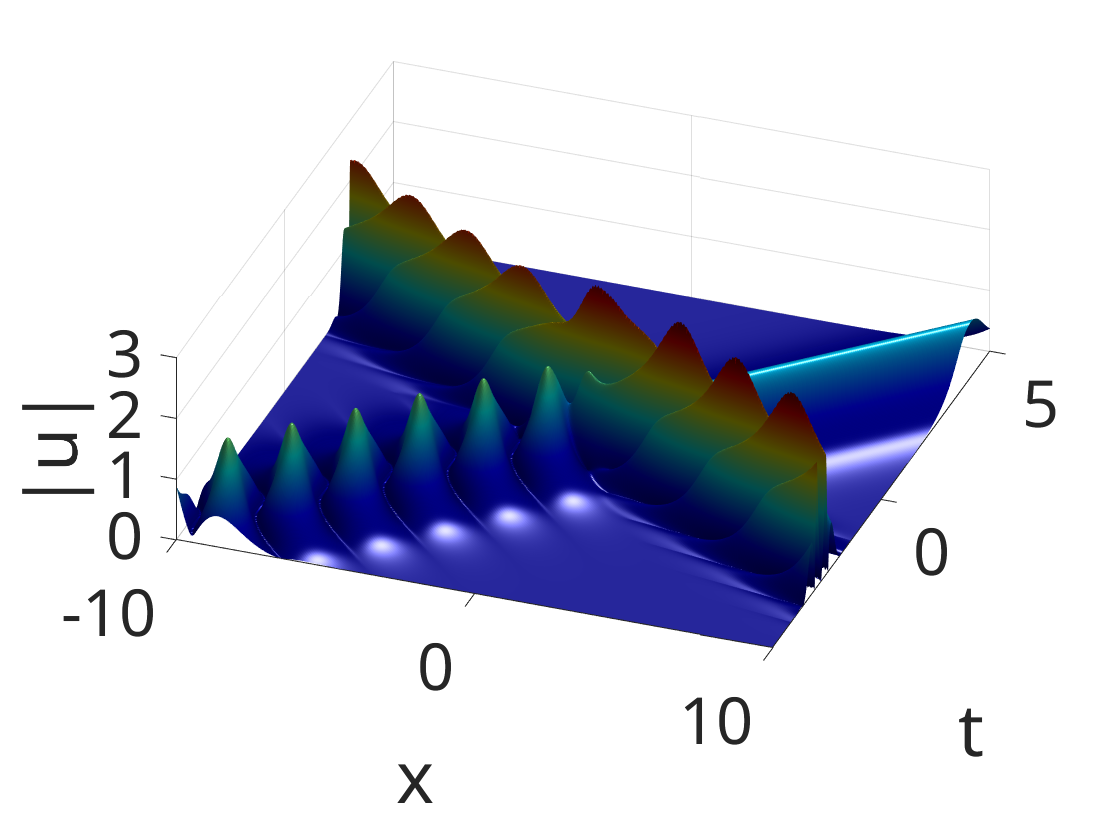}
    }
     \subfigure[]{\label{fig:SSF_N=4(d)}
        \includegraphics[width=60mm]{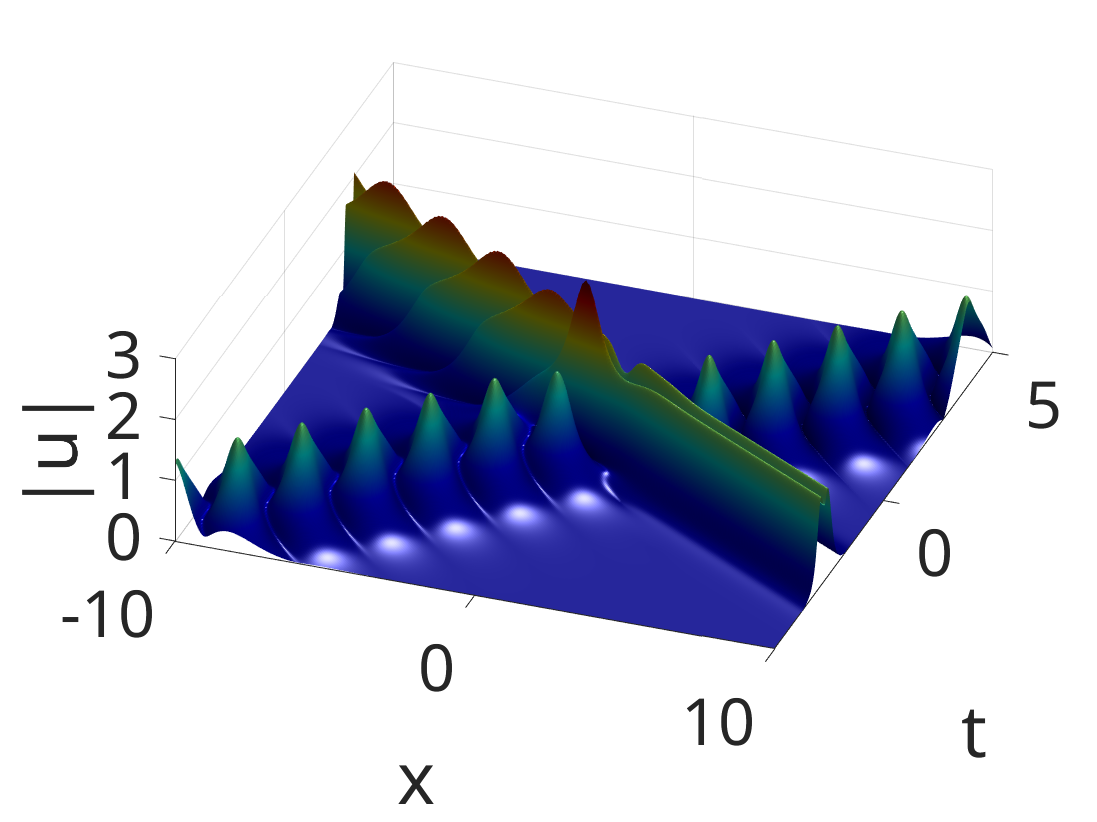}
    }
     \caption{Fourth-order bright soliton solutions for \(N=4\); (a) collision between oscillated solitons with \(p_1=2+\i, p_2 = 1-\i, C_1 = C_2 = C_3 = C_4 = 1, \xi_{10} = \xi_{20} = 0, \epsilon = -1\); (b)  collision between double-hump and single-hump with \(p_1=2+\i, p_2 = 1-\i, C_1 = C_2 = 1, C_3 = C_4 = 0, \xi_{10} = \xi_{20} = 0, \epsilon = -1\);
     (c) collision between oscillated solitons with shape change with \( p_1=2+\i, p_2 = 1-\i, C_1 = C_2 = C_4 = 1 , C_3 = 0, \xi_{10} = \xi_{20} = 0, \epsilon = -1\); (d) collision between oscillated soliton and two-hump with shape change \(p_1=2+\i, p_2 = 1-\i, C_1 = C_2 = C_3 = 1 , C_4 = 0, \xi_{10} = \xi_{20} = 0, \epsilon = -1\); .}\label{figure:bright_N=4} 
\end{figure}

\section{Dark soliton solution to the Sasa-Satsuma equation}\label{section:d_sasa}
To provide dark soliton solution to the SS equation \eqref{sse}, we first give the following lemma.
\begin{lemma}
    For solution determined by \cref{theorem:dd_solution}, reduction condition \(u_1^* = u_2\) is archived under the following parameter restrictions
    \begin{align}
        \varepsilon = \varepsilon_1 = \varepsilon_2, \quad \rho = \rho_1 = \rho_2, \quad \alpha = \alpha_1 = -\alpha_2, \\
        p_i = q_{N+1-i},\quad d_i = d_{N+1-i}, \quad \xi_{i,0} = \xi_{N+1-i,0}, & \quad \text{for} \quad i \in I_1(h),\\
        p_i,q_i \in \mathbb{R},& \quad \text{for} \quad i \in I_2(h) \cap I_3(h).
    \end{align}
    where \(I_1\), \(I_2\), \(I_3\) are index sets defined in \eqref{index_sets}.
\end{lemma}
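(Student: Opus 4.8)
The plan is to mirror the reduction argument used for the bright soliton (the lemma preceding \cref{theorem:bright_ss}): I will show that, under the stated restrictions, the scalar plane-wave prefactors of $u_1^{*}$ and $u_2$ coincide and that the determinantal parts satisfy $f^{*}=f$ and $g_1^{*}=g_2$. Together these give $u_1^{*}=u_2$, which is exactly the reduction condition sending the cHirota system to the SS equation \eqref{sse}.

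First I would dispose of the prefactors. Writing $u_1=\rho_1(g_1/f)e^{\i(\alpha_1 x-\omega_1 t)}$ and $u_2=\rho_2(g_2/f)e^{\i(\alpha_2 x-\omega_2 t)}$ as in \eqref{def_dark_u}, I substitute $\rho=\rho_1=\rho_2$, $\alpha_1=\alpha=-\alpha_2$ and $\varepsilon=\varepsilon_1=\varepsilon_2$ into the frequencies \eqref{w_j}. The cross term $3(\varepsilon_1\rho_1^2\alpha_1+\varepsilon_2\rho_2^2\alpha_2)$ becomes $3\varepsilon\rho^2(\alpha_1+\alpha_2)=0$, and a direct computation then gives $\omega_1=\alpha^3+6\alpha\varepsilon\rho^2$ and $\omega_2=-\omega_1$. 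Hence $e^{-\i(\alpha_1 x-\omega_1 t)}=e^{\i(\alpha_2 x-\omega_2 t)}$, so $u_1^{*}=u_2$ reduces to the single determinantal requirement $g_1^{*}/f^{*}=g_2/f$.

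The core is therefore the pair of identities $\tau_{0,0}^{*}=\tau_{0,0}$ and $\tau_{1,0}^{*}=\tau_{0,1}$ (recall $f=\tau_{0,0}$, $g_1=\tau_{1,0}$, $g_2=\tau_{0,1}$), which I would establish at the level of matrix entries exactly as in the bright case. The key structural observation is that, because $\alpha_2=-\alpha_1$, complex conjugation carries the $\tau_{1,0}$-factor into the $\tau_{0,1}$-factor: conjugating $-\tfrac{p_i-\i\alpha_1}{q_j+\i\alpha_1}$ produces $-\tfrac{p_i^{*}+\i\alpha_1}{q_j^{*}-\i\alpha_1}$, which is precisely $-\tfrac{p_i^{*}-\i\alpha_2}{q_j^{*}+\i\alpha_2}$. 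Combining the complex-conjugate constraints \eqref{dark_complex_relation} already built into \cref{theorem:dd_solution} with the new pairing $p_i=q_{N+1-i}$ yields $p_i^{*}=p_{N+1-i}$ and $q_j^{*}=q_{N+1-j}$ on the paired indices, while $d_i=d_{N+1-i}$, $\xi_{i,0}=\xi_{N+1-i,0}$, together with the relation $\xi_i^{*}=\eta_i$ furnished by \eqref{dark_complex_relation}, match the diagonal entries $\delta_{ij}d_i e^{-\xi_i-\eta_j}$. With these substitutions one checks $(m_{ij}^{1,0})^{*}=m_{N+1-i,N+1-j}^{0,1}$ and $(m_{ij}^{0,0})^{*}=m_{N+1-i,N+1-j}^{0,0}$. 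Reversing the row order and the column order of the determinant (a permutation of sign $(-1)^{2\lfloor N/2\rfloor}=+1$) then turns $\det\big((m_{ij}^{k,l})^{*}\big)$ into $\det\big(m_{ij}^{0,1}\big)$ and $\det\big(m_{ij}^{0,0}\big)$, i.e.\ $\tau_{1,0}^{*}=\tau_{0,1}$ and $\tau_{0,0}^{*}=\tau_{0,0}$; an equivalent route is to invoke $\tau_{k,l}^{*}=\tau_{-k,-l}$ from \cref{lemma:cmp_rd_dark} and prove $\tau_{-1,0}=\tau_{0,1}$.

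The main obstacle is bookkeeping rather than any single hard estimate: the verification $(m_{ij}^{1,0})^{*}=m_{N+1-i,N+1-j}^{0,1}$ must be carried out separately over the several index regimes of \eqref{index_sets} ($i,j$ both in $I_1\cup I_4$, both in $I_2$/$I_3$, and the mixed blocks), since the complex-conjugate pairing acts differently on each. The delicate point is the self-paired (middle) index present when $N$ is odd, for which $N+1-i=i$: there the reversal cannot move the entry off the diagonal, and $(m_{ii}^{1,0})^{*}=m_{ii}^{0,1}$ forces the rational factor to be self-conjugate, which is exactly why the lemma imposes $p_i,q_i\in\mathbb{R}$ for $i\in I_2(h)\cap I_3(h)$. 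Once these cases are matched, the reversal sign count is even and both determinant identities follow, giving $f^{*}=f$, $g_1^{*}=g_2$, and hence $u_1^{*}=u_2$.
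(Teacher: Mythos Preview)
Your treatment of the plane-wave prefactors is correct (and is a detail the paper's own proof glosses over), and the overall target $\tau_{1,0}^{*}=\tau_{0,1}$, $\tau_{0,0}^{*}=\tau_{0,0}$ is exactly right. The gap is in the permutation you use to realise this at the level of matrix entries. You assert the uniform identity $(\mathfrak m_{ij}^{1,0})^{*}=\mathfrak m_{N+1-i,\,N+1-j}^{0,1}$ and then apply a full reversal $i\mapsto N+1-i$ of rows and columns. This fails on the block $i,j\in I_2(h)\cup I_3(h)$: there the reality constraint $p_i,q_i\in\mathbb R$ makes conjugation act as $(k,l)\mapsto(l,k)$ with \emph{no} index swap, i.e.\ $(\mathfrak m_{ij}^{k,l})^{*}=\mathfrak m_{ij}^{l,k}$. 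By contrast $\mathfrak m_{N+1-i,\,N+1-j}^{l,k}$ carries the denominator $p_{N+1-i}+q_{N+1-j}=q_i+p_j$ (using $p_{N+1-i}=q_i^{*}=q_i$ from \eqref{dark_complex_relation}), which does not equal $p_i+q_j$ unless one also imposes $p_i=q_i$, and that is nowhere assumed. Hence the determinant identity cannot be obtained by a full reversal.

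The paper's proof uses instead a \emph{partial} permutation: swap $i\leftrightarrow N+1-i$ only for $i\in I_1(h)$ (equivalently $I_1\leftrightarrow I_4$), and leave the $I_2\cup I_3$ indices fixed. On the $I_1\cup I_4$ block one has $p_i=p_{N+1-i}^{*}=q_i^{*}=q_{N+1-i}$, giving $(\mathfrak m_{ij}^{k,l})^{*}=\mathfrak m_{N+1-i,\,N+1-j}^{l,k}$; on the $I_2\cup I_3$ block one has $(\mathfrak m_{ij}^{k,l})^{*}=\mathfrak m_{ij}^{l,k}$; the mixed blocks combine the two. This even permutation then yields $\tau_{k,l}^{*}=\tau_{l,k}$. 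Your reading of the reality condition as applying only to the self-paired middle index (the literal $I_2\cap I_3$) is understandable from the lemma's wording, but the proof in fact requires $p_i,q_i\in\mathbb R$ on all of $I_2(h)\cup I_3(h)$; that is what makes those entries fixed under the relevant permutation.
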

\begin{proof}
    For \(i,j \in I_1(h)\), one can deduce that \(p_i = p_{N+1-i}^* = q_i^* = q_{N+1-i}\), and 
    \begin{align*}
        \left(\mathfrak{m}_{ij}^{k,l}\right)^* &= \delta_{ij}d_i e^{-\xi_i^*-\eta_j^*} +\frac{1}{p_i^*+q_i^*} \left(-\frac{p_i^*+\i \alpha_1}{q_j^*-\i \alpha_1}\right)^k \left(-\frac{p_i^*+\i \alpha_2}{q_j^*-\i \alpha_2}\right)^l,\\
        &= \delta_{ij}d_i e^{-\xi_{N+1-i}-\eta_{N+1-j}} + \frac{1}{p_{N+1-i}+q_{N+1-j}} \left(-\frac{p_{N+1-i}-\i \alpha_2}{q_{N+1-j}+\i \alpha_2}\right)^k \left(-\frac{p_{N+1-i}-\i \alpha_1}{q_{N+1-j}+\i \alpha_1}\right)^l \\
        &= \mathfrak{m}_{N+1-i,N+1-j}^{l,k}.
    \end{align*}
    For \(i,j \in I_2(h) \cup I_3(h)\), we have \(p_i,q_i \in \mathbb{R}\), thus
    \begin{align*}
        \left(\mathfrak{m}_{ij}^{k,l}\right)^* &= \delta_{ij}d_i e^{-\xi_i-\eta_j} +\frac{1}{p_i+q_i}\left(-\frac{p_i + \i \alpha_1}{q_j - \i \alpha_1}\right)^k\left(-\frac{p_i + \i \alpha_2}{q_j - \i \alpha_2}\right)^l \\
        &= \delta_{ij}d_i e^{-\xi_i-\eta_j} +\frac{1}{p_i+q_i}\left(-\frac{p_i - \i \alpha_2}{q_j + \i \alpha_2}\right)^k\left(-\frac{p_i - \i \alpha_1}{q_j + \i \alpha_1}\right)^l = \mathfrak{m}_{i,j}^{l,k},
    \end{align*}
    Moreover, for \(i \in I_1(h),j \in I_2(h) \cup I_3(h)\), we have \(\left(\mathfrak{m}_{ij}^{k,l}\right)^* = \mathfrak{m}_{N+1-i,j}^{l,k}\), \(\left(\mathfrak{m}_{i,N+1-j}^{k,l}\right)^* = \mathfrak{m}_{N+1-i,N+1-j}^{l,k}\), and for \(i \in I_2(h) \cup I_3(h),j \in I_1(h)\), we have \(\left(\mathfrak{m}_{ij}^{k,l}\right)^* = \mathfrak{m}_{i,N+1-j}^{l,k}\), \(\left(\mathfrak{m}_{N+1-i,j}^{k,l}\right)^* = \mathfrak{m}_{N+1-i,N+1-j}^{l,k}\).
    For above determinant \(\tau_{k,l}^*\), let us consider the following row operations: for \(i \in I_1(h)\), switching \(i\)-th row with \((N+1-i)\)-th row, \(i\)-th column with \((N+1-i)\)-th column. Therefore,
    \begin{align*}
        \tau_{k,l}^* = \det\left(\left(\mathfrak{m}_{ij}^{k,l}\right)^*\right) = \det\left(\mathfrak{m}_{ij}^{l,k}\right) = \tau_{l,k}.
    \end{align*}
    Overall, we have \(\tau_{k,l}^* = \tau_{l,k}\) and subsequently \(g_1^* = g_2\) and \(f^* = f\) or \(f \in \mathbb{R}\), hence \(u_1^* = u_2\).
\end{proof}
Hence, the $N$-dark soliton solution to the SS equation is summarized  by the following theorem.
\begin{theorem}\label{theorem:dark_ss}
    Equation \eqref{sse} admits the dark soliton solutions given by
    \begin{equation}
        \begin{split}
            u&=\rho \frac{g}{f} \exp\left(\i \left(\alpha x - \left(\alpha^3 + 6\varepsilon\alpha \rho^2\right) t\right)\right),\\
        \end{split}
    \end{equation}
    and \(f,\ g\) are defined as
    \begin{equation}
        f=\tau_{0,0}, \quad g=\tau_{1,0},
    \end{equation}
    where \(\tau_{k,l}\) is a \(N\times N\) determinant defined as
    \begin{equation}
        \tau_{k,l}=\det\left(\delta_{ij}d_i e^{-\xi_i-\eta_j} + \frac{1}{p_i+q_j} \left(-\frac{p_i - \i \alpha}{q_j + \i \alpha}\right)^k\left(-\frac{p_i + \i \alpha}{q_j - \i \alpha}\right)^l\right),
    \end{equation}
    with \(\xi_i=p_i (x - 6\varepsilon\rho^2 t) + p_i^3 t + \xi_{i 0}\), \(\eta_i=q_i (x - 6\varepsilon\rho^2 t) + q_i^3 t + \xi_{i 0}\). Where \(\alpha,\ \rho\) are real parameters. And parameters \(d_i,\ \xi_{i,0},\ p_i,\ q_i\) satisfy the following complex conjugate relation for each \(h = 0, 1, \ldots, \lfloor N/2 \rfloor\)
    \begin{align}
        \begin{split}
            d_i = d_{N+1-i},\quad \xi_{i,0} = \xi_{N+1-i,0}, & \quad \text{for } i = 1, 2, \ldots, N,\\
            p_i = p_{N+1-i}^* = q_i^* = q_{N+1-i},& \quad \text{for } i \in \{\mathbb{Z} | 1 \leq i \leq h\},\\
            p_i = q_{N+1-i} \in \mathbb{R},\quad p_{N+1-i} = q_i \in \mathbb{R},& \quad \text{for } i \in  \{\mathbb{Z} | h+1 \leq i \leq \lceil N/2 \rceil\},
        \end{split}
    \end{align}
    Moreover, these parameters also need to satisfy the equation \(G(p_i,q_i) = 0\), for \(i = 1, 2, \ldots, N\), where \(G(p,q)\) defined as
    \begin{align}\label{ss_cond}
        G(p,q)&=\frac{\varepsilon\rho^2}{(p_i-\i \alpha)(q_i+\i \alpha)}+\frac{\varepsilon\rho^2}{(p_i+\i \alpha)(q_i - \i \alpha)}-1.
    \end{align}
\end{theorem}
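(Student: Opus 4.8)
The plan is to obtain the Sasa--Satsuma dark soliton as a direct specialization of the dark--dark soliton of the coupled Hirota equation established in \cref{theorem:dd_solution}, by imposing the reduction relations of the preceding lemma together with $\varepsilon=\varepsilon_1=\varepsilon_2$, $\rho=\rho_1=\rho_2$, and $\alpha=\alpha_1=-\alpha_2$. Since \cref{theorem:dd_solution} already guarantees that the pair $u_1=\rho_1(g_1/f)e^{\i(\alpha_1 x-\omega_1 t)}$, $u_2=\rho_2(g_2/f)e^{\i(\alpha_2 x-\omega_2 t)}$ solves \eqref{chirota_1}--\eqref{chirota_2}, and the preceding lemma shows $\tau_{k,l}^*=\tau_{l,k}$ (hence $g_1^*=g_2$, $f\in\mathbb{R}$, and $u_1^*=u_2$) under the stated parameter restrictions, the only remaining tasks are to verify that the symmetry $u_1^*=u_2$ collapses the coupled system to the scalar equation \eqref{sse}, and that the explicit data of \cref{theorem:dark_ss} are exactly the specializations of their dark--dark counterparts.

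The first step I would carry out is the reduction of the equations themselves. Setting $u:=u_1$ and using $u_2=u_1^*$, $\varepsilon_1=\varepsilon_2=\varepsilon$ in \eqref{chirota_1}, the cubic coefficient becomes $3\varepsilon(|u_1|^2+|u_2|^2)u_{1,x}=6\varepsilon|u|^2u_x$, while the cross term simplifies via $u_1^*u_{1,x}+u_2^*u_{2,x}=u^*u_x+u\,u_x^*=(|u|^2)_x$, so that \eqref{chirota_1} becomes precisely $u_t=u_{xxx}-6\varepsilon|u|^2u_x-3\varepsilon u(|u|^2)_x$, i.e.\ the SS equation \eqref{sse}. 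Taking the complex conjugate of this identity and using $u^*=u_2$ shows that \eqref{chirota_2} reduces to the conjugate equation and is therefore automatically satisfied, confirming that the single reduced function $u=u_1$ solves \eqref{sse}.

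Next I would verify that the specialized parameters reproduce the theorem's data. Substituting $\alpha_1=\alpha$, $\alpha_2=-\alpha$, $\rho_1=\rho_2=\rho$, $\varepsilon_1=\varepsilon_2=\varepsilon$ into the frequency \eqref{w_j} gives $\omega_1=\alpha^3+6\varepsilon\alpha\rho^2$, because the last group $3(\varepsilon\rho^2\alpha_1+\varepsilon\rho^2\alpha_2)=3\varepsilon\rho^2(\alpha_1+\alpha_2)$ vanishes; the same substitution turns the kernel factor $\bigl(-\tfrac{p_i-\i\alpha_1}{q_j+\i\alpha_1}\bigr)^{k}\bigl(-\tfrac{p_i-\i\alpha_2}{q_j+\i\alpha_2}\bigr)^{l}$ into $\bigl(-\tfrac{p_i-\i\alpha}{q_j+\i\alpha}\bigr)^{k}\bigl(-\tfrac{p_i+\i\alpha}{q_j-\i\alpha}\bigr)^{l}$, the group velocity shift $3(\varepsilon_1\rho_1^2+\varepsilon_2\rho_2^2)$ into $6\varepsilon\rho^2$, and the dimension-reduction constraint $G(p_i,q_i)=0$ into \eqref{ss_cond}. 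Reading off $f=\tau_{0,0}$, $g=g_1=\tau_{1,0}$ then completes the identification.

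The main obstacle I anticipate is not the equation reduction, which is a short computation, but the bookkeeping of the parameter constraints: I must check that the complex-conjugate relations of \cref{theorem:dd_solution} in \eqref{dark_complex_relation} and the reduction relations of the preceding lemma are simultaneously consistent, and that their combination is exactly the chain $p_i=p_{N+1-i}^*=q_i^*=q_{N+1-i}$ for $1\le i\le h$ together with the reality conditions $p_i=q_{N+1-i}\in\mathbb{R}$, $p_{N+1-i}=q_i\in\mathbb{R}$ for $h+1\le i\le\lceil N/2\rceil$. In particular I would confirm that under $\alpha_2=-\alpha_1$ the constraint $G(p_i,q_i)=0$ is invariant both under the index swap $i\leftrightarrow N+1-i$ and under complex conjugation, so that enforcing it on the $\lceil N/2\rceil$ independent parameters automatically enforces it on the reflected and conjugate ones; this compatibility is what allows the single symmetry class $u_1^*=u_2$ to survive the dimension and conjugation reductions at once.
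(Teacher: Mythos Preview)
Your proposal is correct and follows essentially the same route as the paper: the paper proves the preceding lemma establishing $\tau_{k,l}^*=\tau_{l,k}$ (hence $u_1^*=u_2$) under the stated parameter restrictions, and then simply records \cref{theorem:dark_ss} as the specialization of \cref{theorem:dd_solution} under $\varepsilon_1=\varepsilon_2=\varepsilon$, $\rho_1=\rho_2=\rho$, $\alpha_1=-\alpha_2=\alpha$. Your plan makes explicit the two short verifications the paper leaves implicit (that $u_2=u_1^*$ collapses \eqref{chirota_1}--\eqref{chirota_2} to \eqref{sse}, and that the combined constraints of \eqref{dark_complex_relation} and the lemma yield exactly the parameter chain in the theorem), but these are the same ingredients.
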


\subsection{Dynamics of the dark soliton solution to the Sasa-Satsuma equation} 
When \(N=1\), we have the first order dark soliton solution 
\begin{align*}
    u &= \rho \exp(\i \theta)\frac{d_1 \exp(-2\xi_1) - \dfrac{1}{2p_1}\left(\dfrac{p_1 - \i \alpha}{p_1 + \i \alpha}\right)}{d_1 \exp(-2\xi_1) + 1/(2p_1)} \\
    &= \frac{\rho \exp(\i \theta)}{p_1^* + \i \alpha} \biggl(\i \Bigl(\alpha - \Im(p_1)\Bigr) - \Re(p_1)\tanh\Bigl(\Re(\xi_1) - \log\left(4 d_1 \Re(p_1)\right)\Bigr) \biggr)
\end{align*}
where \(\theta = \alpha x - \left(\alpha^3 + 6\varepsilon\alpha \rho^2\right) t\), \(\xi_1 = p_1 (x - 6\varepsilon\rho^2 t) + p_1^3 t + \xi_{1 0}\), \(p_1, \xi_{10}, d_1 \in \mathbb{R}\) and they satisfies the following condition
\begin{equation}
    \frac{2\varepsilon\rho^2}{(p_1-\i \alpha)(p_1+\i \alpha)}=1.
\end{equation}
Above condition implies the dark soliton exists only for $\varepsilon>0$ which gives \(p_1 = \sqrt{2 \varepsilon \rho^2 - \alpha^2}\). In other words, the wave number \(p_1\) is not a free parameter but determined by  \(\varepsilon,\alpha\) and \(\rho\).  It is noted that we have the similar result for the coupled SS equation \cite{zhang2025dark}.

When \(N=2\), the second order solution allows two different choices of parameter \(p_1, p_2\) depending on the value of \(h\)
\begin{align}
    &(h=0 \ \text{case}) &p_1=q_2, p_2=q_1 \in \mathbb{R},\quad \xi_{10}, \xi_{20} \in \mathbb{R}, \label{SS_dark_N=2_complex_1}\\
    &(h=1 \ \text{case}) &p_1 = p_2^* = q_1^* = q_2,\quad \xi_{10} = \xi_{20}^*, \quad d_2 = d_1 \in \mathbb{R}. \label{SS_dark_N=2_complex_2}
\end{align}
Recall, for each set of solutions, we require \(G(p_1,q_1=p_2)=0\) and \(G(p_1,q_1=p_1^*) = 0\) for case \(h=0\) and \(h=1\) respectively.
\begin{itemize}
    \item For case \(h=0\),  we have \(u = \rho_1 \exp(\i \theta) g_2 /f \) with
    \begin{align*}
        \theta ={}& \alpha x - \left(\alpha^3 + 6\varepsilon \alpha \rho_1^2 \right)t,\\
        g ={}& d_1^2 \exp(-2 \xi_1 -2 \xi_2)  - \frac{p_1 - \i \alpha}{p_1 + \i \alpha}\frac{p_2 - \i \alpha}{p_2 + \i \alpha} \frac{\left(p_1-p_2\right)^2}{4 p_1 p_2 \left(p_1+p_2\right)^2} \nonumber\\
        & - \frac{d_1 \exp(- \xi_1 - \xi_2) (p_1^2 + p_2^2 + 2\alpha^2)}{(p_1 + p_2)(p_1 + \i \alpha)(p_2 + \i \alpha)}, \\
        f ={}& d_1^2 \exp(-2 \xi_1 -2 \xi_2) -\frac{\left(p_1-p_2\right)^2}{4 p_1 p_2 \left(p_1+p_2\right)^2} + \frac{2d_1 \exp(- \xi_1 - \xi_2)}{p_1 + p_2}. 
    \end{align*}
    Similar to the bright soliton, there are five roots for \(\partial |u|^2/\partial_x  = 0\) by denoting \(y = \exp(\xi_1+\xi_1^*)\)
    \begin{align*}
        &y_1 = 0,
        \quad 
        y_2 = -\frac{2 \i d_1 \sqrt{p_1} \sqrt{p_2} \left(p_1+p_2\right)}{|p_1-p_2|},
        \quad 
        y_3 = \frac{2 \i d_1 \sqrt{p_1} \sqrt{p_2} \left(p_1+p_2\right)}{|p_1-p_2|}, \\
        &y_4 = -2d_1 \left(p_1+p_2\right) \frac{\left(p_1^2+p_2^2\right) +\left(p_1+p_2\right) \sqrt{ \left(p_1^2-p_2 p_1+p_2^2\right)}}{\left(p_1-p_2\right){}^2},\\
        &y_5 = -2d_1 \left(p_1+p_2\right) \frac{\left(p_1^2+p_2^2\right) -\left(p_1+p_2\right) \sqrt{\left(p_1^2-p_2 p_1+p_2^2\right)}}{\left(p_1-p_2\right){}^2}.
    \end{align*}
We have the following cases regarding to the solution
    \begin{enumerate}
        \item \(p_1 p_2 > 0\):  dark soliton with the extreme value occurs at \(y_5\)  
        \begin{align*}
            |u(y_5)|^2 = \rho_1^2 \frac{\alpha _1^2 \left(\alpha _1^2+p_1^2-p_2 p_1+p_2^2\right)}{\left(\alpha _1^2+p_1^2\right) \left(\alpha _1^2+p_2^2\right)} < \rho_1^2.
        \end{align*}

        \item  \(p_1 p_2 < 0\) and \(d_1 (p_1 + p_2) < 0\):  Mexican hat soliton. In this case, \(0 < y_4 < y_3 < y_5\), the solution exhibits three extreme points.
        The extreme value at point \(y_4\) or \(y_5\) determines the Mexican hat to be upwards or downwards.
        \begin{align*}
            |u(y_3)|^2 &= \rho_1^2 \left(1 + \frac{\left(p_1^2-p_2^2\right){}^2 \left(\alpha _1^2+p_1 p_2\right)}{4 p_1 p_2 \left(\alpha _1^2+p_1^2\right) \left(\alpha _1^2+p_2^2\right)}\right),\\
            |u(y_4)|^2 &= |u(y_5)|^2 = \rho_1^2 \frac{\alpha _1^2 \left(\alpha _1^2+p_1^2-p_2 p_1+p_2^2\right)}{\left(\alpha _1^2+p_1^2\right) \left(\alpha _1^2+p_2^2\right)}.
        \end{align*}
        If \(\alpha^2+p_1 p_2 > 0\), \(|u(y_4)|^2  > \rho_1^2 > |u(y_3)|^2\), one obtains an anti-Mexican hat soliton illustrated in  \cref{fig:SSD_N=2_h=0_amh(a)}. 
        
        If \(\alpha^2+p_1 p_2 < 0\), \(|u(y_4)|^2  < \rho_1^2 < |u(y_3)|^2\), one has Mexican hat soliton shown in \cref{fig:SSD_N=2_h=0_mh(a)}. 

        \item  \(p_1 p_2 < 0\) and \(d_1 (p_1 + p_2) > 0\): dark or anti-dark soliton. Since
        \begin{align*}
            |u(y_2)|^2 = \rho_1^2 \left(1 + \frac{\left(p_1^2-p_2^2\right){}^2 \left(\alpha _1^2+p_1 p_2\right)}{4 p_1 p_2 \left(\alpha _1^2+p_1^2\right) \left(\alpha _1^2+p_2^2\right)}\right),
        \end{align*}
        we have dark soliton (\(|u(y_3)|^2 < \rho_1^2\))  if \(\alpha^2+p_1 p_2 > 0\), see \cref{fig:SSD_N=2_h=0_dark(b)} and anti-dark soliton  (\(|u(y_3)|^2 > \rho_1^2\)) If \(\alpha^2+p_1 p_2 < 0\) as shown in  \cref{fig:SSD_N=2_h=0_antidark(b)}.
    \end{enumerate}

    \begin{figure}[!ht]
        \centering
        \subfigure[]{\label{fig:SSD_N=2_h=0_amh(a)}
            \includegraphics[width=50mm]{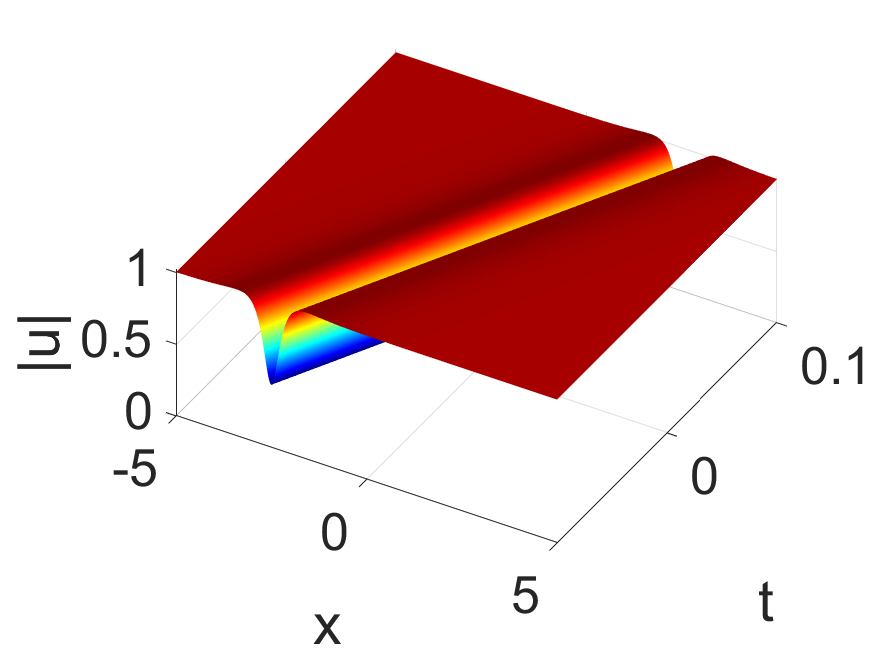}
        }
        \subfigure[]{\label{fig:SSD_N=2_h=0_dark(b)}
            \includegraphics[width=50mm]{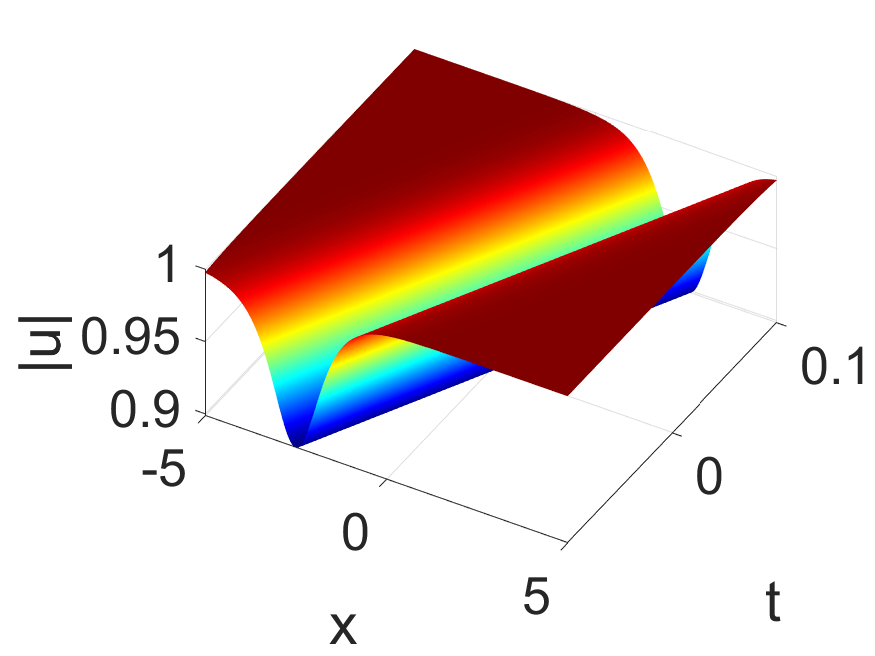}
        }
        \subfigure[]{
            \includegraphics[width=50mm]{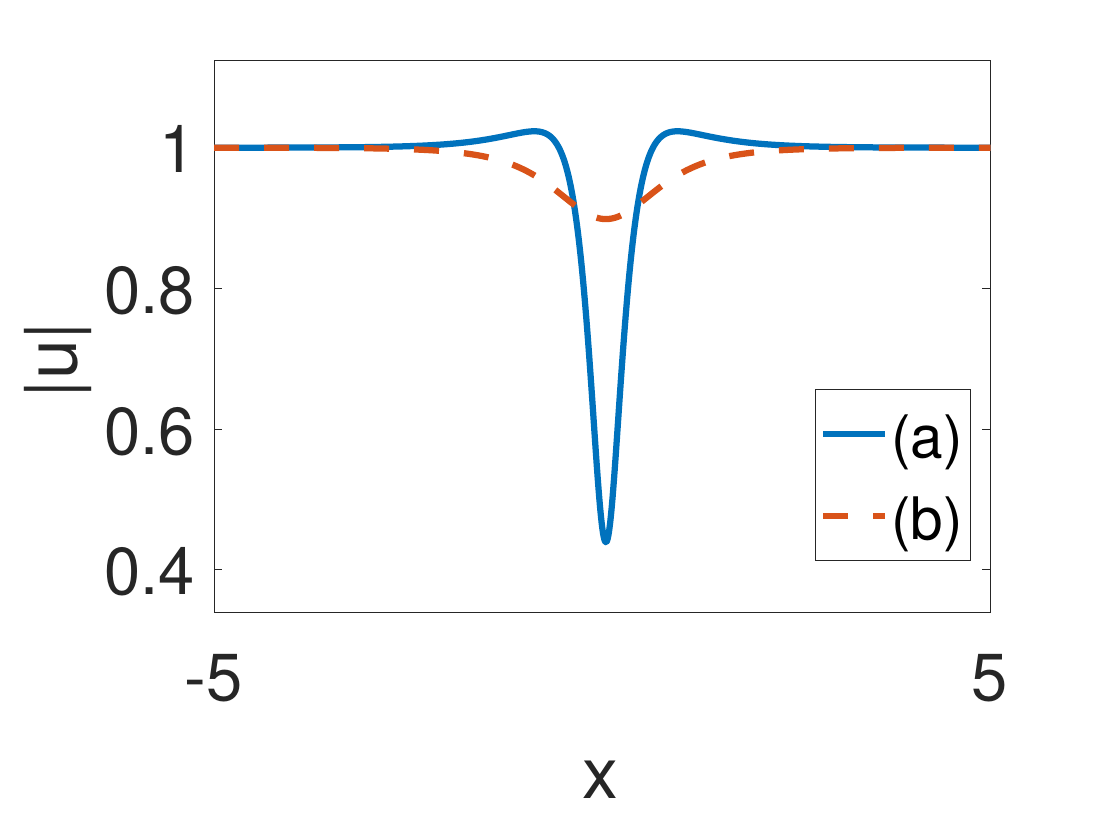}
        }
        \caption{Anti-Mexican hat and dark soliton solutions to the SS equation with parameters (a) \(p_1= 2, p_2 = 2 - \sqrt{5}, d_1 = -1, \alpha = 1, \rho_1 = 1, \varepsilon = 5, \xi_{1,0} = 0\), (b) \(p_1= 2, p_2 = 2 - \sqrt{5}, d_1 = 1, \alpha = 1, \rho_1 = 1, \varepsilon = 5, \xi_{1,0} = 0\), (c) comparison of two cases when \(t = 0\).}\label{fig:SSD_N=2_h=0_mh}
    \end{figure}

    \begin{figure}[!ht]
        \centering
        \subfigure[]{\label{fig:SSD_N=2_h=0_mh(a)}
            \includegraphics[width=50mm]{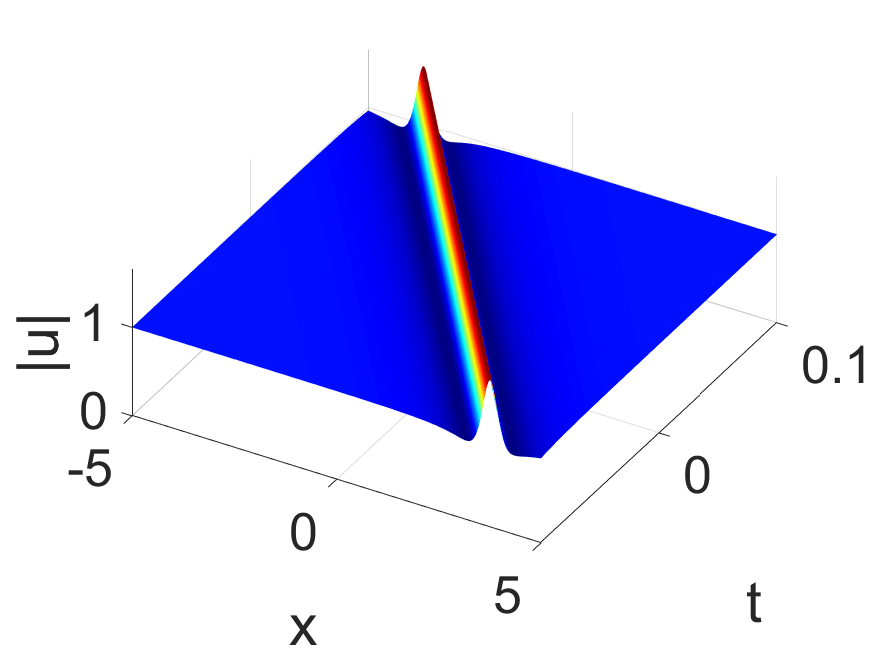}
        }
        \subfigure[]{\label{fig:SSD_N=2_h=0_antidark(b)}
            \includegraphics[width=50mm]{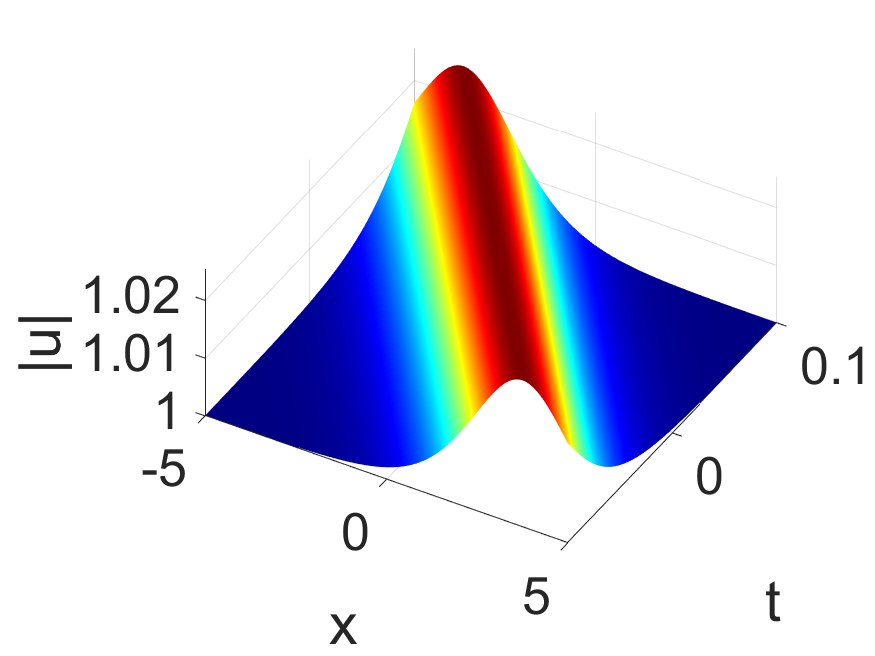}
        }
        \subfigure[]{
            \includegraphics[width=50mm]{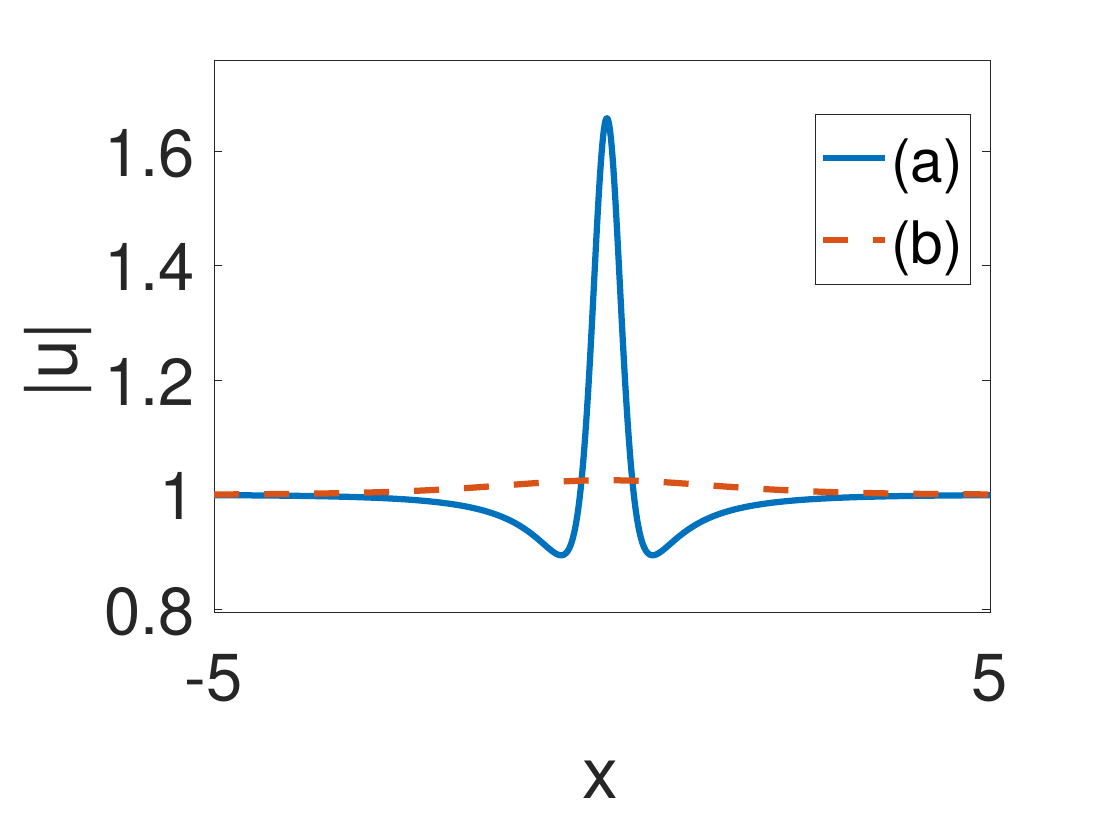}
        }
        \caption{Mexican hat and anti-dark soliton solutions to the SS equation with parameters (a) \(p_1= 1, p_2 = -2, d_1 = 1, \alpha = 1, \rho_1 = 1, \varepsilon = -5\), (b) \(p_1= 1, p_2 = -2, d_1 = -1, \alpha = 1, \rho_1 = 1, \varepsilon = -5\), (c) comparison of two cases when \(t = 0\).}\label{fig:SSD_N=2_h=0_d_ad}
    \end{figure}

    \item For case \(h=1\), we have 
    \begin{align*}
        \theta ={}& \alpha x - \left(\alpha^3 + 12\varepsilon \alpha \rho_1^2 \right)t,\\
        g ={}& d_1^2 \exp(-2 \Re(\xi_1))  + \frac{\left[\Im(p_1)\right]^2 (p_1 - \i \alpha)(p_1^* - \i \alpha)}{4 \left[\Re(p_1)\right]^2 |p_1| (p_1 + \i \alpha)(p_1^* + \i \alpha)}\nonumber\\
        & - \frac{ d_1 \exp(- \Re(\xi_1)) (\left[\Im(p_1)\right]^2 - \left[\Re(p_1)\right]^2 + \alpha^2)}{\Re(p_1)(p_1 + \i \alpha)(p_1^* + \i \alpha)}, \\
        f ={}& d_1^2 \exp(-2 \Re(\xi_1)) + \frac{\left[\Im(p_1)\right]^2}{4 \left[\Re(p_1)\right]^2 |p_1|^2} + \frac{2d_1 \exp(- \Re(\xi_1))}{2 \Re(p_1)}. 
    \end{align*}
In this case, it can be shown that we only have nonsingular soliton when \(\varepsilon > (\Im(p_1)-\alpha)^2(\Im(p_1)+\alpha)^2/(2\rho^2(\left[\Im(p_1)\right]^2+\alpha^2)) > 0\).
By denoting \(y = \exp(\xi_1 + \xi_1^*), p_1 = a + \i b\), there are again five roots of \(\partial |u|^2/\partial_x  = 0\)
    \begin{align*}
        &y_1 = 0,\quad y_2 = -2 a d_1 \frac{\sqrt{a^2+b^2}}{b},\quad y_3 = 2 a d_1\frac{\sqrt{a^2+b^2}}{b}, \\
        &y_4 = 2 a d_1\frac{\left(a^2-b^2\right)- a\sqrt{a^2-3 b^2}}{b^2},\quad
        y_5 = 2 a d_1\frac{\left(a^2-b^2\right)+ a\sqrt{a^2-3 b^2}}{b^2}.
    \end{align*}
    Let us assume \(a d_1 > 0\) and \(b > 0\), without loss of generality. Since \(y = \exp(\xi_1 + \xi_1^*) > 0\), we have the following cases
    \begin{enumerate}
        \item  \(a^2 > 3b^2\): double holes soliton as illustrated in \cref{fig:SSD_N=2_h=1_hole(a)}. In this case, there are three extremes \(y_3\), \(y_4\), \(y_5\) where \(0 < y_4 < y_3 < y_5\). Moreover, since
        \begin{align*}
            &|u(y_3)|^2 = \rho_1^2 \frac{\left(\alpha _1^2 \left(b \left(\sqrt{a^2+b^2}+b\right)+a^2\right)+\left(a^2+b^2\right) \left(a^2-b \left(\sqrt{a^2+b^2}+b\right)\right)\right){}^2}{\left(b \left(\sqrt{a^2+b^2}+b\right)+a^2\right)^2 \left(\left(a^2+b^2\right)^2+2 \alpha _1^2 \left(a^2-b^2\right)+\alpha _1^4\right)},\\
            &|u(y_4)|^2 = |u(y_5)|^2 = \rho_1^2 \frac{\alpha _1^2 \left(a^2+\alpha _1^2-3 b^2\right)}{\left(a^2+b^2\right)^2+2 \alpha _1^2 (a-b) (a+b)+\alpha _1^4},
        \end{align*}
       we have \(|u(y_4)|^2 = |u(y_5)|^2 < |u(y_3)|^2 < \rho_1^2\).
        \item \(a^2 \leq 3b^2\): single hole soliton as shown in \cref{fig:SSD_N=2_h=1_hole(b)}. In this case, only one extreme value exists at \(y_3\) and \(|u(y_3)|^2 < \rho_1^2\).
    \end{enumerate}

    \begin{figure}[!ht]
        \centering
        \subfigure[]{\label{fig:SSD_N=2_h=1_hole(a)}
            \includegraphics[width=50mm]{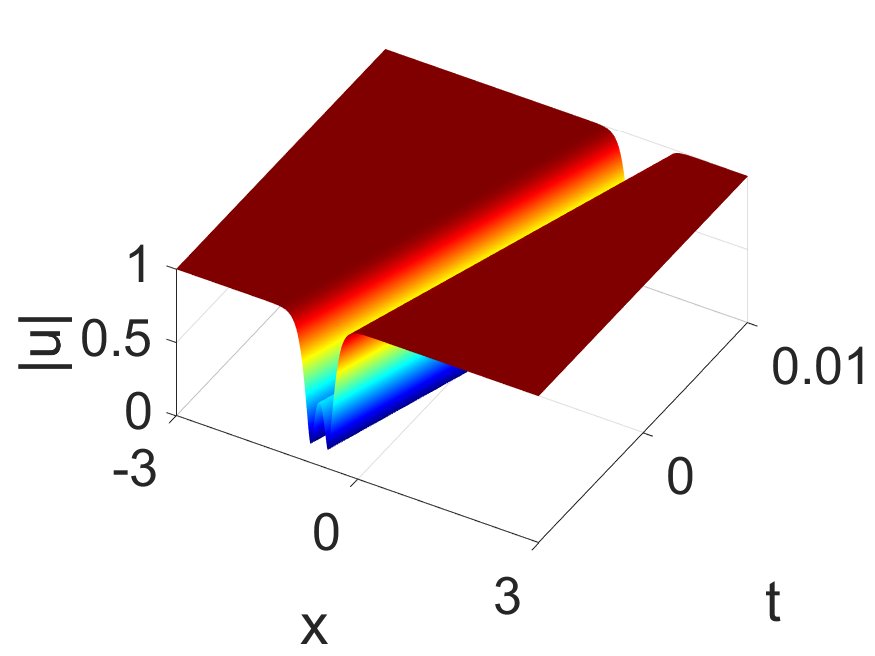}
        }
        \subfigure[]{\label{fig:SSD_N=2_h=1_hole(b)}
            \includegraphics[width=50mm]{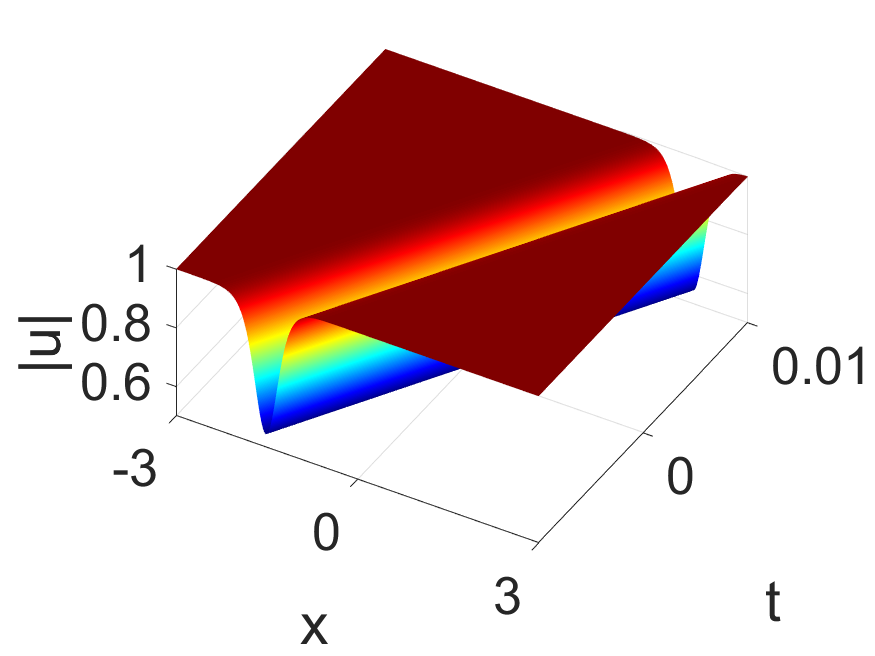}
        }
        \subfigure[]{
            \includegraphics[width=50mm]{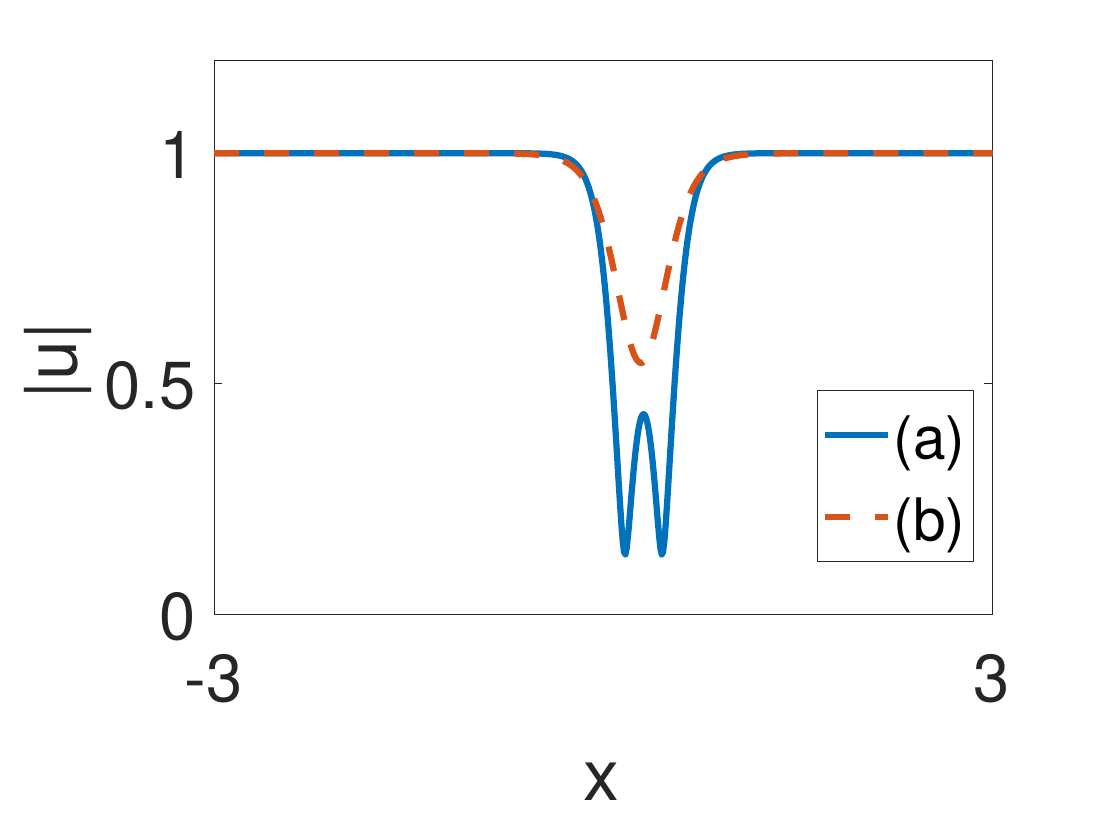}
        }
        \caption{Double-hole soliton and single-hole soliton solution to the SS equation with parameters (a) \(p_1= 6 + \sqrt{-15 + 4 \sqrt{21}} \i, d_1 = 1, \alpha = 1, \rho_1 = 1, \varepsilon = 20, \xi_{1,0} = 0\), (b) \(p_1= 4 + \sqrt{5 + 4 \sqrt{26}} \i, d_1 = 1, \alpha = 1, \rho_1 = 1, \varepsilon = 20, \xi_{1,0} = 0\), (c) comparison of two cases when \(t = 0\).}\label{fig:SSD_N=2_h=1_hole}
    \end{figure}
\end{itemize}

For \(N=3\), one of  the wave numbers \(p_2 = \sqrt{2\epsilon \rho^2 - \alpha^2}\) is fixed and determined by the plane wave background. This third order soliton involves a collision between such a fixed soliton with either ananti-Mexican-hat (\cref{figure:dark_N=3_h=0} or with a double-hole  soliton
(\cref{figure:dark_N=3_h=1}). As far as we know, such kind of soliton hasn't reported elsewhere.

\begin{figure}
    \centering
    \subfigure[]{\label{fig:SSD_N=3(a)}
        \includegraphics[width=50mm]{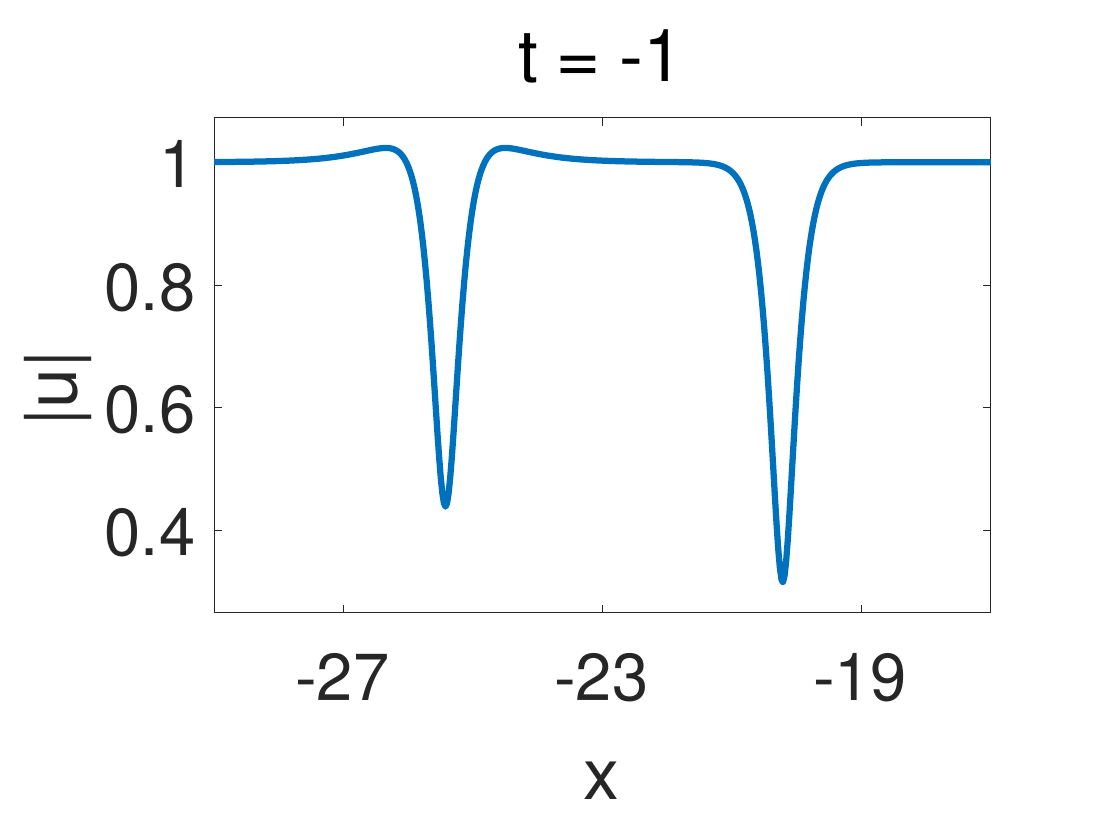}
    }
     \subfigure[]{\label{fig:SSD_N=3(b)}
        \includegraphics[width=50mm]{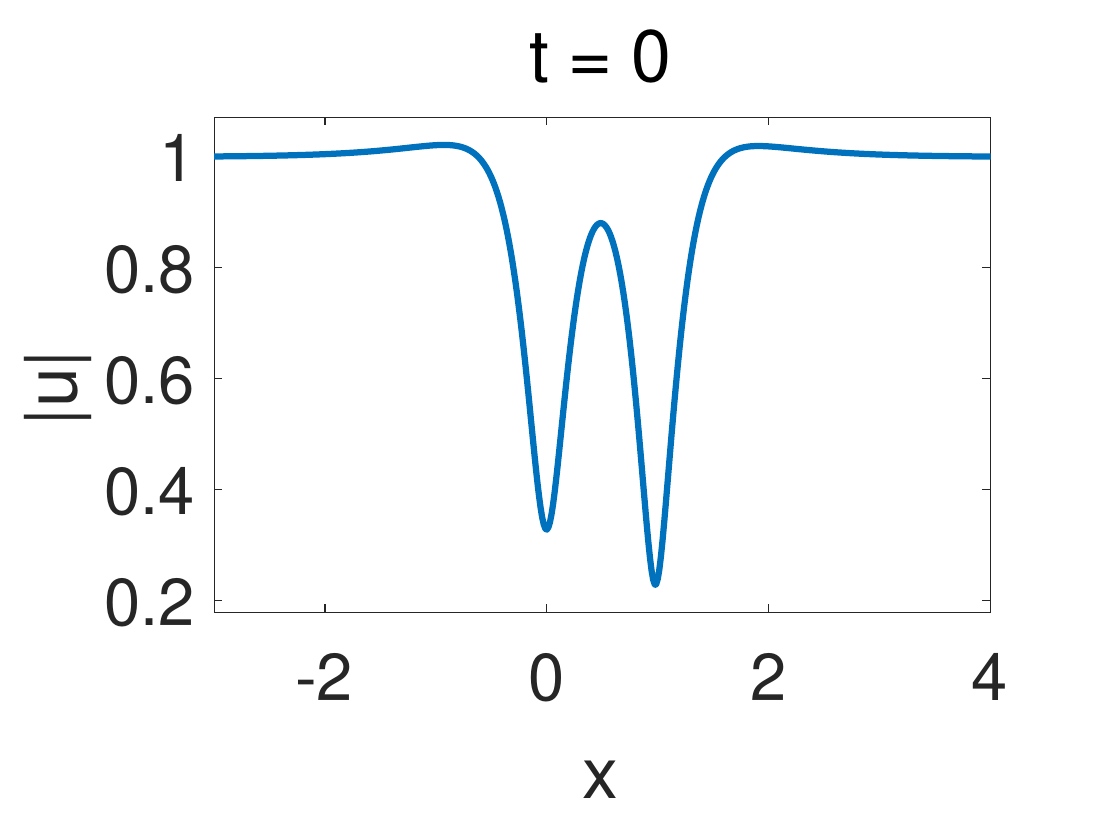}
    }
        \subfigure[]{\label{fig:SSD_N=3(c)}
        \includegraphics[width=50mm]{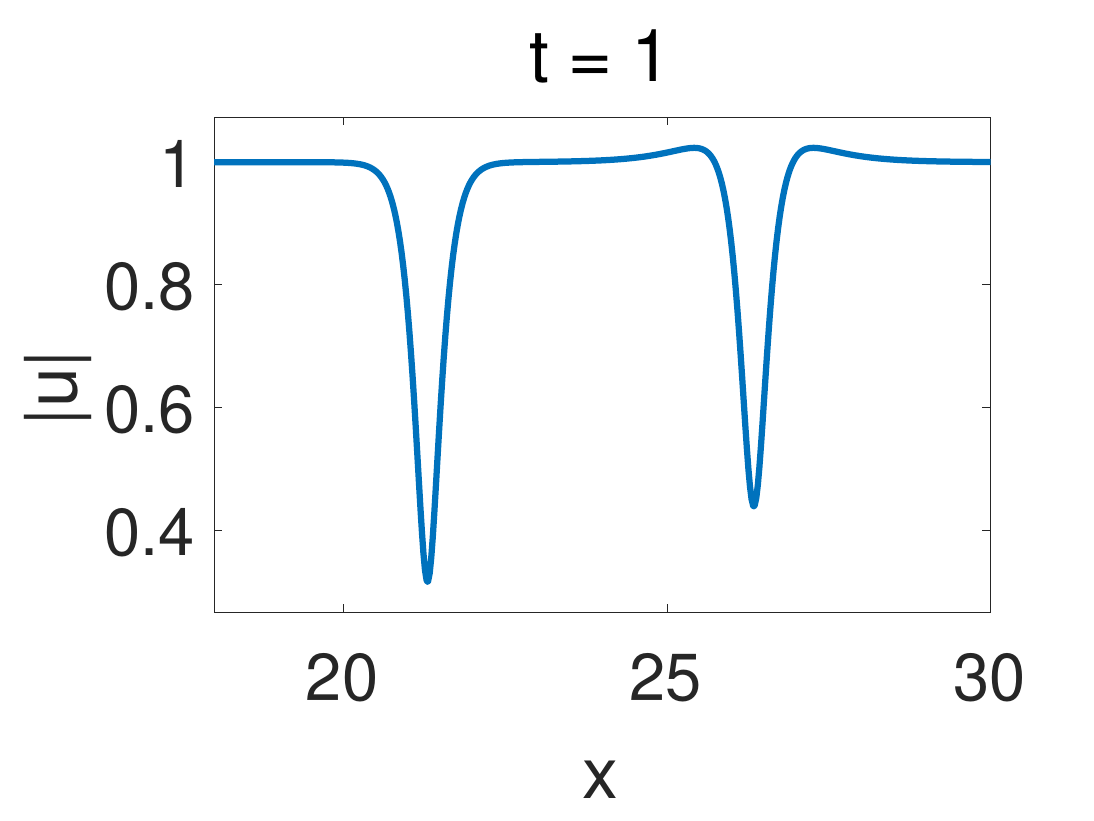}
    }
    \caption{Collision between anti-Mexican-hat and dark soliton while \(N=3, h = 1, p_1 = 2, p_2 = 3, p_3 = 2-\sqrt{5}, d_1 = d_2 = 1, \epsilon = 5\).}\label{figure:dark_N=3_h=0} 
\end{figure}

\begin{figure}
    \subfigure[]{\label{fig:SSD_N=3(d)}
        \includegraphics[width=50mm]{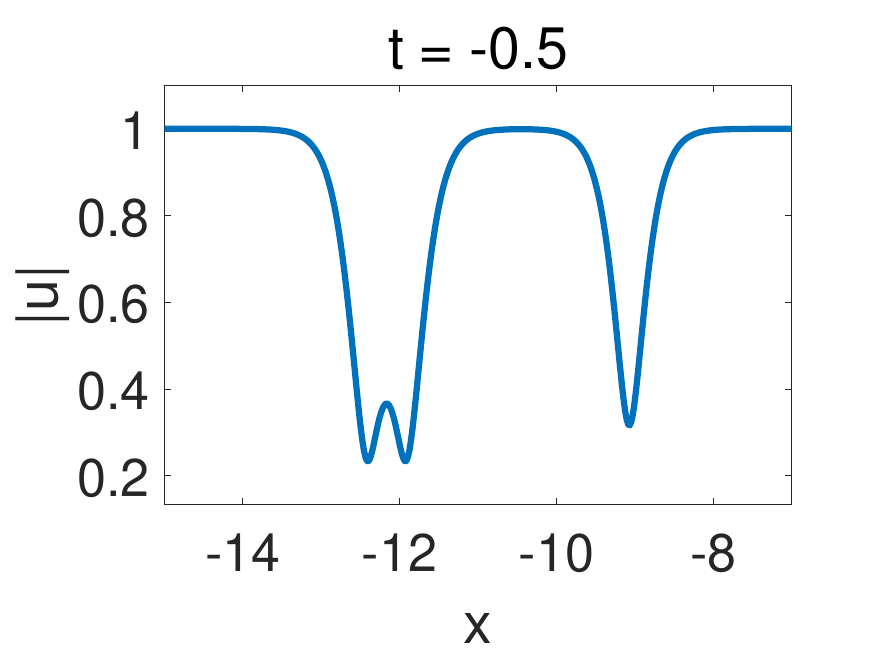}
    }
     \subfigure[]{\label{fig:SSD_N=3(e)}
        \includegraphics[width=50mm]{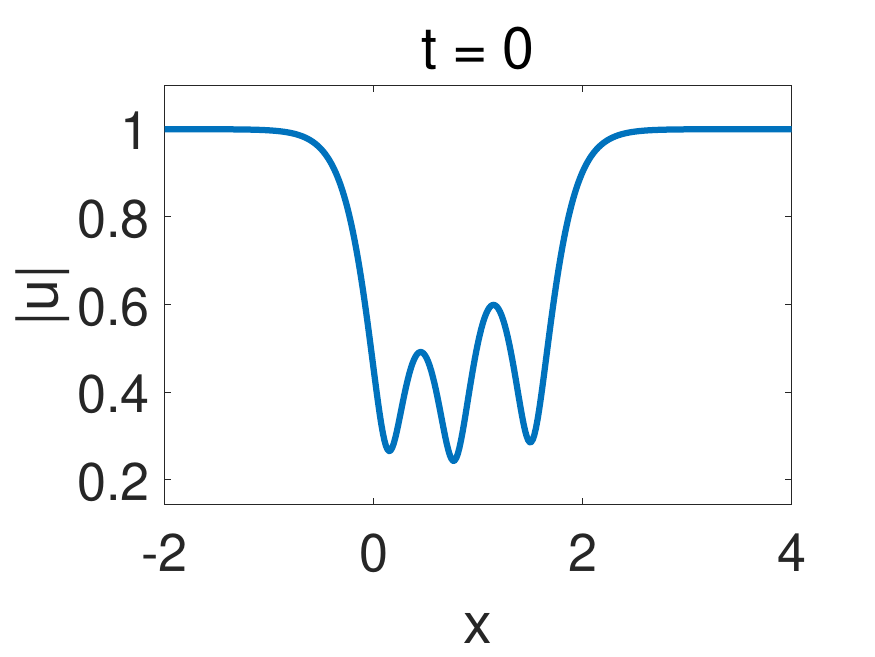}
    }
        \subfigure[]{\label{fig:SSD_N=3(f)}
        \includegraphics[width=50mm]{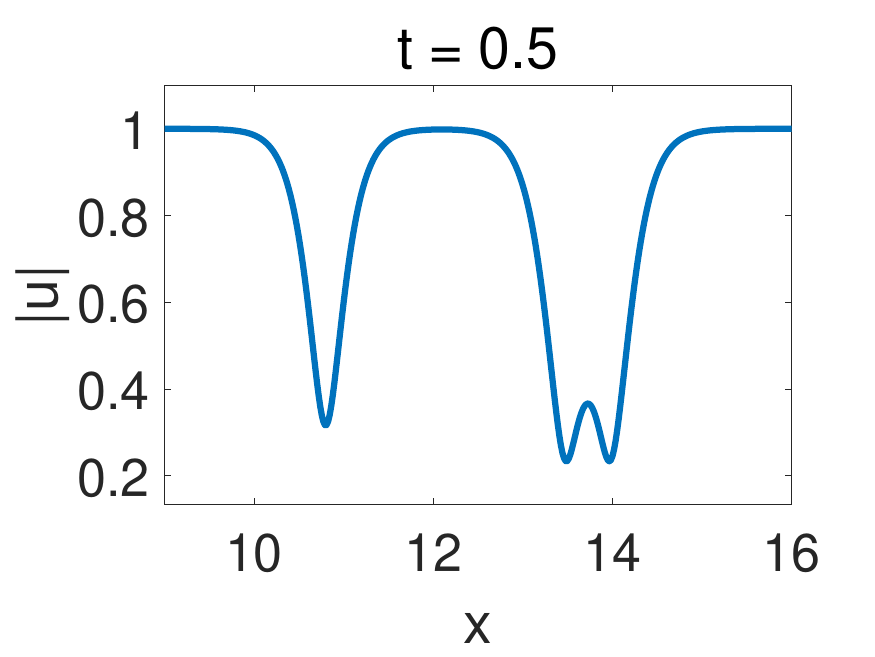}
    }
    \caption{Collision between double-hole and single-hole solitons while \(N =3, h = 1, p_1 = p_3^* = 23/8 + \i/8\sqrt{-145 + 16 \sqrt{191}}, p_2 = 3, d_1 = d_2 = 1, \epsilon = 5\).}\label{figure:dark_N=3_h=1} 
\end{figure}

\section{Conclusion}
In this paper, we derived bright-bright, dark-dark, and bright-dark soliton solutions for the cHirota equation using the KP reduction method. Notably, different bilinear equations and tau functions were introduced to derive the bright-dark soliton solutions. The dynamic behavior of these solutions is analyzed and illustrated for both the first- and second-order cases. For the cHirota equation, regular one- and two-soliton solutions are presented, and a breather solution is obtained as one special case of the second-order dark soliton.

We also derived bright and dark soliton solutions to the SS equation by using a link between the cHirota equation to the SS equation. For the SS equation, we found that the first-order bright soliton solution also satisfies the complex mKdV equation, while the first-order dark soliton solution does not. We observed various soliton solutions for the second-order bright and dark solitons. Specifically, the second-order bright soliton can be oscillating, single-hump or two-hump one. Meanwhile, the second-order dark soliton can be classified into dark (single-hole), anti-dark, Mexican hat, anti-Mexican hat, and double-hole ones. For both the bright or dark soliton, the type can undergo change due to the collision. 
\section*{Acknowledgements}
B.F. Feng's work was partially supported by the U.S. Department of Defense (DoD), Air Force for Scientific Research (AFOSR) under grant No. W911NF2010276.
\bibliography{cite}

\end{document}